\newtheorem{theorem}{Theorem}
\newtheorem{lemma}[theorem]{Lemma}
\begin{document}

\title{Supplemental Information for ``Quantum Fisher kernel for mitigating \\ the vanishing similarity issue"}
\author{Yudai Suzuki}
\affiliation{Department of Mechanical Engineering, Keio University, Hiyoshi 3‑14‑1, Kohoku, Yokohama 223‑8522, Japan}
\author{Hideaki Kawaguchi}
\affiliation{Quantum Computing Center, Keio University, Hiyoshi 3‑14‑1, Kohoku,Yokohama 223‑8522, Japan}
\author{Naoki Yamamoto}
\affiliation{Quantum Computing Center, Keio University, Hiyoshi 3‑14‑1, Kohoku,Yokohama 223‑8522, Japan}
\affiliation{Department of Applied Physics and Physico‑Informatics, Keio University, Hiyoshi 3‑14‑1, Kohoku, Yokohama 223‑ 8522, Japan}

\begin{abstract}
This Supplementary Information describes the proof of the Proposition and Theorem, and the setting of the numerical experiments in the main text. 
Also, we present the connection between the symmetric logarithmic derivative-based quantum Fisher kernel (SLDQFK) and the quantum neural tangent kernels.
\end{abstract}

\maketitle

\section{Preliminaries}
We exploit integrals over Haar random unitaries to analytically calculate the expectation and the variance of the quantum kernels (QKs), assuming the quantum circuits are $t$-designs.
Thus, we here present the Lemmas regarding the integrals used for the proof of the Proposition and the Theorem in the main text. 
We also describe the setup of the quantum circuits used in the proof.

\subsection{Formulas of integrals over Haar random unitaries}
For ease of analysis on the expectation and the variance of QKs, we assume that the quantum circuits are $t$-designs. 
The $t$-design is an ensemble of unitaries that have the same 
statistical properties as the unitary group with respect to the Haar measure up to the $t$-th moment \cite{bremner2016average,goldberg2017complexity,harrow2018approximate}. 
Specifically, when the unitary ensemble $\{p_i, W_i\}$ (i.e., a unitary operator $W_i$ is sampled with probability $p_i$) is a $t$-design, the following equality holds;

\begin{equation}
\label{eq:t_design_def}
    \sum_{i} p_i W_{i}^{\otimes t}\rho (W_{i}^{\dagger})^{\otimes t} =  \int d\mu(W) W^{\otimes t}\rho (W^{\dagger})^{\otimes t},
\end{equation}
where the right-hand side represents the integral over the unitary group with respect to Haar measure $d\mu(W)$.
Then the integrals over the ensemble of unitaries forming a $t$-design with $t\ge2$, $\{p_i, W_i\}$, holds

\begin{equation}
\label{eq:1d_int}
    \int d\mu(W) w_{i,j}w^{*}_{l,k} = \frac{\delta_{i,l}\delta_{j,k}}{d},
\end{equation}
\begin{equation}
\label{eq:2d_int}
\begin{split}
    \int d\mu(W) w_{i,j}w^{*}_{l,k}w_{i',j'}w^{*}_{l',k'} &=\frac{\delta_{i,l}\delta_{i',l'}\delta_{j,k}\delta_{j',k'}+\delta_{i,l'}\delta_{i',l}\delta_{j,k'}\delta_{j',k}}{d^2-1}\\
    & \qquad -\frac{\delta_{i,l}\delta_{i',l'}\delta_{j,k'}\delta_{j',k}+\delta_{i,l'}\delta_{i',l}\delta_{j,k}\delta_{j',k'}}{d\left(d^2-1\right)}.
\end{split}
\end{equation}
Here, $W$ is a unitary operator that acts on the $d$-dimensional Hilbert space $\mathcal{H}_{w}$.
In addition, $\delta_{i,j}$ represents the Kronecker delta.

In the main text, we consider two different types of quantum circuits for the calculation of the expectation and the variance.
Specifically, we use the random quantum circuit acting on all qubits and the alternating layered ansatz (ALA). 
Thus, for the sake of clarity, we show the five Lemmas derived and shown in Supplementary Information of Ref. \cite{cerezo2021cost} below. 
In these Lemmas, a unitary operator $W$ acting on the Hilbert space $\mathcal{H}_w$ and $W'$ acting on the bipartite system $\mathcal{H}_{w_1}\otimes\mathcal{H}_{w_2}$ can be written as follows.

\begin{equation}
    \label{eq:expr_mat}
    W = \sum_{i,j} w_{i,j}\ket{i}\bra{j}, \quad W'=\sum_{i,j,i',j'} w'_{ij,i'j'} \ket{ii'}\bra{jj'}.
\end{equation}

\begin{lemma}
\label{lem1}
Let the ensemble of unitaries $\{p_i, W_i\}$ acting on the $d$-dimensional Hilbert space $\mathcal{H}_w$ be a $t$-design with $t\ge1$.
Then, for arbitrary operators $A,B: \mathcal{H}_w\to \mathcal{H}_w$, we have
\begin{equation}
\label{eq:lem1}
    \sum_i p_i \mathrm{Tr}\left[W_i A W_i^{\dagger} B\right] = \int d\mu(W) \mathrm{Tr}\left[WAW^{\dagger}B\right] = \frac{\mathrm{Tr}\left[A\right]\mathrm{Tr}\left[B\right]}{d}.
\end{equation}
\end{lemma}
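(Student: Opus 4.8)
The plan is to prove the two equalities in \eqref{eq:lem1} one at a time. The first equality is essentially a restatement of the $t$-design property. Since $\mathrm{Tr}\!\left[WAW^{\dagger}B\right]$ is linear in the entries of $W$ and linear in the entries of $W^{\dagger}$, it is a first-moment quantity, so the $t$-design assumption with $t\ge 1$ guarantees that the ensemble average reproduces the Haar average. Concretely, I would apply the defining relation \eqref{eq:t_design_def} with $t=1$ to the operator $A$ (invoking linearity to extend it from states to an arbitrary operator), obtaining $\sum_i p_i W_i A W_i^{\dagger} = \int d\mu(W)\, W A W^{\dagger}$, and then take the trace against $B$ on both sides.

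For the second equality I would pass to matrix elements. Writing $W$ as in \eqref{eq:expr_mat} and setting $A_{jk}=\bra{j}A\ket{k}$, $B_{li}=\bra{l}B\ket{i}$, expanding the trace gives
\begin{equation}
\int d\mu(W)\,\mathrm{Tr}\!\left[WAW^{\dagger}B\right] = \sum_{i,j,k,l} A_{jk} B_{li} \int d\mu(W)\, w_{i,j} w^{*}_{l,k}.
\end{equation}
The decisive step is to insert the single-integral formula \eqref{eq:1d_int}, $\int d\mu(W)\,w_{i,j}w^{*}_{l,k}=\delta_{i,l}\delta_{j,k}/d$, which forces $i=l$ and $j=k$ and collapses the fourfold sum into $\tfrac{1}{d}\sum_{i,j}A_{jj}B_{ii}=\mathrm{Tr}[A]\,\mathrm{Tr}[B]/d$, as claimed.

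I do not expect a substantive obstacle, as both parts reduce to the already-available Haar moment formula and elementary index manipulation. The only points demanding care are bookkeeping ones: keeping the four indices consistent between the bra-ket expansion and the convention of \eqref{eq:1d_int}, so that the Kronecker deltas correctly single out the diagonal entries whose sums factor into $\mathrm{Tr}[A]$ and $\mathrm{Tr}[B]$, and explicitly noting that \eqref{eq:t_design_def} is stated for a state $\rho$ but extends to arbitrary $A$ by linearity.
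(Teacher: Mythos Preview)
Your proposal is correct and follows essentially the same approach as the paper: both expand $\mathrm{Tr}[WAW^{\dagger}B]$ into matrix elements, apply the first-moment Haar integral \eqref{eq:1d_int} to produce the Kronecker deltas, and collapse the sum to $\mathrm{Tr}[A]\mathrm{Tr}[B]/d$. Your treatment of the first equality (justifying the extension of \eqref{eq:t_design_def} from states to arbitrary operators by linearity) is actually slightly more careful than the paper's, which simply asserts it.
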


\begin{proof}
Since the unitary ensemble is a $t$-design, we have
\begin{equation}
\label{eq:lem1_proof}
\begin{split}
    \int d\mu(W) \mathrm{Tr}\left[WAW^{\dagger}B\right] &=  \int d\mu(W) \sum_{i,j,k,l} w_{i,j}a_{j,k}w_{l,k}^{*}b_{l,i} \\
    &= \frac{1}{d} \sum_{i,j} a_{j,j} b_{i,i} \\
    &= \frac{\mathrm{Tr}\left[A\right]\mathrm{Tr}\left[B\right]}{d},\\
\end{split}
\end{equation}
where we use Eq.~\eqref{eq:1d_int}.
\end{proof}

\begin{lemma}
\label{lem2}
Let the ensemble of unitaries $\{p_i, W_i\}$ acting on the $d$-dimensional Hilbert space $\mathcal{H}_w$ be a $t$-design with $t\ge2$.
Then, for arbitrary operators $A,B,C,D: \mathcal{H}_w\to \mathcal{H}_w$, we have

\begin{equation}
\label{eq:lem2}
\begin{split}
    \sum_i p_i \mathrm{Tr}\left[W_i A W_i^{\dagger} B W_i C W_i^{\dagger} D\right] &= \int d\mu(W) \mathrm{Tr}\left[WAW^{\dagger}BWCW^{\dagger}D\right]\\
    &= \frac{1}{d^2-1}\left(\mathrm{Tr}\left[A\right]\mathrm{Tr}\left[C\right]\mathrm{Tr}\left[BD\right]+\mathrm{Tr}\left[AC\right]\mathrm{Tr}\left[B\right]\mathrm{Tr}\left[D\right]\right)\\
    & \qquad -\frac{1}{d\left(d^2-1\right)}\left(\mathrm{Tr}\left[A\right]\mathrm{Tr}\left[B\right]\mathrm{Tr}\left[C\right]\mathrm{Tr}\left[D\right]+\mathrm{Tr}\left[AC\right]\mathrm{Tr}\left[BD\right]\right).
\end{split}
\end{equation}
\end{lemma}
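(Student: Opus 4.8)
The plan is to follow the same strategy as the proof of Lemma~\ref{lem1}, only now invoking the second-moment (fourth-order in $W$) formula \eqref{eq:2d_int} in place of \eqref{eq:1d_int}. First I would replace the ensemble average by the Haar integral using the $t$-design property \eqref{eq:t_design_def} with $t=2$, which is legitimate since $t\ge2$. I would then write the trace in components by inserting resolutions of the identity between every factor, so that $\mathrm{Tr}\left[WAW^{\dagger}BWCW^{\dagger}D\right]=\sum w_{i,j}a_{j,k}w^{*}_{l,k}b_{l,m}w_{m,n}c_{n,o}w^{*}_{p,o}d_{p,i}$, where the sum runs over all eight indices and $a_{j,k}=\braket{j|A|k}$ and so on. The crucial point at this stage is the bookkeeping of the Hermitian conjugates: each $W^{\dagger}$ contributes a transposed, complex-conjugated entry (so that $\braket{k|W^{\dagger}|l}=w^{*}_{l,k}$), exactly as in Lemma~\ref{lem1}.

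Next I would isolate the four-fold product $w_{i,j}w^{*}_{l,k}w_{m,n}w^{*}_{p,o}$ and match it term by term to the left-hand side of \eqref{eq:2d_int}. Reading off the dictionary $(i',j')\mapsto(m,n)$ and $(l',k')\mapsto(p,o)$ while leaving $(i,j)$ and $(l,k)$ unchanged, the integral produces the four sets of Kronecker deltas on the right-hand side of \eqref{eq:2d_int}. Substituting these back and carrying out the remaining free sums collapses each monomial into a product of traces: each $\delta$ that equates the two indices of a single operator yields a single-operator trace, while a pair of $\delta$'s that links the two indices of $B$ with those of $D$ (or of $A$ with those of $C$) yields $\mathrm{Tr}\left[BD\right]$ (respectively $\mathrm{Tr}\left[AC\right]$). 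The two terms weighted by $1/(d^{2}-1)$ give $\mathrm{Tr}\left[A\right]\mathrm{Tr}\left[C\right]\mathrm{Tr}\left[BD\right]$ and $\mathrm{Tr}\left[AC\right]\mathrm{Tr}\left[B\right]\mathrm{Tr}\left[D\right]$, and the two terms weighted by $-1/(d(d^{2}-1))$ give $\mathrm{Tr}\left[AC\right]\mathrm{Tr}\left[BD\right]$ and $\mathrm{Tr}\left[A\right]\mathrm{Tr}\left[B\right]\mathrm{Tr}\left[C\right]\mathrm{Tr}\left[D\right]$, which assembles into the claimed identity.

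The only real obstacle is the index matching itself, and especially keeping the conjugated indices in the right slots: because the two $W^{\dagger}$ factors swap their row and column labels, it is easy to assign $(l,k)$ or $(p,o)$ in the wrong order and thereby permute the resulting traces (e.g.\ obtain $\mathrm{Tr}\left[BD\right]$ where $\mathrm{Tr}\left[DB\right]$ is meant, or mismatch which operators pair up). Once the dictionary between the physical indices $\{i,j,k,l,m,n,o,p\}$ and the formula indices $\{i,j,l,k,i',j',l',k'\}$ is fixed consistently, the collapse of the eight-fold sum into traces is entirely mechanical, and the cyclicity of the trace confirms that the pairings $BD$ and $AC$ are exactly those dictated by the operator ordering in the original expression.
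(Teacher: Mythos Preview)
Your proposal is correct and follows essentially the same approach as the paper: expand the trace in matrix components, apply the second-moment Haar formula \eqref{eq:2d_int}, and collapse the resulting Kronecker deltas into traces. The only cosmetic difference is that the paper reuses the primed/unprimed labels $i,j,k,l,i',j',k',l'$ from \eqref{eq:2d_int} directly in the expansion, whereas you introduce fresh indices $i,j,k,l,m,n,o,p$ and then specify the dictionary; either way the contraction is identical.
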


\begin{proof}
Since the unitary ensemble is a $t$-design, we have
\begin{equation}
\label{eq:lem2_proof}
\begin{split}
    \int d\mu(W) \mathrm{Tr}\left[WAW^{\dagger}BWCW^{\dagger}D\right] &=  \int d\mu(W) \sum_{i,j,k,l,i',j',k',l'} w_{i,j}a_{j,k}w_{l,k}^{*}b_{l,i'}w_{i',j'}c_{j',k'}w_{l',k'}^{*}d_{l',i} \\
    &= \frac{1}{d^2-1}\sum_{i,j,k,l}\left(a_{j,j}b_{l,i}c_{k,k}d_{i,l}+a_{j,k}b_{l,l}c_{k,j}d_{i,i}\right)\\
    & \qquad -\frac{1}{d\left(d^2-1\right)}\sum_{i,j,k,l}\left(a_{j,j}b_{l,l}c_{k,k}d_{i,i}+a_{j,k}b_{i,l}c_{k,j}d_{l,i}\right).\\
    &= \frac{1}{d^2-1}\left(\mathrm{Tr}\left[A\right]\mathrm{Tr}\left[C\right]\mathrm{Tr}\left[BD\right]+\mathrm{Tr}\left[AC\right]\mathrm{Tr}\left[B\right]\mathrm{Tr}\left[D\right]\right)\\
    & \qquad -\frac{1}{d\left(d^2-1\right)}\left(\mathrm{Tr}\left[A\right]\mathrm{Tr}\left[B\right]\mathrm{Tr}\left[C\right]\mathrm{Tr}\left[D\right]+\mathrm{Tr}\left[AC\right]\mathrm{Tr}\left[BD\right]\right),\\
\end{split} 
\end{equation}
where we use Eq.~\eqref{eq:2d_int}.
\end{proof}

\begin{lemma}
\label{lem3}
Let the ensemble of unitaries $\{p_i, W_i\}$ on the $d$-dimensional Hilbert space $\mathcal{H}_w$ be a $t$-design with $t\ge2$.
Then, for arbitrary operators $A,B,C,D: \mathcal{H}_w\to \mathcal{H}_w$, we have
\begin{equation}
\label{eq:lem3}
\begin{split}
    \sum_i p_i \mathrm{Tr}\left[W_i A W_i^{\dagger} B\right]\mathrm{Tr}\left[ W_i C W_i^{\dagger} D\right] &= \int d\mu(W) \mathrm{Tr}\left[WAW^{\dagger}B\right]\mathrm{Tr}\left[WCW^{\dagger}D\right]\\
    &= \frac{1}{d^2-1}\left(\mathrm{Tr}\left[A\right]\mathrm{Tr}\left[B\right]\mathrm{Tr}\left[C\right]\mathrm{Tr}\left[D\right]+\mathrm{Tr}\left[AC\right]\mathrm{Tr}\left[BD\right]\right)\\
    & \qquad -\frac{1}{d\left(d^2-1\right)}\left(\mathrm{Tr}\left[A\right]\mathrm{Tr}\left[C\right]\mathrm{Tr}\left[BD\right]+\mathrm{Tr}\left[AC\right]\mathrm{Tr}\left[B\right]\mathrm{Tr}\left[D\right]\right).
\end{split}
\end{equation}
\end{lemma}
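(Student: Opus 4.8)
The plan is to follow the very same strategy used in the proofs of Lemmas~\ref{lem1} and~\ref{lem2}. The first equality is immediate from the defining property of a $t$-design, Eq.~\eqref{eq:t_design_def}: the two traces together involve $W$ and $W^{\dagger}$ a total of four times (two $W$'s and two $W^{\dagger}$'s), so the second moment suffices and the assumption $t\ge2$ lets me replace the ensemble average $\sum_i p_i(\cdots)$ by the Haar integral $\int d\mu(W)(\cdots)$.

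For the second equality, I would first expand each trace in matrix elements using Eq.~\eqref{eq:expr_mat}, exactly as in the proof of Lemma~\ref{lem1}, writing
\begin{equation}
\mathrm{Tr}\left[WAW^{\dagger}B\right]\mathrm{Tr}\left[WCW^{\dagger}D\right]=\sum w_{i,j}a_{j,k}w^{*}_{l,k}b_{l,i}\,w_{i',j'}c_{j',k'}w^{*}_{l',k'}d_{l',i'},
\end{equation}
where the sum runs over all eight indices. The crucial structural point — and the only substantive difference from Lemma~\ref{lem2} — is that here the two index loops are \emph{independent}: in the first trace $b$ contracts the indices $l$ and $i$, while in the second trace $d$ contracts $l'$ and $i'$, so $A,B$ and $C,D$ close into two separate cycles rather than the single four-operator cycle of Lemma~\ref{lem2}. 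The accompanying product of unitary matrix elements is precisely $w_{i,j}w^{*}_{l,k}w_{i',j'}w^{*}_{l',k'}$, i.e.\ the pattern evaluated by the second-moment formula Eq.~\eqref{eq:2d_int}.

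I would then substitute Eq.~\eqref{eq:2d_int} and contract the four resulting products of Kronecker deltas against $a_{j,k}b_{l,i}c_{j',k'}d_{l',i'}$. The two terms carrying the coefficient $1/(d^2-1)$ enforce the identifications $(i{=}l,i'{=}l',j{=}k,j'{=}k')$ and $(i{=}l',i'{=}l,j{=}k',j'{=}k)$: the first collapses each trace onto its own diagonal and yields $\mathrm{Tr}[A]\mathrm{Tr}[B]\mathrm{Tr}[C]\mathrm{Tr}[D]$, while the second cross-links the two loops and yields $\mathrm{Tr}[AC]\mathrm{Tr}[BD]$. The two terms with coefficient $-1/(d(d^2-1))$ produce, by the same bookkeeping, the mixed combinations $\mathrm{Tr}[A]\mathrm{Tr}[C]\mathrm{Tr}[BD]$ and $\mathrm{Tr}[AC]\mathrm{Tr}[B]\mathrm{Tr}[D]$.

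The main obstacle is purely combinatorial: tracking which delta pattern sends which index of $A,B,C,D$ onto which. It is worth noting that, because of the different loop topology, the roles of the two bracketed groups are exactly swapped relative to Lemma~\ref{lem2} — the pair $\{\mathrm{Tr}[A]\mathrm{Tr}[B]\mathrm{Tr}[C]\mathrm{Tr}[D],\mathrm{Tr}[AC]\mathrm{Tr}[BD]\}$ now carries weight $1/(d^2-1)$ rather than $-1/(d(d^2-1))$, and vice versa. Verifying that this swap emerges correctly is the one place where a sign or index slip is most likely, so I would confirm it with a sanity check such as $A=B=C=D=I$, for which both sides must reduce to $\mathrm{Tr}[I]\mathrm{Tr}[I]=d^2$.
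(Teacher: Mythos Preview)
Your proposal is correct and follows essentially the same approach as the paper: expand both traces in matrix elements, apply the second-moment formula Eq.~\eqref{eq:2d_int}, and contract the resulting Kronecker deltas against $a_{j,k}b_{l,i}c_{j',k'}d_{l',i'}$. Your observation about the swapped role of the two bracketed groups relative to Lemma~\ref{lem2} is exactly the point, and your sanity check with $A=B=C=D=I$ is a nice addition not present in the paper.
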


\begin{proof}
As is shown in the proof of Lemma \ref{lem2}, we have
\begin{equation}
\label{eq:lem3_proof}
\begin{split}
    \int d\mu(W) \mathrm{Tr}\left[WAW^{\dagger}B\right]\mathrm{Tr}\left[WCW^{\dagger}D\right] &=  \int d\mu(W) \sum_{i,j,k,l,i',j',k',l'} w_{i,j}a_{j,k}w_{l,k}^{*}b_{l,i}w_{i',j'}c_{j',k'}w_{l',k'}^{*}d_{l',i'} \\
    &= \frac{1}{d^2-1}\sum_{i,j,k,l}\left(a_{j,j}b_{l,l}c_{k,k}d_{i,i}+a_{j,k}b_{i,l}c_{k,j}d_{l,i}\right)\\
    & \qquad -\frac{1}{d\left(d^2-1\right)}\sum_{i,j,k,l}\left(a_{j,j}b_{l,i}c_{k,k}d_{i,l}+a_{j,k}b_{l,l}c_{k,j}d_{i,i}\right).\\
    &= \frac{1}{d^2-1}\left(\mathrm{Tr}\left[A\right]\mathrm{Tr}\left[B\right]\mathrm{Tr}\left[C\right]\mathrm{Tr}\left[D\right]+\mathrm{Tr}\left[AC\right]\mathrm{Tr}\left[BD\right]\right)\\
    & \qquad -\frac{1}{d\left(d^2-1\right)}\left(\mathrm{Tr}\left[A\right]\mathrm{Tr}\left[C\right]\mathrm{Tr}\left[BD\right]+\mathrm{Tr}\left[AC\right]\mathrm{Tr}\left[B\right]\mathrm{Tr}\left[D\right]\right).\\
\end{split} 
\end{equation}
where we use Eq.~\eqref{eq:2d_int} in the second equality.
\end{proof}

\begin{lemma}
\label{lem4}
Let the ensemble of unitaries $\{p_i, W_i\}$ on the $d_{w}$-dimensional Hilbert space $\mathcal{H}_w$ be a $t$-design with $t\ge2$.
Also, let $\mathcal{H}=\mathcal{H}_{\bar{w}}\otimes \mathcal{H}_{w}$ be $d_{w}d_{\bar{w}}$-dimensional. 
Then, for arbitrary operators $A,B: \mathcal{H}\to \mathcal{H}$, we have
\begin{equation}
\label{eq:lem4_1}
    \sum_i p_i (\mathbb{I}_{\bar{w}}\otimes W_i) A (\mathbb{I}_{\bar{w}}\otimes W_i^{\dagger}) B = \int d\mu(W) (\mathbb{I}_{\bar{w}}\otimes W) A (\mathbb{I}_{\bar{w}}\otimes W^{\dagger}) B = \frac{\mathrm{Tr}_{w}\left[A\right]\otimes\mathbb{I}_{w}}{d_{w}}B,
\end{equation}
and
\begin{equation}
\label{eq:lem4_2}
    \sum_i p_i \mathrm{Tr}\left[(\mathbb{I}_{\bar{w}}\otimes W_i) A (\mathbb{I}_{\bar{w}}\otimes W_i^{\dagger}) B\right] = \int d\mu(W) \mathrm{Tr}\left[(\mathbb{I}_{\bar{w}}\otimes W) A (\mathbb{I}_{\bar{w}}\otimes W_i^{\dagger}) B\right] = \frac{1}{d_{w}}\mathrm{Tr}\left[\mathrm{Tr}_{w}\left[A\right]\mathrm{Tr}_{w}\left[B\right]\right].
\end{equation}
Here, $\mathbb{I}_{w}$($\mathbb{I}_{\bar{w}}$) represents the identity matrix acting on the Hilbert space $\mathcal{H}_{w}$($\mathcal{H}_{\bar{w}}$) and the partial trace over $\mathcal{H}_{w}$($\mathcal{H}_{\bar{w}}$) is denoted as $\mathrm{Tr}_{w}$($\mathrm{Tr}_{\bar{w}}$). 
Also $\bar{A}$ denotes the complement of $A$. 
\end{lemma}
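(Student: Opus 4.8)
The plan is to reduce both identities to the first-moment formula Eq.~\eqref{eq:1d_int} by working entirely in matrix elements with respect to a product basis $\{\ket{a}\otimes\ket{i}\}$ of $\mathcal{H}=\mathcal{H}_{\bar w}\otimes\mathcal{H}_w$, where $a$ labels the $\bar w$-factor and $i$ the $w$-factor. First I would expand $A$ and $B$ in this basis, e.g.\ $A=\sum A_{(a,i),(b,j)}\ket{ai}\bra{bj}$, and record that $(\mathbb{I}_{\bar w}\otimes W)$ has matrix elements $\delta_{a,a'}w_{i,i'}$, so it acts as the identity on the $\bar w$-index and as $W$ on the $w$-index only. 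The $t$-design hypothesis (with $t\ge2$, hence in particular $t\ge1$) lets me replace the ensemble average by the Haar integral at the outset, exactly as in Eq.~\eqref{eq:t_design_def}, so the whole computation is carried out under $\int d\mu(W)$.

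With this convention, the matrix element of $(\mathbb{I}_{\bar w}\otimes W)A(\mathbb{I}_{\bar w}\otimes W^\dagger)$ between $(a,i)$ and $(b,j)$ is $\sum_{i',j'}w_{i,i'}A_{(a,i'),(b,j')}w^*_{j,j'}$, with the $\bar w$-labels $a,b$ carried along untouched. Integrating over the Haar measure and applying Eq.~\eqref{eq:1d_int} with $d=d_w$ replaces $\int d\mu(W)\,w_{i,i'}w^*_{j,j'}$ by $\delta_{i,j}\delta_{i',j'}/d_w$. The factor $\delta_{i',j'}$ collapses the $w$-sum into a partial trace, $\sum_{i'}A_{(a,i'),(b,i')}=(\mathrm{Tr}_w[A])_{a,b}$, while $\delta_{i,j}$ leaves behind $\mathbb{I}_w$ on the $w$-factor. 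This yields $\int d\mu(W)\,(\mathbb{I}_{\bar w}\otimes W)A(\mathbb{I}_{\bar w}\otimes W^\dagger)=\frac{1}{d_w}\mathrm{Tr}_w[A]\otimes\mathbb{I}_w$, and right-multiplying by $B$ gives Eq.~\eqref{eq:lem4_1}.

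For Eq.~\eqref{eq:lem4_2} I would simply take the trace of the operator identity just established, so that the left-hand side becomes $\frac{1}{d_w}\mathrm{Tr}[(\mathrm{Tr}_w[A]\otimes\mathbb{I}_w)B]$. The remaining task is to show this equals $\frac{1}{d_w}\mathrm{Tr}[\mathrm{Tr}_w[A]\mathrm{Tr}_w[B]]$: written out in components, the insertion of $\mathbb{I}_w$ on the $w$-factor forces the $w$-indices of $B$ to be contracted among themselves, producing $\mathrm{Tr}_w[B]$, after which the surviving $\bar w$-indices pair $\mathrm{Tr}_w[A]$ with $\mathrm{Tr}_w[B]$ into a single trace over $\mathcal{H}_{\bar w}$. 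I expect the only delicate point to be the index bookkeeping in the tensor-product basis---specifically, keeping the $\bar w$- and $w$-labels cleanly separated so that the two Kronecker deltas coming from Eq.~\eqref{eq:1d_int} are correctly read as \emph{collapse the summed index into a partial trace} and \emph{leave behind} $\mathbb{I}_w$, respectively; once that is in place, both identities follow by direct substitution.
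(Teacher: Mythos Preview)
Your proposal is correct and follows essentially the same route as the paper: expand in a product basis, apply the first-moment formula Eq.~\eqref{eq:1d_int} to the $w$-indices so that one Kronecker delta yields $\mathrm{Tr}_w[A]$ and the other leaves $\mathbb{I}_w$, then right-multiply by $B$ and take the trace for the second identity. The only cosmetic difference is that the paper keeps $\mathbb{I}_{\bar w}$ as an operator and expands only $W=\sum_{i,j}w_{i,j}\ket{i}\bra{j}$, whereas you track the $\bar w$-labels explicitly; the computations are otherwise identical.
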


\begin{proof}
First, following Eq.~\eqref{eq:1d_int}, we have
\begin{equation}
\label{eq:lem4_1_proof}
\begin{split}
    \int d\mu(W) (\mathbb{I}_{\bar{w}}\otimes W) A (\mathbb{I}_{\bar{w}}\otimes W^{\dagger}) B 
    &=\int d\mu(W) \sum_{i,j,k,l} w_{i,j}w^{*}_{l,k} (\mathbb{I}_{\bar{w}}\otimes\ket{i}\bra{j}) A (\mathbb{I}_{\bar{w}}\otimes\ket{k}\bra{l}) B \\
    &= \frac{1}{d_{w}} \sum_{i,j} (\mathbb{I}_{\bar{w}}\otimes\ket{i}\bra{j}) A (\mathbb{I}_{\bar{w}}\otimes\ket{j}\bra{i}) B\\
    &= \frac{\mathrm{Tr}_{w}\left[A\right]\otimes\mathbb{I}_{w}}{d_{w}}B. \\
\end{split} 
\end{equation}
Moreover, according to Eq.\eqref{eq:lem4_1_proof},
\begin{equation}
\label{eq:lem4_2_proof}
\begin{split}
    \int d\mu(W) \mathrm{Tr}\left[(\mathbb{I}_{\bar{w}}\otimes W) A (\mathbb{I}_{\bar{w}}\otimes W_i^{\dagger}) B\right] &= \frac{1}{d_{w}} \mathrm{Tr}\left[\left(\mathrm{Tr}_{w}\left[A\right]\otimes\mathbb{I}_{w}\right)B\right]\\
    &= \frac{1}{d_{w}}\mathrm{Tr}\left[\mathrm{Tr}_{w}\left[A\right]\mathrm{Tr}_{w}\left[B\right]\right].\\
\end{split}
\end{equation}
\end{proof}

\begin{lemma}
\label{lem5}
Let W be a unitary operator acting on the $d_{w}$-dimensional Hilbert space $\mathcal{H}_w$.
Also, let $\mathcal{H}=\mathcal{H}_{\bar{w}}\otimes \mathcal{H}_{w}$ be $d_{w}d_{\bar{w}}$-dimensional with $d_{w}=2^m$ and $d_{\bar{w}}=2^{n-m}$. 
Then, for arbitrary operators $A,B: \mathcal{H}\to \mathcal{H}$, we have
\begin{equation}
\label{eq:lem5}
    \mathrm{Tr}\left[(\mathbb{I}_{\bar{w}}\otimes W) A (\mathbb{I}_{\bar{w}}\otimes W_i^{\dagger}) B\right] = \sum_{\bm{p,q}} \mathrm{Tr} \left[WA_{\bm{qp}},W^{\dagger}B_{\bm{pq}}\right],
\end{equation}
where 
\begin{equation}
    A_{\bm{qp}} = \mathrm{Tr}_{\bar{w}}\left[\left(\ket{\bm{p}}\bra{\bm{q}}\otimes\mathbb{I}_{w}\right)A\right],\quad B_{\bm{pq}} = \mathrm{Tr}_{\bar{w}}\left[\left(\ket{\bm{q}}\bra{\bm{p}}\otimes\mathbb{I}_{w}\right)B\right].
\end{equation}
Here $\bm{q}$ and $\bm{p}$ represent bit-strings of length $n-m$.
\end{lemma}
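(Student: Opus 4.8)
The plan is to reduce the full trace over $\mathcal{H}=\mathcal{H}_{\bar w}\otimes\mathcal{H}_w$ to a trace over $\mathcal{H}_w$ alone by resolving the identity on $\mathcal{H}_{\bar w}$ in the computational basis $\{\ket{\bm p}\}$, where $\bm p$ runs over the $d_{\bar w}=2^{n-m}$ bit-strings of length $n-m$. Writing the total trace as $\mathrm{Tr}[X]=\sum_{\bm p}\mathrm{Tr}_w\!\left[\bra{\bm p}X\ket{\bm p}\right]$, the left-hand side of Eq.~\eqref{eq:lem5} becomes $\sum_{\bm p}\mathrm{Tr}_w\!\left[\bra{\bm p}(\mathbb{I}_{\bar w}\otimes W)A(\mathbb{I}_{\bar w}\otimes W^{\dagger})B\ket{\bm p}\right]$. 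Here each $\bra{\bm p}\,\cdot\,\ket{\bm q}$ is understood as projecting onto $\bar w$-labels, leaving an operator on $\mathcal{H}_w$.

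First I would exploit the block structure in the $\bar w$ index. Since $\mathbb{I}_{\bar w}\otimes W$ is diagonal in the $\bar w$-labels, its $(\bm p,\bm q)$-block is $\delta_{\bm p,\bm q}W$, so sandwiching it merely left-multiplies the relevant $\mathcal{H}_w$-block by $W$, and $\mathbb{I}_{\bar w}\otimes W^{\dagger}$ likewise right-multiplies by $W^{\dagger}$. Inserting a resolution of the identity $\sum_{\bm q}\ket{\bm q}\bra{\bm q}$ on $\mathcal{H}_{\bar w}$ between $(\mathbb{I}_{\bar w}\otimes W^{\dagger})$ and $B$, the summand collapses to $\sum_{\bm q} W\,\bra{\bm p}A\ket{\bm q}\,W^{\dagger}\,\bra{\bm q}B\ket{\bm p}$, so that $\mathrm{Tr}[X]=\sum_{\bm p,\bm q}\mathrm{Tr}_w\!\left[W\,\bra{\bm p}A\ket{\bm q}\,W^{\dagger}\,\bra{\bm q}B\ket{\bm p}\right]$.

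Next I would identify these $\mathcal{H}_w$-blocks with the partial-trace expressions in the statement. The key identity is $\mathrm{Tr}_{\bar w}\!\left[(\ket{\bm p}\bra{\bm q}\otimes\mathbb{I}_w)A\right]=\bra{\bm q}A\ket{\bm p}$, which I would verify by expanding $A=\sum_{\bm a,\bm b}\ket{\bm a}\bra{\bm b}\otimes\bra{\bm a}A\ket{\bm b}$ and carrying out the $\bar w$-trace. Applying this to both operators gives $A_{\bm{qp}}=\bra{\bm q}A\ket{\bm p}$ and $B_{\bm{pq}}=\bra{\bm p}B\ket{\bm q}$, hence $\bra{\bm p}A\ket{\bm q}=A_{\bm{pq}}$ and $\bra{\bm q}B\ket{\bm p}=B_{\bm{qp}}$. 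Substituting these into the expression above yields $\mathrm{Tr}[X]=\sum_{\bm p,\bm q}\mathrm{Tr}\!\left[W A_{\bm{pq}} W^{\dagger} B_{\bm{qp}}\right]$, and since the double sum is symmetric in $\bm p,\bm q$, relabeling $\bm p\leftrightarrow\bm q$ reproduces exactly the claimed form $\sum_{\bm p,\bm q}\mathrm{Tr}\!\left[W A_{\bm{qp}} W^{\dagger} B_{\bm{pq}}\right]$.

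The computation is essentially bookkeeping, so I do not expect a serious obstacle; the only delicate point is matching the index ordering between the block notation $\bra{\bm p}A\ket{\bm q}$ and the partial-trace definitions. Specifically, I must confirm that the placement of $\bra{\bm q}$ and $\ket{\bm p}$ in $A_{\bm{qp}}$ produces the block $\bra{\bm q}A\ket{\bm p}$ (and not its transpose), and then rely on the symmetry of the $\bm p,\bm q$ summation to bring the final expression into the stated form.
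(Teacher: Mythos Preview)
Your proposal is correct and follows essentially the same approach as the paper. The paper expands the trace in explicit matrix elements $w_{i,j}a_{qp,jk}w_{l,k}^{*}b_{pq,li}$ and then recollects them into the partial-trace expressions, whereas you work directly with operator blocks $\bra{\bm p}A\ket{\bm q}$ and a resolution of the identity on $\mathcal{H}_{\bar w}$; these are the same bookkeeping computation in slightly different notation, including the final $\bm p\leftrightarrow\bm q$ relabeling.
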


\begin{proof}
The left-hand side of Eq.~\eqref{eq:lem5} can be expanded as follows;
\begin{equation}
\label{eq:lem5_proof}
\begin{split}
    \mathrm{Tr}\left[(\mathbb{I}_{\bar{w}}\otimes W) A (\mathbb{I}_{\bar{w}}\otimes W_i^{\dagger}) B\right] &= \sum_{i,j,k,l,p,q} w_{i,j} a_{qp,jk} w_{l,k}^{*} a_{pq,li} \\
    &= \sum_{\bm{p},\bm{q}} \mathrm{Tr} \left[ W \mathrm{Tr}_{\bar{w}}\left[\left(\ket{\bm{p}}\bra{\bm{q}}\otimes\mathbb{I}_{\bar{w}}\right)A\right] W^{\dagger} \mathrm{Tr}_{\bar{w}}\left[\left(\ket{\bm{q}}\bra{\bm{p}}\otimes\mathbb{I}_{\bar{w}}\right)B\right] \right] \\
    &= \sum_{\bm{p,q}} \mathrm{Tr} \left[WA_{\bm{qp}},W^{\dagger}B_{\bm{pq}}\right].\\
\end{split}
\end{equation}
\end{proof}

\subsection{The details of the quantum circuits used}
In our analysis, two types of quantum circuits are considered.
One is the random quantum circuit that acts on all $n$ qubits and forms a $t$-design. 
The other is the alternating layered ansatz (ALA) whose $m$-qubit local unitary blocks are $t$-designs. 
More specifically, the ALA can be written as 
\begin{equation}
\label{eq:ALA_unitary}
\begin{split}
    U(\bm{x},\bm{\theta}) &= \prod_{d=1}^{L} V_{d}(\bm{x},\bm{\theta}) \\
    &= \prod_{d=1}^{L} \left(\prod_{k=1}^{\kappa} W_{k,d}(\bm{x},\bm{\theta}_{k,d})\right),
\end{split}
\end{equation}
where the total number of the depth is $L$ and the number of unitary blocks in each layer $\kappa$ satisfying $n=m\kappa$ with the total number of qubits $n$.
Here a unitary block overlaps each $m/2$ qubits on which the corresponding two unitary blocks of one previous layer act; see Figure 3 (b) in the main text.
Note that, when $m=1$, we assume the ALA can be regarded as the tensor-product quantum circuit.
Also, the $k$-th unitary block in the $d$-th layer $W_{k,d}(\bm{x},\bm{\theta}_{k,d})$ can be expressed as
\begin{equation}
    W_{k,d}(\bm{x},\bm{\theta}_{k,d}) = \prod_{\alpha=1}^{n_{(k,d)}} R_{B_{k,d}^{\alpha}}(\theta_{k,d}^{\alpha})R_{{B'}_{k,d}^{\alpha}}(x_{\alpha})R_{k,d}^{\alpha}.
\end{equation}
with data-dependent gates $\{R_{B_{k,d}^{\alpha}}(\phi_{k,d}^{\alpha}(\bm{x}))\}$ with a function $\phi_{k,d}^{\alpha}$, parameter-dependent gates $\{R_{{B'}_{k,d}^{\alpha}}(\theta_{k,d}^{\alpha})\}$ and data- and parameter-independent gate $\{R_{k,d}^{\alpha}\}$. Here $B_{k,d}^{\alpha},{B'}_{k,d}^{\alpha}\in\{X,Y,Z\}$ are the Pauli operators on the $\alpha$-th rotation gate and $ n_{(k,d)}$ is the number of gates composed of these three types of gates in $ W_{k,d}(\bm{x},\bm{\theta}_{k,d})$.
Note that each rotation gate is represented as $R_{\sigma}(\theta)=\exp(-i\theta \sigma/2)$ with the Pauli operator $\sigma$.

\section{Proof of Proposition}
In this section, we derive the expectation and the variance of the fidelity-based QK defined as
\begin{equation}
\label{eq:def_qk}
    k_{Q}(\bm{x},\bm{x'}) = \mathrm{Tr}\left[\rho_{\bm{x},\bm{\theta}}\rho_{\bm{x'},\bm{\theta}}\right],
\end{equation}
where $\rho_{\bm{x},\bm{\theta}}=U(\bm{x},\bm{\theta})\rho_{0}U^{\dagger}(\bm{x},\bm{\theta})$ is the density operator representation of the quantum state with the input- and parameter-dependent unitary $U(\bm{x},\bm{\theta})$ and the initial state $\rho_0$.
Especially, we only focus on the case where $\rho_0$ is an arbitrary pure state, while it can be straightforwardly extended to the mixed state.

\subsection{Case (1): The random quantum circuit acting on all $n$ qubits for the fidelity-based QK}
We here calculate the expectation $\braket{k_{Q}}$ and the variance $Var\left[k_{Q}\right]$ of the fidelity-based QK defined in Eq.~\eqref{eq:def_qk}, assuming either of the random quantum circuits acting on all $n$ qubits, i.e., $ U(\bm{x},\bm{\theta})$ or $U(\bm{x’},\bm{\theta})$ is a $t$-design.

First, we derive the expectation of the QK, assuming either $U(\bm{x},\bm{\theta})$ or $U(\bm{x’},\bm{\theta})$ is a $t$-design with $t\ge1$.
Without loss of generality, we assume only $U(\bm{x},\bm{\theta})$ is a $t$-design with $t\ge1$, due to the symmetry of the fidelity-based QK in Eq.~\eqref{eq:def_qk}.
Then the expectation of the QK over the Haar random unitary, $\braket{k_{Q}}_{ U(\bm{x},\bm{\theta})}$, is calculated as follows;

\begin{equation}
\begin{split}
    \braket{k_{Q}}_{ U(\bm{x},\bm{\theta})}&=\left\langle\mathrm{Tr}\left[U(\bm{x},\bm{\theta})\rho_{0} U^{\dagger}(\bm{x},\bm{\theta})\rho_{\bm{x'},\bm{\theta}}\right]\right\rangle_{ U(\bm{x},\bm{\theta})} \\
    &= \frac{1}{2^{n}} \mathrm{Tr} \left[\rho_{0}\right] \mathrm{Tr} \left[\rho_{\bm{x'},\bm{\theta}}\right]\\
    &= \frac{1}{2^{n}},\\
\end{split}
\end{equation}
where Lemma \ref{lem1} and the property of the density matrix, i.e., $\mathrm{Tr}\left[\rho\right]=1$, are utilized.

Next, we calculate the variance.
The variance $Var\left[k_{Q}\right]$ is expressed as $Var\left[k_{Q}\right]=\braket{k_{Q}^{2}}-\braket{k_{Q}}^2$.
Since we have already had $\braket{k_{Q}}^2=1/2^{2n}$, we focus on $\braket{{k_{Q}}^{2}}$.
Here we assume that $ U(\bm{x},\bm{\theta})$ is a $t$-design with $t\ge2$. 
We remind that it is enough to show the case for $ U(\bm{x},\bm{\theta})$ because of the symmetry.
Then the expectation $\braket{k_{Q}^2}$ can be obtained as 

\begin{equation}
\begin{split}
    \braket{k_{Q}^{2}}_{U(\bm{x},\bm{\theta})}&=\left\langle\mathrm{Tr}\left[U(\bm{x},\bm{\theta})\rho_{0} U^{\dagger}(\bm{x},\bm{\theta})\rho_{\bm{x'},\bm{\theta}}\right]\mathrm{Tr}\left[U(\bm{x},\bm{\theta})\rho_{0} U^{\dagger}(\bm{x},\bm{\theta})\rho_{\bm{x'},\bm{\theta}}\right]\right\rangle_{U(\bm{x},\bm{\theta})}\\
    &= \frac{1}{2^{2n}-1}\left(\mathrm{Tr}\left[\rho_{0}\right]\mathrm{Tr}\left[\rho_{\bm{x'},\bm{\theta}}\right]\mathrm{Tr}\left[\rho_{0}\right]\mathrm{Tr}\left[\rho_{\bm{x'},\bm{\theta}}\right]+\mathrm{Tr}\left[\rho_{0}^{2}\right]\mathrm{Tr}\left[\rho_{\bm{x'},\bm{\theta}}^{2}\right]\right)\\
    & \qquad -\frac{1}{2^{n}\left(2^{2n}-1\right)}\left(\mathrm{Tr}\left[\rho_{0}\right]\mathrm{Tr}\left[\rho_{0}\right]\mathrm{Tr}\left[\rho_{\bm{x'},\bm{\theta}}^{2}\right]+\mathrm{Tr}\left[\rho_{0}^{2}\right]\mathrm{Tr}\left[\rho_{\bm{x'},\bm{\theta}}\right]\mathrm{Tr}\left[\rho_{\bm{x'},\bm{\theta}}\right]\right)\\
    &= \frac{2}{2^{n}\left(2^{n}+1\right)}.\\
\end{split}
\end{equation}
Here we utilize Lemma \ref{lem3} and the property of the pure state, i.e., $\mathrm{Tr}\left[\rho\right]=\mathrm{Tr}\left[\rho^2\right]=1$.

Thus, we have
\begin{equation}
\begin{split}
    Var[k_{Q}] &= \braket{k_{Q}^{2}}_{U(\bm{x},\bm{\theta})}-\braket{k_{Q}}_{U(\bm{x},\bm{\theta})}^2\\
    &= \frac{2}{2^{n}\left(2^{n}+1\right)} - \frac{1}{2^{2n}}\\
    &= \frac{2^{n}-1}{2^{2n}\left(2^{n}+1\right)}\\
\end{split}
\end{equation}

We also remark that $\frac{2^{n}-1}{2^{2n}\left(2^{n}+1\right)}$ is the upper bound of the variance for the case where $\rho_0$ is the mixed state, because we utilize the inequality for the purity, $1/d\le\mathrm{Tr}\left[\rho^2\right]\le1$ with the $d$-dimensional quantum state $\rho$.

\subsection{Case (2): The single-layer ALA for the fidelity-based QK}
We here calculate the expectation $\braket{k_{Q}}$ and the upper bound of the variance $Var\left[k_{Q} \right]$ of the fidelity-based QK, considering the ALA in Eq.~\eqref{eq:ALA_unitary} whose $m$-qubit local unitary blocks are $t$-designs.

We again work on the expectation first. 
We notice that the expectation $\braket{k_{Q}}_{U(\bm{x},\bm{\theta})}$ can be obtained by integrating the quantity over every unitary block, as $\braket{k_{Q}}_{W_{1,1}(\bm{x},\bm{\theta}),W_{2,1}(\bm{x},\bm{\theta}),\ldots W_{\kappa,L}(\bm{x},\bm{\theta}_{\kappa,L})}$.
Thus, we start with the integration over the $\kappa$-th unitary blocks in the last layer, $ W_{\kappa,L}(\bm{x},\bm{\theta}_{\kappa,L})$.

The expectation of the QK over $W_{\kappa,L}(\bm{x},\bm{\theta}_{\kappa,L})$ is calculated in the following way.

\begin{equation}
\label{eq:4-2e_iter1}
\begin{split}
    \braket{k_{Q}}_{W_{\kappa,L}(\bm{x},\bm{\theta}_{\kappa,L})}&=\left\langle\mathrm{Tr}\left[W_{\kappa,L}(\bm{x},\bm{\theta}_{\kappa,L})\rho_{0}^{(\kappa,L)}W_{\kappa,L}^{\dagger}(\bm{x},\bm{\theta}_{\kappa,L})\rho_{\bm{x'},\bm{\theta}}\right]\right\rangle_{W_{\kappa,L}(\bm{x},\bm{\theta}_{\kappa,L})}\\
    &=\left\langle\mathrm{Tr}\left[\left(\mathbb{I}_{\bar{S}_{(\kappa,L)}}\otimes W_{\kappa,L}(\bm{x},\bm{\theta}_{\kappa,L})\right)\rho_{0}^{(\kappa,L)}\left(\mathbb{I}_{\bar{S}_{(\kappa,L)}}\otimes W_{\kappa,L}^{\dagger}(\bm{x},\bm{\theta}_{\kappa,L})\right)\rho_{\bm{x'},\bm{\theta}}\right]\right\rangle_{W_{\kappa,L}(\bm{x},\bm{\theta}_{\kappa,L})}\\
    &= \frac{1}{2^{m}}\mathrm{Tr}\left[\left(\mathrm{Tr}_{S_{(\kappa,L)}}\left[\rho_{0}^{(\kappa,L)}\right]\otimes \mathbb{I}_{S_{(\kappa,L)}}  \right)\rho_{\bm{x'},\bm{\theta}}\right],\\
\end{split}
\end{equation}
where $\rho_{0}^{(a,b)}=U_{a,b}\rho_{0} U_{a,b}^{\dagger}$ with $U_{a,b}= (\prod_{k'=1}^{a-1} W_{k',b}(\bm{x},\bm{\theta}_{k',b}))(\prod_{d=1}^{b-1} V_{d}(\bm{x},\bm{\theta}))$ and $\mathrm{Tr}_{S_{(k,d)}}$($\mathbb{I}_{S_{(k,d)}}$) is the partial trace (the identity operator) over the subspace $S_{(k,d)}$ of the qubits on which $W_{k,d}(\bm{x},\bm{\theta}_{k,d})$ acts. 
Note that $U_{a,b}$ means the all gates up to the $(a-1)$-th blocks in the $b$-th layer.
We also utilize Lemma \ref{lem4} here.

Subsequently, we calculate the expectation over $W_{\kappa-1,L}(\bm{x},\bm{\theta}_{\kappa-1,L})$.
Since $W_{\kappa-1,L}(\bm{x},\bm{\theta}_{\kappa-1,L})$ is involved only in $\mathrm{Tr}_{S_{\kappa,L}}[\rho_{0}^{(\kappa,L)}]$ of Eq.\eqref{eq:4-2e_iter1}, we integrate the quantity only. Then we have
\begin{equation}
\begin{split}
    \left\langle\mathrm{Tr}_{S_{\kappa,L}}\left[\rho_{0}^{(\kappa,L)}\right]\right\rangle_{W_{\kappa-1,L}(\bm{x},\bm{\theta}_{\kappa-1,L})}  &= \left\langle\mathrm{Tr}_{S_{(\kappa,L)}}\left[W_{\kappa-1,L}(\bm{x},\bm{\theta}_{\kappa-1,L})\rho_{0}^{(\kappa-1,L)}W_{\kappa-1,L}^{\dagger}(\bm{x},\bm{\theta}_{\kappa-1,L})\right]\right\rangle_{W_{\kappa-1,L}(\bm{x},\bm{\theta}_{\kappa-1,L})}  \\
    &= \frac{1}{2^{m}} \mathrm{Tr}_{S_{(\kappa,L)}}\left[\mathrm{Tr}_{S_{(\kappa-1,L)}}\left[\rho_{0}^{(\kappa-1,L)}\right]\otimes \mathbb{I}_{S_{(\kappa-1,L)}} \right],\\
\end{split}
\end{equation}
where Lemma \ref{lem4} is used again.
Similarly, we iterate the integration of the quantity for all unitary blocks in the ALA, and then we obtain

\begin{equation}
\label{eq:4-2_iter_all}
\begin{split}
    &\braket{k_{Q}}_{U(\bm{x},\bm{\theta})} \\
    &= \frac{1}{\left(2^{m}\right)^{\kappa L}}\mathrm{Tr}\left[\left(\mathrm{Tr}_{S_{(\kappa,L)}}\left[\mathrm{Tr}_{S_{(\kappa-1,L)}}\left[\ldots \mathrm{Tr}_{S_{(2,1)}}\left[\mathrm{Tr}_{S_{(1,1)}}\left[\rho_{0}^{(1,1)}\right] \otimes \mathbb{I}_{S_{(1,1)}}  \right] \otimes \mathbb{I}_{S_{(2,1)}} \right] \ldots \otimes \mathbb{I}_{S_{(\kappa-1,L)}} \right]\otimes \mathbb{I}_{S_{(\kappa,L)}}  \right)\rho_{x',\theta}\right].
\end{split}
\end{equation}
Here, $\rho_{0}^{(1,1)}=\rho_{0}$ by definition.
Intuitively, the operation $ \mathrm{Tr}_{S_{(k,d)}}[\rho] \otimes \mathbb{I}_{S_{(k,d)}} $ in Eq.\eqref{eq:4-2_iter_all} means a partial trace of the quantum state over subspace $S_{(k,d)}$, in which the state is then replaced with the identity operator $\mathbb{I}_{S_{(k,d)}}$. 
Thus let $\rho_{0} = \sum_{\alpha,\alpha'} c_{\alpha}c_{\alpha’}^{*} \ket{\alpha}\bra{\alpha'}$ be an arbitrary initial state where $\alpha$ and $\alpha'$ are bit-strings, and $c_{\alpha},c_{\alpha’} \in \mathbb{C}$ satisfying $\sum_{\alpha,\alpha'} c_{\alpha}c_{\alpha}^{*}=1$. 
Then the quantity up to the first layer, i.e., $\mathrm{Tr}_{S_{(\kappa,1)}}[\cdots] \otimes \mathbb{I}_{S_{(\kappa,1)}}$ in Eq.\eqref{eq:4-2_iter_all}, can be written as
\begin{equation}
\begin{split}
\label{eq:ptace_replace_identity}
    & \qquad \mathrm{Tr}_{S_{(\kappa,1)}}\left[\mathrm{Tr}_{S_{(\kappa-1,1)}}\left[\ldots \mathrm{Tr}_{S_{(2,1)}}\left[\mathrm{Tr}_{S_{(1,1)}}\left[\rho_{0}\right] \otimes \mathbb{I}_{S_{(1,1)}}  \right] \otimes \mathbb{I}_{S_{(2,1)}} \right] \ldots \otimes \mathbb{I}_{S_{(\kappa-1,1)}} \right]\otimes \mathbb{I}_{S_{(\kappa,1)}} \\ &=  \sum_{\alpha,\alpha'} c_{\alpha}c_{\alpha’}^{*} \left(\prod_{k=1}^{\kappa}\delta_{(\alpha,\alpha')_{S_k}}\right) \times \left(\mathbb{I}_{S_{(1,1)}} \otimes \mathbb{I}_{S_{(2,1)}} \otimes \ldots \otimes \mathbb{I}_{S_{(\kappa,1)}}\right) \\
    &= \mathbb{I}.
\end{split}
\end{equation}
Here we remind every subspace $S_{(k,1)}$ in $\{S_{(k,1)}\}_{k=1}^{\kappa}$ has no overlap with one another.
Consequently, by substituting Eq.~\eqref{eq:ptace_replace_identity} into Eq.~\eqref{eq:4-2_iter_all}, we get
\begin{equation}
    \braket{k_{Q}}_{U(\bm{x},\bm{\theta})} = \frac{\left(2^{m}\right)^{\kappa (L-1)}}{\left(2^{m}\right)^{\kappa L}}= \frac{1}{2^n}.\\
\end{equation}
We note that the numerator in the first equality comes from the trace of the identity operators over the whole system by $L-1$ times.
Also $n=m\kappa$ is used here.

Lastly, we calculate the upper bound of the variance of the QK using the ALA.
As is shown, the variance $Var\left[k_{Q}\right]$ can be described by $Var\left[k_{Q}\right]=\braket{k_{Q}^{2}}-\braket{k_{Q}}^2$.
Thus we focus on $\braket{k_{Q}^{2}}$ because $\braket{k_{Q}}$ has been calculated.
Again, we here assume only $U(\bm{x},\bm{\theta})$ is a $t$-design with $t\ge2$. 

Analogously to the calculation for the expectation, we integrate the quantity over all local unitary blocks in the ALA.
First, the expectation over $W_{\kappa,L}(\bm{x},\bm{\theta}_{\kappa,L})$ is calculated as follows;

\begin{equation}
\label{eq:qk_var_ala_1}
    \begin{split}
    &\braket{k_{Q}^{2}}_{W_{\kappa,L}(\bm{x},\bm{\theta}_{\kappa,L})} \\
    &=\left\langle\mathrm{Tr}\left[W_{\kappa,L}(\bm{x},\bm{\theta}_{\kappa,L})\rho_{0}^{(\kappa,L)}W_{\kappa,L}^{\dagger}(\bm{x},\bm{\theta}_{\kappa,L})\rho_{\bm{x'},\bm{\theta}}\right]\mathrm{Tr}\left[W_{\kappa,L}(\bm{x},\bm{\theta}_{\kappa,L})\rho_{0}^{(\kappa,L)}W_{\kappa,L}^{\dagger}(\bm{x},\bm{\theta}_{\kappa,L})\rho_{\bm{x'},\bm{\theta}}\right]\right\rangle_{W_{\kappa,L}(\bm{x},\bm{\theta}_{\kappa,L})}\\
    &=\left\langle\sum_{\bm{p},\bm{q},\bm{p'},\bm{q'}}\mathrm{Tr}\left[W_{\kappa,L}(\bm{x},\bm{\theta}_{\kappa,L})\rho_{0,\bm{q}\bm{p}}^{(\kappa,L)}W_{\kappa,L}^{\dagger}(\bm{x},\bm{\theta}_{\kappa,L})\rho_{\bm{x'},\bm{\theta},\bm{p}\bm{q}}\right]\mathrm{Tr}\left[W_{\kappa,L}(\bm{x},\bm{\theta}_{\kappa,L})\rho_{0,\bm{q'}\bm{p'}}^{(\kappa,L)}W_{\kappa,L}^{\dagger}(\bm{x},\bm{\theta}_{\kappa,L})\rho_{\bm{x'},\bm{\theta},\bm{p'}\bm{q'}}\right]\right\rangle_{W_{\kappa,L}(\bm{x},\bm{\theta}_{\kappa,L})}\\
    &= \frac{1}{2^{2m}-1} \sum_{\bm{p},\bm{q},\bm{p'},\bm{q'}} \Biggl( \mathrm{Tr}\left[\rho_{0,\bm{q}\bm{p}}^{(\kappa,L)}\right]\mathrm{Tr}\left[\rho_{0,\bm{q'}\bm{p'}}^{(\kappa,L)}\right] \left(  \mathrm{Tr}\left[\rho_{\bm{x'},\bm{\theta},\bm{p}\bm{q}}\right]\mathrm{Tr}\left[\rho_{\bm{x'},\bm{\theta},\bm{p'}\bm{q'}}\right]-\frac{1}{2^{m}}\mathrm{Tr}\left[\rho_{\bm{x'},\bm{\theta},\bm{p}\bm{q}}\rho_{\bm{x'},\bm{\theta},\bm{p'}\bm{q'}}\right]\right)\\
    & \qquad \qquad \qquad \qquad \qquad + \mathrm{Tr}\left[\rho_{0,\bm{q}\bm{p}}^{(\kappa,L)}\rho_{0,\bm{q'}\bm{p'}}^{(\kappa,L)}\right] \left(\mathrm{Tr}\left[\rho_{\bm{x'},\bm{\theta},\bm{p}\bm{q}}\rho_{\bm{x'},\bm{\theta},\bm{p'}\bm{q'}}\right]-\frac{1}{2^{m}}\mathrm{Tr}\left[\rho_{\bm{x'},\bm{\theta},\bm{p}\bm{q}}\right]\mathrm{Tr}\left[\rho_{\bm{x'},\bm{\theta},\bm{p'}\bm{q'}}\right]\right) \Biggr),\\
\end{split}
\end{equation}
where $\rho_{0,\bm{q}\bm{p}}^{(\kappa,L)}=\mathrm{Tr}_{\bar{S}_{(\kappa,L)}}\left[\left(\ket{\bm{p}}\bra{\bm{q}}\otimes \mathbb{I}_{S_{(\kappa,L)}}\right)\rho_{0}^{(\kappa,L)}\right]$ and $\rho_{\bm{x'},\bm{\theta},\bm{p}\bm{q}}=\mathrm{Tr}_{\bar{S}_{(\kappa,L)}}\left[\left(\ket{\bm{q}}\bra{\bm{p}}\otimes \mathbb{I}_{S_{(\kappa,L)}}\right)\rho_{\bm{x'},\bm{\theta}}\right]$.
Here we utilize Lemmas \ref{lem4} and \ref{lem5}.

Next, we integrate the quantity over $W_{\kappa-1,L}(\bm{x},\bm{\theta}_{\kappa-1,L})$.
Since the unitary $W_{\kappa-1,L}(\bm{x},\bm{\theta}_{\kappa-1,L})$ is involved in $\mathrm{Tr}[\rho_{0,\bm{q}\bm{p}}^{(\kappa,L)}]\mathrm{Tr}[\rho_{0,\bm{q'}\bm{p'}}^{(\kappa,L)}]$ and $\mathrm{Tr}[\rho_{0,\bm{q}\bm{p}}^{(\kappa,L)}\rho_{0,\bm{q'}\bm{p'}}^{(\kappa,L)}]$ of Eq.~\eqref{eq:qk_var_ala_1}, we calculate these quantities.
Then the expectation for each quantity can be expressed as follows;

\begin{equation}
\begin{split}
    &\Big\langle \mathrm{Tr}\left[\rho_{0,\bm{q}\bm{p}}^{(\kappa,L)}\right]\mathrm{Tr}\left[\rho_{0,\bm{q'}\bm{p'}}^{(\kappa,L)}\right] \Big\rangle_{W_{\kappa-1,L}(\bm{x},\bm{\theta}_{\kappa-1,L})} \\
    &= \Big\langle \mathrm{Tr}\left[\mathrm{Tr}_{\bar{S}_{(\kappa,L)}}\left[\left(\ket{\bm{p}}\bra{\bm{q}}\otimes \mathbb{I}_{S_{(\kappa,L)}}\right)\rho_{0}^{(\kappa,L)}\right]\right]\mathrm{Tr}\left[\mathrm{Tr}_{\bar{S}_{(\kappa,L)}}\left[\left(\ket{\bm{p'}}\bra{\bm{q'}}\otimes \mathbb{I}_{S_{(\kappa,L)}}\right)\rho_{0}^{(\kappa,L)}\right]\right] \Big\rangle_{W_{\kappa-1,L}(\bm{x},\bm{\theta}_{\kappa-1,L})}\\
    &= \Big\langle \mathrm{Tr}\left[\left(\ket{\bm{p}}\bra{\bm{q}}\otimes \mathbb{I}_{S_{(\kappa,L)}}\right)\rho_{0}^{(\kappa,L)}\right]\mathrm{Tr}\left[\left(\ket{\bm{p'}}\bra{\bm{q'}}\otimes \mathbb{I}_{S_{(\kappa,L)}}\right)\rho_{0}^{(\kappa,L)}\right] \Big\rangle_{W_{\kappa-1,L}(\bm{x},\bm{\theta}_{\kappa-1,L})}\\
    &=  \Big\langle \mathrm{Tr}\left[\left(\ket{\bm{p}}\bra{\bm{q}}\otimes \mathbb{I}_{S_{(\kappa,L)}}\right)W_{\kappa-1,L}(\bm{x},\bm{\theta}_{\kappa-1,L})\rho_{0}^{(\kappa-1,L)}W_{\kappa-1,L}^{\dagger}(\bm{x},\bm{\theta}_{\kappa-1,L})\right] \\
    & \qquad \qquad \qquad \qquad \times \mathrm{Tr}\left[\left(\ket{\bm{p'}}\bra{\bm{q'}}\otimes \mathbb{I}_{S_{(\kappa,L)}}\right)W_{\kappa-1,L}(\bm{x},\bm{\theta}_{\kappa-1,L})\rho_{0}^{(\kappa-1,L)}W_{\kappa-1,L}^{\dagger}(\bm{x},\bm{\theta}_{\kappa-1,L})\right] \Big\rangle_{W_{\kappa-1,L}(\bm{x},\bm{\theta}_{\kappa-1,L})}\\
    &= \frac{1}{2^{2m}-1}\Bigg(\mathrm{Tr} \left[ \left(\ket{\bm{p}}\bra{\bm{q}}_{\bar{S}_{(\kappa-1,L)}}\otimes \mathbb{I}_{S_{(\kappa-1:\kappa,L)}}\right)\rho_{0}^{(\kappa-1,L)}\right]\mathrm{Tr} \left[ \left(\ket{\bm{p'}}\bra{\bm{q'}}_{\bar{S}_{(\kappa-1,L)}}\otimes \mathbb{I}_{S_{(\kappa-1:\kappa,L)}}\right)\rho_{0}^{(\kappa-1,L)}\right] \\
    & \qquad \qquad \qquad \qquad \times \bigg( \delta_{(\bm{p}\bm{q})_{S_{(\kappa-1,L)}}}\delta_{(\bm{p'}\bm{q'})_{S_{(\kappa-1,L)}}} - \frac{1}{2^m} \delta_{(\bm{p}\bm{q'})_{S_{(\kappa-1,L)}}}\delta_{(\bm{p'}\bm{q})_{S_{(\kappa-1,L)}}} \bigg) \\
    &\qquad \qquad + \mathrm{Tr}\left[\mathrm{Tr}_{\bar{S}_{(\kappa-1,L)}}\left[\left(\ket{\bm{p}}\bra{\bm{q}}_{\bar{S}_{(\kappa-1,L)}}\otimes \mathbb{I}_{S_{(\kappa-1:\kappa,L)}}\right)\rho_{0}^{(\kappa-1,L)}\right] \mathrm{Tr}_{\bar{S}_{(\kappa-1,L)}}\left[\left(\ket{\bm{p'}}\bra{\bm{q'}}_{\bar{S}_{(\kappa-1,L)}}\otimes \mathbb{I}_{S_{(\kappa-1:\kappa,L)}}\right)\rho_{0}^{(\kappa-1,L)}\right] \right] \\
    & \qquad \qquad \qquad \qquad \times \bigg( \delta_{(\bm{p}\bm{q'})_{S_{(\kappa-1,L)}}}\delta_{(\bm{p'}\bm{q})_{S_{(\kappa-1,L)}}} - \frac{1}{2^m} \delta_{(\bm{p}\bm{q})_{S_{(\kappa-1,L)}}}\delta_{(\bm{p'}\bm{q'})_{S_{(\kappa-1,L)}}}  \bigg) \Bigg), \\
\end{split}
\end{equation}
\begin{equation}
\begin{split}
    &\Big\langle \mathrm{Tr}\left[\rho_{0,\bm{q}\bm{p}}^{(\kappa,L)}\rho_{0,\bm{q'}\bm{p'}}^{(\kappa,L)}\right] \Big\rangle_{W_{\kappa-1,L}(\bm{x},\bm{\theta}_{\kappa-1,L})} \\
    &= \Big\langle \mathrm{Tr}\left[\mathrm{Tr}_{\bar{S}_{(\kappa,L)}}\left[\left(\ket{\bm{p}}\bra{\bm{q}}\otimes \mathbb{I}_{S_{(\kappa,L)}}\right)\rho_{0}^{(\kappa,L)}\right]\mathrm{Tr}_{\bar{S}_{(\kappa,L)}}\left[\left(\ket{\bm{p'}}\bra{\bm{q'}}\otimes \mathbb{I}_{S_{(\kappa,L)}}\right)\rho_{0}^{(\kappa,L)}\right]\right] \Big\rangle_{W_{\kappa-1,L}(\bm{x},\bm{\theta}_{\kappa-1,L})}\\
    &=  \Big\langle \mathrm{Tr}\Big[\mathrm{Tr}_{\bar{S}_{(\kappa,L)}}\left[\left(\ket{\bm{p}}\bra{\bm{q}}\otimes \mathbb{I}_{S_{(\kappa,L)}}\right)W_{\kappa-1,L}(\bm{x},\bm{\theta}_{\kappa-1,L})\rho_{0}^{(\kappa-1,L)}W_{\kappa-1,L}^{\dagger}(\bm{x},\bm{\theta}_{\kappa-1,L})\right] \\
    & \qquad \qquad \qquad \qquad \times \mathrm{Tr}_{\bar{S}_{(\kappa,L)}}\left[\left(\ket{\bm{p'}}\bra{\bm{q'}}\otimes \mathbb{I}_{S_{(\kappa,L)}}\right)W_{\kappa-1,L}(\bm{x},\bm{\theta}_{\kappa-1,L})\rho_{0}^{(\kappa-1,L)}W_{\kappa-1,L}^{\dagger}(\bm{x},\bm{\theta}_{\kappa-1,L})\right]\Big] \Big\rangle_{W_{\kappa-1,L}(\bm{x},\bm{\theta}_{\kappa-1,L})}\\
    &= \frac{1}{2^{2m}-1}\Bigg(\mathrm{Tr}\left[\mathrm{Tr}_{\bar{S}_{(\kappa,L)}}\left[\left(\ket{\bm{p}}\bra{\bm{q}}_{\bar{S}_{(\kappa-1,L)}}\otimes \mathbb{I}_{S_{(\kappa-1:\kappa,L)}}\right)\rho_{0}^{(\kappa-1,L)}\right] \mathrm{Tr}_{\bar{S}_{(\kappa,L)}}\left[\left(\ket{\bm{p'}}\bra{\bm{q'}}_{\bar{S}_{(\kappa-1,L)}}\otimes \mathbb{I}_{S_{(\kappa-1:\kappa,L)}}\right)\rho_{0}^{(\kappa-1,L)}\right] \right] \\
    & \qquad \qquad \qquad \qquad \times \bigg( \delta_{(\bm{p}\bm{q})_{S_{(\kappa-1,L)}}}\delta_{(\bm{p'}\bm{q'})_{S_{(\kappa-1,L)}}} - \frac{1}{2^m} \delta_{(\bm{p}\bm{q'})_{S_{(\kappa-1,L)}}}\delta_{(\bm{p'}\bm{q})_{S_{(\kappa-1,L)}}} \bigg) \\
    &\qquad \qquad + \mathrm{Tr}\left[\mathrm{Tr}_{\bar{S}_{(\kappa-1:\kappa,L)}}\left[\left(\ket{\bm{p}}\bra{\bm{q}}_{\bar{S}_{(\kappa-1,L)}}\otimes \mathbb{I}_{S_{(\kappa-1:\kappa,L)}}\right)\rho_{0}^{(\kappa-1,L)}\right] \mathrm{Tr}_{\bar{S}_{(\kappa-1):\kappa,L}}\left[\left(\ket{\bm{p'}}\bra{\bm{q'}}_{\bar{S}_{(\kappa-1,L)}}\otimes \mathbb{I}_{S_{(\kappa-1:\kappa,L)}}\right)\rho_{0}^{(\kappa-1,L)}\right] \right] \\ 
    & \qquad \qquad \qquad \qquad \times \bigg( \delta_{(\bm{p}\bm{q'})_{S_{(\kappa-1,L)}}}\delta_{(\bm{p'}\bm{q})_{S_{\kappa-1,L}}} - \frac{1}{2^m} \delta_{(\bm{p}\bm{q})_{S_{(\kappa-1,L)}}}\delta_{(\bm{p'}\bm{q'})_{S_{(\kappa-1,L)}}}  \bigg) \Bigg), \\
\end{split}
\end{equation}
where $S_{(i:j,d)}$ is the subspace from $S_{(i,d)}$ to $S_{(j,d)}$ and  $\delta_{(\bm{p}\bm{q})_{S_{(k,d)}}}$ is the Kronecker delta for $\bm{p}$ and $\bm{q}$ in $S_{(k,d)}$. 
Also, $\ket{\bm{p}}\bra{\bm{q }}_{S_{(k,d)}}$ represents the state $\ket{\bm{p}}\bra{\bm{q}}$ in $S_{(k,d)}$.

Here we repeat this operation for the rest of the unitary blocks in the $L$-th layer.
Fortunately, since the unitary blocks, $W_{k,L}(\bm{x},\bm{\theta}_{k,L})$ with $k\in\{1,\ldots,\kappa-2\}$, are not involved in the delta function of the bit-strings $\bm{p,q,p',q'}$, we only need to focus on the remaining terms.
That is, the following quantities should be integrated over the rest of the unitary blocks:
\begin{equation*}
\begin{split}
&\mathrm{Tr} \left[ \left(\ket{\bm{p}}\bra{\bm{q}}_{\bar{S}_{(\kappa-1,L)}}\otimes \mathbb{I}_{S_{(\kappa-1:\kappa,L)}}\right)\rho_{0}^{(\kappa-1,L)}\right]\mathrm{Tr} \left[ \left(\ket{\bm{p'}}\bra{\bm{q'}}_{\bar{S}_{(\kappa-1,L)}}\otimes \mathbb{I}_{S_{(\kappa-1:\kappa,L)}}\right)\rho_{0}^{(\kappa-1,L)}\right],\\
&\mathrm{Tr}\left[\mathrm{Tr}_{\bar{S}_{(\kappa-1,L)}}\left[\left(\ket{\bm{p}}\bra{\bm{q}}_{\bar{S}_{(\kappa-1,L)}}\otimes \mathbb{I}_{S_{(\kappa-1:\kappa,L)}}\right)\rho_{0}^{(\kappa-1,L)}\right] \mathrm{Tr}_{\bar{S}_{(\kappa-1,L)}}\left[\left(\ket{\bm{p'}}\bra{\bm{q'}}_{\bar{S}_{(\kappa-1,L)}}\otimes \mathbb{I}_{S_{(\kappa-1:\kappa,L)}}\right)\rho_{0}^{(\kappa-1,L)}\right] \right],\\
&\mathrm{Tr}\left[\mathrm{Tr}_{\bar{S}_{(\kappa,L)}}\left[\left(\ket{\bm{p}}\bra{\bm{q}}_{\bar{S}_{(\kappa-1,L)}}\otimes \mathbb{I}_{S_{(\kappa-1:\kappa,L)}}\right)\rho_{0}^{(\kappa-1,L)}\right] \mathrm{Tr}_{\bar{S}_{(\kappa,L)}}\left[\left(\ket{\bm{p'}}\bra{\bm{q'}}_{\bar{S}_{(\kappa-1,L)}}\otimes \mathbb{I}_{S_{(\kappa-1:\kappa,L)}}\right)\rho_{0}^{(\kappa-1,L)}\right] \right],\\
&\mathrm{Tr}\left[\mathrm{Tr}_{\bar{S}_{(\kappa-1:\kappa,L)}}\left[\left(\ket{\bm{p}}\bra{\bm{q}}_{\bar{S}_{(\kappa-1,L)}}\otimes \mathbb{I}_{S_{(\kappa-1:\kappa,L)}}\right)\rho_{0}^{(\kappa-1,L)}\right] \mathrm{Tr}_{\bar{S}_{(\kappa-1:\kappa,L)}}\left[\left(\ket{\bm{p'}}\bra{\bm{q'}}_{\bar{S}_{(\kappa-1,L)}}\otimes \mathbb{I}_{S_{(\kappa-1:\kappa,L)}}\right)\rho_{0}^{(\kappa-1,L)}\right] \right].
\end{split}
\end{equation*}
Suppose we integrate these quantities over the unitary blocks $W_{k,L}(\bm{x},\bm{\theta}_{k,L})$ in the descending order with respect to $k$.
Then the quantities after integration over $W_{k,L}(\bm{x},\bm{\theta}_{k,L})$ fall into two types; 
\begin{equation}
\begin{split}
    &\mathrm{Tr} \left[ \left(\ket{\bm{p}}\bra{\bm{q}}_{\bar{S}_{(k:\kappa-1,L)}}\otimes \mathbb{I}_{S_{(k:\kappa,L)}}\right)\rho_{0}^{(k,L)}\right]\mathrm{Tr} \left[ \left(\ket{\bm{p'}}\bra{\bm{q'}}_{\bar{S}_{(k:\kappa-1,L)}}\otimes \mathbb{I}_{S_{(k:\kappa,L)}}\right)\rho_{0}^{(k,L)}\right],\\
    &\mathrm{Tr}\left[\mathrm{Tr}_{S_{(k')}}\left[\left(\ket{\bm{p}}\bra{\bm{q}}_{\bar{S}_{(k:\kappa-1,L)}}\otimes \mathbb{I}_{S_{(k:\kappa,L)}}\right)\rho_{0}^{(k,L)}\right] \mathrm{Tr}_{S_{(k')}}\left[\left(\ket{\bm{p'}}\bra{\bm{q'}}_{\bar{S}_{(k:\kappa-1,L)}}\otimes \mathbb{I}_{S_{(k:\kappa,L)}}\right)\rho_{0}^{(k,L)}\right] \right],
\end{split}
\end{equation}
where $S_{(k')}$ denotes certain subspace of the whole systems.
Also, for $k\in\{2,\ldots,\kappa-2\}$, we can integrate the above quantities over $W_{k-1,L}(\bm{x},\bm{\theta}_{k-1,L})$ in the following way.

\begin{equation}
\begin{split}
    &\Big\langle \mathrm{Tr} \left[ \left(\ket{\bm{p}}\bra{\bm{q}}_{\bar{S}_{(k:\kappa-1,L)}}\otimes \mathbb{I}_{S_{(k:\kappa,L)}}\right)\rho_{0}^{(k,L)}\right]\mathrm{Tr} \left[ \left(\ket{\bm{p'}}\bra{\bm{q'}}_{\bar{S}_{(k:\kappa-1,L)}}\otimes \mathbb{I}_{S_{(k:\kappa,L)}}\right)\rho_{0}^{(k,L)}\right] \Big\rangle_{W_{k-1,L}(\bm{x},\bm{\theta})_{k-1,L}} \\
    & \qquad =  \Big\langle \mathrm{Tr} \left[ \left(\ket{\bm{p}}\bra{\bm{q}}_{\bar{S}_{(k:\kappa-1,L)}}\otimes \mathbb{I}_{S_{(k:\kappa,L)}}\right)W_{k-1,L}(\bm{x},\bm{\theta}_{k-1,L})\rho_{0}^{(k-1,L)}W_{k-1,L}^{\dagger}(\bm{x},\bm{\theta}_{k-1,L})\right]\\
    & \qquad \qquad \times \mathrm{Tr} \left[ \left(\ket{\bm{p'}}\bra{\bm{q'}}_{\bar{S}_{(k:\kappa-1,L)}}\otimes \mathbb{I}_{S_{(k:\kappa,L)}}\right)W_{k-1,L}(\bm{x},\bm{\theta}_{k-1,L})\rho_{0}^{(k-1,L)}W_{k-1,L}^{\dagger}(\bm{x},\bm{\theta}_{k-1,L})\right] \Big\rangle_{W_{k-1,L}(\bm{x},\bm{\theta}_{k-1,L})}\\
    & \qquad = \frac{1}{2^{2m}-1}\Bigg(\mathrm{Tr} \left[ \left(\ket{\bm{p}}\bra{\bm{q}}_{\bar{S}_{(k-1:\kappa-1,L)}}\otimes \mathbb{I}_{S_{(k-1:\kappa,L)}}\right)\rho_{0}^{(k-1,L)}\right]\mathrm{Tr} \left[ \left(\ket{\bm{p'}}\bra{\bm{q'}}_{\bar{S}_{(k-1:\kappa-1,L)}}\otimes \mathbb{I}_{S_{(k-1:\kappa,L)}}\right)\rho_{0}^{(k-1,L)}\right]\\
    & \qquad \qquad \qquad \qquad \times \bigg( \delta_{(\bm{p}\bm{q})_{S_{(k-1,L)}}}\delta_{(\bm{p'}\bm{q'})_{S_{(k-1,L)}}} - \frac{1}{2^m} \delta_{(\bm{p}\bm{q'})_{S_{(k-1,L)}}}\delta_{(\bm{p'}\bm{q})_{S_{(k-1,L)}}} \bigg) \\
    &\qquad \qquad  \quad+ \mathrm{Tr}\Biggl[\mathrm{Tr}_{\bar{S}_{(k-1,L)}}\left[\left(\ket{\bm{p}}\bra{\bm{q}}_{\bar{S}_{(k-1:\kappa-1,L)}}\otimes \mathbb{I}_{S_{(k-1:\kappa,L)}}\right)\rho_{0}^{(k-1,L)}\right] \\
    & \qquad \qquad \qquad \qquad \times \mathrm{Tr}_{\bar{S}_{(k-1,L)}}\left[\left(\ket{\bm{p'}}\bra{\bm{q'}}_{\bar{S}_{(k-1:\kappa-1,L)}}\otimes \mathbb{I}_{S_{(k-1:\kappa,L)}}\right)\rho_{0}^{(k-1,L)}\right] \Biggr] \\
    & \qquad \qquad \qquad \qquad \qquad \times \bigg( \delta_{(\bm{p}\bm{q'})_{S_{(k-1,L)}}}\delta_{(\bm{p'}\bm{q})_{S_{(k-1,L)}}} - \frac{1}{2^m} \delta_{(\bm{p}\bm{q})_{S_{(k-1,L)}}}\delta_{(\bm{p'}\bm{q'})_{S_{(k-1,L)}}}  \bigg) \Bigg), \\
\end{split}
\end{equation}
\begin{equation}
\begin{split}
    &\Big\langle \mathrm{Tr}\left[\mathrm{Tr}_{S_{(k')}}\left[\left(\ket{\bm{p}}\bra{\bm{q}}_{\bar{S}_{(k:\kappa-1,L)}}\otimes \mathbb{I}_{S_{(k:\kappa,L)}}\right)\rho_{0}^{(k,L)}\right] \mathrm{Tr}_{S_{(k')}}\left[\left(\ket{\bm{p'}}\bra{\bm{q'}}_{\bar{S}_{(k:\kappa-1,L)}}\otimes \mathbb{I}_{S_{(k:\kappa,L)}}\right)\rho_{0}^{(k,L)}\right] \right] \Big\rangle_{W_{k-1,L}(\bm{x},\bm{\theta}_{k-1,L})} \\
    & \qquad =  \Big\langle \mathrm{Tr}\Big[\mathrm{Tr}_{S_{(k')}}\left[\left(\ket{\bm{p}}\bra{\bm{q}}_{\bar{S}_{(k:\kappa-1,L)}}\otimes \mathbb{I}_{S_{(k:\kappa,L)}}\right)W_{k-1,L}(\bm{x},\bm{\theta}_{k-1,L})\rho_{0}^{(k-1,L)}W_{k-1,L}(\bm{x},\bm{\theta}_{k-1,L})^{\dagger}\right]\\
    & \qquad \qquad \qquad \times \mathrm{Tr}_{S_{(k')}}\left[\left(\ket{\bm{p'}}\bra{\bm{q'}}_{\bar{S}_{(k:\kappa-1,L)}}\otimes \mathbb{I}_{S_{(k:\kappa,L)}}\right)W_{k-1,L}(\bm{x},\bm{\theta}_{k-1,L})\rho_{0}^{(k-1,L)}W_{k-1,L}^{\dagger}(\bm{x},\bm{\theta}_{k-1,L})\right] \Big] \Big\rangle_{W_{k-1,L}(\bm{x},\bm{\theta}_{k-1,L})}\\
    & \qquad = \frac{1}{2^{2m}-1}\Bigg(\mathrm{Tr}\Biggl[\mathrm{Tr}_{S_{(k')}}\left[\left(\ket{\bm{p}}\bra{\bm{q}}_{\bar{S}_{(k-1:\kappa-1,L)}}\otimes \mathbb{I}_{S_{(k-1:\kappa-1,L)}}\right)\rho_{0}^{(k-1,L)}\right] \\
    & \qquad \qquad \qquad \qquad \times \mathrm{Tr}_{S_{(k')}}\left[\left(\ket{\bm{p'}}\bra{\bm{q'}}_{\bar{S}_{(k-1:\kappa-1,L)}}\otimes \mathbb{I}_{S_{(k-1:\kappa-1,L)}}\right)\rho_{0}^{(k-1,L)}\right] \Biggr]\\
    & \qquad \qquad \qquad \qquad \qquad \times \bigg( \delta_{(\bm{p}\bm{q})_{S_{(k-1,L)}}}\delta_{(\bm{p'}\bm{q'})_{S_{(k-1,L)}}} - \frac{1}{2^m} \delta_{(\bm{p}\bm{q'})_{S_{(k-1,L)}}}\delta_{(\bm{p'}\bm{q})_{S_{(k-1,L)}}} \bigg) \\
    &\qquad  \quad+ \mathrm{Tr}\Biggl[\mathrm{Tr}_{S_{(k')}/S_{(k-1,L)}}\left[\left(\ket{\bm{p}}\bra{\bm{q}}_{\bar{S}_{(k-1:\kappa-1,L)}}\otimes \mathbb{I}_{S_{(k-1:\kappa,L)}}\right)\rho_{0}^{(k-1,L)}\right]\\
    & \qquad \qquad \qquad \qquad \times \mathrm{Tr}_{S_{(k')}/S_{(k-1,L)}}\left[\left(\ket{\bm{p'}}\bra{\bm{q'}}_{\bar{S}_{(k-1:\kappa-1,L)}}\otimes \mathbb{I}_{S_{(k-1:\kappa,L)}}\right)\rho_{0}^{(k-1,L)}\right] \Biggr] \\
    & \qquad \qquad \qquad \qquad \qquad \times \bigg( \delta_{(\bm{p}\bm{q'})_{S_{(k-1,L)}}}\delta_{(\bm{p'}\bm{q})_{S_{(k-1,L)}}} - \frac{1}{2^m} \delta_{(\bm{p}\bm{q})_{S_{(k-1,L)}}}\delta_{(\bm{p'}\bm{q'})_{S_{(k-1,L)}}}  \bigg) \Bigg). \\
\end{split}
\end{equation}
Therefore, by applying these equations iteratively, the expectation for the unitary blocks in the last layer, $\braket{k_{Q}^{2}}_{V_{L}(\bm{x},\bm{\theta})}$, is calculated as follows.

\begin{equation}
\begin{split}
   & \braket{k_{Q}^{2}}_{V_{L}(\bm{x},\bm{\theta})} = \frac{1}{\left(2^{2m}-1\right)^{\kappa}} \times \\
   &  \sum_{\bm{p},\bm{q},\bm{p'},\bm{q'}}  \sum_{S_k\in P(S^{(1:\kappa-1,L)})}  \prod_{h\in \bar{S}_{k}\cap S^{(1:\kappa-1,L)}} \left(\delta_{(\bm{p}\bm{q})_h}\delta_{(\bm{p'}\bm{q'})_h}-\frac{1}{2^{m}}\delta_{(\bm{p}\bm{q'})_h}\delta_{(\bm{p'}\bm{q})_h}\right) \prod_{h\in S_{k}} \left(\delta_{(\bm{p}\bm{q'})_h}\delta_{(\bm{p'}\bm{q})_h}-\frac{1}{2^{m}}\delta_{(\bm{p}\bm{q})_h}\delta_{(\bm{p'}\bm{q'})_h}\right) \\
   & \qquad  \qquad \times \Bigg( \mathrm{Tr}\left[\mathrm{Tr}_{\bar{S}_{k}}\left[\rho_{0}^{(1,L)}\right] \mathrm{Tr}_{\bar{S}_{k}}\left[\rho_{0}^{(1,L)}\right] \right] \left(  \mathrm{Tr}\left[\rho_{\bm{x'},\bm{\theta},\bm{p}\bm{q}}\right]\mathrm{Tr}\left[\rho_{\bm{x'},\bm{\theta},\bm{p'}\bm{q'}}\right]-\frac{1}{2^{m}}\mathrm{Tr}\left[\rho_{\bm{x'},\bm{\theta},\bm{p}\bm{q}}\rho_{\bm{x'},\bm{\theta},\bm{p'}\bm{q'}}\right]\right)\\
   & \qquad \qquad \quad +  \mathrm{Tr}\left[\mathrm{Tr}_{\overline{S_{k} \cup S_{(\kappa,L)}}}\left[\rho_{0}^{(1,L)}\right] \mathrm{Tr}_{\overline{S_{k} \cup S_{(\kappa,L)}}}\left[\rho_{0}^{(1,L)}\right] \right] \left(\mathrm{Tr}\left[\rho_{\bm{x'},\bm{\theta},\bm{p}\bm{q}}\rho_{x',\theta,\bm{p'}\bm{q'}}\right]-\frac{1}{2^{m}}\mathrm{Tr}\left[\rho_{\bm{x'},\bm{\theta},\bm{p}\bm{q}}\right]\mathrm{Tr}\left[\rho_{\bm{x'},\bm{\theta},\bm{p'}\bm{q'}}\right]\right) \Bigg),\\
\end{split}
\end{equation}
where $P(S^{(1:\kappa-1,L)})=\{\emptyset,\{S_{(1,L)}\},\{S_{(2,L)}\},\ldots,\{S_{(\kappa-1,L)}\},\{S_{(1,L)}, S_{(2,L)}\}, \{S_{(1,L)}, S_{(3,L)}\},\ldots \}$ is the power set of $S^{(1:\kappa-1,L)}=\{S_{(1,L)},S_{(2,L)},\ldots, S_{(\kappa-1,L)}\}$. 
We also define $\prod_{h=\emptyset} (\cdots) \equiv 1$ and $\mathrm{Tr}_{\emptyset}[\rho_{0}]\equiv \rho_{0}$.

Here $\mathrm{Tr}[\mathrm{Tr}_{\bar{S}_{k}}[\rho_{0}^{(1,L)}] \mathrm{Tr}_{\bar{S}_{k}}[\rho_{0}^{(1,L)}]]$ and $\mathrm{Tr}[\mathrm{Tr}_{\overline{S_{k} \cup S_{(\kappa,L)}}}[\rho_{0}^{(1,L)}] \mathrm{Tr}_{\overline{S_{k} \cup S_{(\kappa,L)}}}[\rho_{0}^{(1,L)}]]$ are regarded as the purity of the quantum state $\rho_{0}^{(1,L)}$ which is partially traced over $\bar{S}_{k}$ and $\overline{S_{k} \cup S_{(\kappa,L)}}$, respectively. 
We remind that $\rho_{0}^{(1,L)}$ is the quantum state obtained by applying the ALA up to $L-1$ layer to the initial state, i.e., $\rho_{0}^{(1,L)}=(\prod_{d=1}^{L-1} V_{d}(\bm{x},\bm{\theta})) \rho_{0} (\prod_{d=1}^{L-1} V_{d}^{\dagger}(\bm{x},\bm{\theta}))$. 
Hence, due to the inequality of the purity, i.e., $1/d\le\mathrm{Tr}\left[\rho^2\right]\le1$ with the $d$-dimentional quantum state $\rho$, we have 

\begin{equation}
\begin{split}
\label{eq:qk_var_middle1}
   & \braket{k_{Q}^{2}}_{U(\bm{x},\bm{\theta})} \le \frac{1}{\left(2^{2m}-1\right)^{\kappa}}\times \\
   &  \sum_{\bm{p},\bm{q},\bm{p'},\bm{q'}}  \sum_{S_k\in P(S^{(1:\kappa-1,L)})}  \prod_{h\in \bar{S}_{k}\cap S^{(1:\kappa-1,L)}} \left(\delta_{(\bm{p}\bm{q})_h}\delta_{(\bm{p'}\bm{q'})_h}-\frac{1}{2^{m}}\delta_{(\bm{p}\bm{q'})_h}\delta_{(\bm{p'}\bm{q})_h}\right) \prod_{h\in S_{k}} \left(\delta_{(\bm{p}\bm{q'})_h}\delta_{(\bm{p'}\bm{q})_h}-\frac{1}{2^{m}}\delta_{(\bm{p}\bm{q})_h}\delta_{(\bm{p'}\bm{q'})_h}\right)  \\
   & \qquad \qquad \quad \times \Bigg( \left(  \mathrm{Tr}\left[\rho_{\bm{x'},\bm{\theta},\bm{p}\bm{q}}\right]\mathrm{Tr}\left[\rho_{\bm{x'},\bm{\theta},\bm{p'}\bm{q'}}\right]-\frac{1}{2^{m}}\mathrm{Tr}\left[\rho_{\bm{x'},\bm{\theta},\bm{p}\bm{q}}\rho_{\bm{x'},\theta,\bm{p'}\bm{q'}}\right]\right)\\
   & \qquad \qquad \qquad \qquad \qquad \qquad +  \left(\mathrm{Tr}\left[\rho_{\bm{x'},\bm{\theta},\bm{p}\bm{q}}\rho_{x',\theta,\bm{p'}\bm{q'}}\right]-\frac{1}{2^{m}}\mathrm{Tr}\left[\rho_{\bm{x'},\bm{\theta},\bm{p}\bm{q}}\right]\mathrm{Tr}\left[\rho_{\bm{x'},\bm{\theta},\bm{p'}\bm{q'}}\right]\right) \Bigg).\\
\end{split}
\end{equation}

Further, using $\rho_{\bm{x'},\bm{\theta},\bm{p}\bm{q}}=\mathrm{Tr}_{\bar{S}_{(\kappa,L)}}\left[\left(\ket{\bm{q}}\bra{\bm{p}}\otimes \mathbb{I}_{S_{(\kappa,L)}}\right)\rho_{\bm{x'},\bm{\theta}}\right]$ and the Kronecker delta regarding bit-strings $\bm{p,q,p',q'}$, we can get the following equality.
\begin{equation}
\begin{split}
   &\sum_{\bm{p},\bm{q},\bm{p'},\bm{q'}}   \mathrm{Tr}\left[\rho_{\bm{x'},\bm{\theta},\bm{p}\bm{q}}\right]\mathrm{Tr}\left[\rho_{\bm{x'},\bm{\theta},\bm{p'}\bm{q'}}\right] \delta_{(\bm{p}\bm{q})_{S_{k}}}\delta_{(\bm{p'}\bm{q'})_{S_{k}}}\delta_{(\bm{p}\bm{q'})_{\bar{S}_{k}}}\delta_{(\bm{p'}\bm{q})_{\bar{S}_{k}}} \\
   &=\sum_{\bm{p},\bm{q},\bm{p'},\bm{q'}}   \mathrm{Tr}\left[\left(\ket{\bm{q}}\bra{\bm{p}}\otimes \mathbb{I}_{S_{(\kappa,L)}}\right)\rho_{\bm{x'},\bm{\theta}}\right]\mathrm{Tr}\left[\left(\ket{\bm{q'}}\bra{\bm{p'}}\otimes \mathbb{I}_{S_{(\kappa,L)}}\right)\rho_{\bm{x'},\bm{\theta}}\right] \delta_{(\bm{p}\bm{q})_{S_{k}}}\delta_{(\bm{p'}\bm{q'})_{S_{k}}}\delta_{(\bm{p}\bm{q'})_{\bar{S}_{k}}}\delta_{(\bm{p'}\bm{q})_{\bar{S}_{k}}} \\
   & = \mathrm{Tr}\left[\mathrm{Tr}_{S_k \cup S_{(\kappa, L)}}\left[\rho_{\bm{x'},\bm{\theta}}\right] \mathrm{Tr}_{S_k \cup S_{(\kappa, L)}}\left[\rho_{\bm{x'},\bm{\theta}}\right] \right],
\end{split}
\end{equation}
\begin{equation}
\begin{split}
   &\sum_{\bm{p},\bm{q},\bm{p'},\bm{q'}}   \mathrm{Tr}\left[\rho_{\bm{x'},\bm{\theta},\bm{p}\bm{q}}\rho_{\bm{x'},\bm{\theta},\bm{p'}\bm{q'}}\right] \delta_{(\bm{p}\bm{q})_{S_{k}}}\delta_{(\bm{p'}\bm{q'})_{S_{k}}}\delta_{(\bm{p}\bm{q'})_{\bar{S}_{k}}}\delta_{(\bm{p'}\bm{q})_{\bar{S}_{k}}} \\
   &= \sum_{\bm{p},\bm{q},\bm{p'},\bm{q'}}   \mathrm{Tr}\left[\mathrm{Tr}_{\bar{S}_{(\kappa,L)}}\left[\left(\ket{\bm{q}}\bra{\bm{p}}\otimes \mathbb{I}_{S_{(\kappa,L)}}\right)\rho_{\bm{x'},\bm{\theta}}\right]\mathrm{Tr}_{\bar{S}_{(\kappa,L)}}\left[\left(\ket{\bm{q'}}\bra{\bm{p'}}\otimes \mathbb{I}_{S_{(\kappa,L)}}\right)\rho_{\bm{x'},\bm{\theta}}\right]\right] \delta_{(\bm{p}\bm{q})_{S_{k}}}\delta_{(\bm{p'}\bm{q'})_{S_{k}}}\delta_{(\bm{p}\bm{q'})_{\bar{S}_{k}}}\delta_{(\bm{p'}\bm{q})_{\bar{S}_{k}}} \\
   & = \mathrm{Tr}\left[\mathrm{Tr}_{S_k}\left[\rho_{\bm{x'},\bm{\theta}}\right] \mathrm{Tr}_{S_k}\left[\rho_{\bm{x'},\bm{\theta}}\right] \right].
\end{split}
\end{equation}
This means that Eq.~\eqref{eq:qk_var_middle1} can also be represented using purity of quantum states.
Therefore we have
\begin{equation}
\begin{split}
   \braket{k_{Q}^{2}}_{U(\bm{x},\bm{\theta})} \le \frac{2^{\kappa}}{\left(2^{2m}-1\right)^{\kappa}}, \\
\end{split}
\end{equation}
where we use $\mathrm{Tr}[\rho^2]\le1$.
Also we assume $2^m\gg1$ here.
Thus, the upper bound of the fidelity-based QK using the ALA is described as
\begin{equation}
    Var\left[k_{Q} \right] \le \frac{2^{\kappa}}{\left(2^{2m}-1\right)^{\kappa}}-\frac{1}{2^{2n}} \approx \frac{1}{2^{n\left(2-\frac{1}{m}\right)}}.
\end{equation}
This result is valid for the case where a mixed state is used as the initial state, since the upper bound is derived by the purity of quantum states.

\section{Proof of Theorem}
Here, we derive the expectation and the variance of the quantum Fisher kernel (QFK) based on the anti-symmetric logarithmic derivative (ALD), termed as the ALDQFK.
The ALD is the quantity that can be solved by the equation,
\begin{equation}
\label{eq:ald}
    \partial_{\theta_l} \rho_{\bm{x},\bm{\theta}} = \frac{1}{2}\left(\rho_{\bm{x},\bm{\theta}}L_{\bm{x},\theta_{l}}^{A}-L_{\bm{x},\theta_{l}}^{A}\rho_{\bm{x},\bm{\theta}}\right),
\end{equation}
where $\partial_{\theta_l}\equiv\partial/\partial \theta_l $ and $\rho_{\bm{x},\bm{\theta}}=U(\bm{x},\bm{\theta})\rho_0 U^{\dagger}(\bm{x},\bm{\theta})$.
While the ALD cannot be determined uniquely, one solution of the equation for unitary process can be obtained;
\begin{equation}
\label{eq:ald_unitary_process}
    L_{\bm{x},\theta_{l}}^{A} = i \left(B_{\bm{x},\theta_l} - \mathrm{Tr}\left[\rho_{\bm{x},\bm{\theta}}B_{\bm{x},\theta_l}\right] \right)
\end{equation}
with $B_{\bm{x},\theta_l}=2i(\partial_{\theta_l}U(\bm{x},\bm{\theta}))U^{\dagger}(\bm{x},\bm{\theta})$.
Then, using the ALD, the ALDQFK can be defined as follows;
\begin{equation}
\label{eq:def_aldqfk}
    k_{QF}^{A}(\bm{x},\bm{x'}) =  -\frac{1}{2}\sum_{i,j} \mathcal{F}_{A,i,j}^{-1} \mathrm{Tr}\left[ \rho_{0} \left\{L_{\bm{x},\theta_i}^{A,eff}, L_{\bm{x'},\theta_j}^{A,eff}\right\} \right],
\end{equation}
where $\{\cdot,\cdot\}$ is the anti-commutator, $\mathcal{F}_{A}$ is the ALD-based quantum Fisher information matrix (QFIM) and $L_{\bm{x},\theta_i}^{A,eff}=U^{\dagger}
(\bm{x},\bm{\theta})L_{\bm{x},\theta_j}^{A}U(\bm{x},\bm{\theta}) $ is the effective ALD operator.
Here, we exploit the form of the ALD-based QFIM under unitary process; 
\begin{equation}
\begin{split}
    \langle L_{\bm{x},\theta_i}^{A},L_{\bm{x},\theta_j}^{A}\rangle_{\rho_{\bm{x},\bm{\theta}}}
    &=-\frac{1}{2} \mathrm{Tr}[\rho_{\bm{x},\bm{\theta}}\{L_{\bm{x},\theta_i}^{A}, L_{\bm{x},\theta_j}^{A}\}]\\
    &=-\frac{1}{2} \mathrm{Tr}[\rho_{0}\{L_{\bm{x},\theta_i}^{A, eff}, L_{\bm{x},\theta_j}^{A,eff}\}].
\end{split}
\end{equation}

In this work, we set the QFIM as the identity matrix, i.e.,  $\mathcal{F}=\mathbb{I}$, because the QFIM is computationally demanding and has been suggested to be less significant in Ref. \cite{jaakkola1998exploiting}.
In addition, the term $\mathrm{Tr}[\rho_{\bm{x},\bm{\theta}}B_{\bm{x},\theta_l}]$ in the ALD of Eq.\eqref{eq:ald_unitary_process} is ignored so that the ALDQFK with the same inputs $k_{QF}^{A}(\bm{x},\bm{x})$ is constant for any $\bm{x}$.
Also, since we assume each parameter $\theta$ is in the angle of the rotation gate, $\exp(-i\theta \sigma/2)$ with $\sigma\in\{X,Y,Z\}$, we can rewrite the ALDQFK as
\begin{equation}
\begin{split}
\label{eq:aldqfk_rewritten}
    k_{QF}^{A}(\bm{x},\bm{x'})  &=  -\frac{1}{2}\sum_{i} \mathrm{Tr}\left[ \rho_{0} \left\{L_{\bm{x},\theta_i}^{A,eff}, L_{\bm{x'},\theta_i}^{A,eff}\right\} \right]\\
    &= \frac{1}{2}\sum_{i} \mathrm{Tr}\left[ \rho_{0} \left\{\tilde{B}_{\bm{x},\theta_{i}} ,\tilde{B}_{\bm{x'},\theta_{i}}\right\} \right],\\
\end{split}
\end{equation}
where $\tilde{B}_{\bm{x},\theta_{i}}=U_{1:i}^{\dagger}(\bm{x},\bm{\theta})B_{\theta_{i}}U_{1:i}(\bm{x},\bm{\theta})$ with the Pauli operator $B_{\theta_{l}}$ of the rotation gate containing the $l$-th parameter.
Here, $U_{i:j}(\bm{x},\bm{\theta})$ denotes
a bunch of unitary gates from $U_{i}(\bm{x},\theta_{i})$ to $U_{j}(\bm{x},\theta_{j})$, assuming the quantum circuit can be decomposed as $U(\bm{x},\bm{\theta})=U_{D}(\bm{x},\theta_{D})\ldots U_{2}(\bm{x},\theta_{2}) U_{1}(\bm{x},\theta_{1})$.

In the analysis shown below, we only focus on $\mathrm{Tr}[ \rho_{0} \{\tilde{B}_{\bm{x},\theta_{i}} ,\tilde{B}_{\bm{x'},\theta_{j}}\} ]/2$, since diagnosing the quantity is enough to see the tendency of the vanishing similarity issue in the ALDQFK.
We here note that the initial state $\rho_0$ is a pure state.
Additionally, regardless of the position of the gate, $i\in\{1,\ldots,D\}$ , we here assume $U_{1:i}(\bm{x},\bm{\theta})$ is a $t$-design for the case where the random quantum circuits acting on all $n$ qubits are used. 
As for the case where the ALA is used, assuming the $i$-th parameter $\theta_{i}$ is located in the $k$-th unitary block in the $d$-th layer of the circuits, $W_{k,d}(\bm{x},\bm{\theta}_{k,d})$, we decompose the circuit as
\begin{equation}
    U_{1:i}(\bm{x},\bm{\theta}) = \tilde{W}_{k,d}(\bm{x},\theta_{i})V_{r}(\bm{x},\bm{\theta}),
\end{equation}
where $\tilde{W}_{k,d}(\bm{x},\theta_{i})$ is the all gates up to the one containing $i$-th parameter within $W_{k,d}(\bm{x},\bm{\theta}_{k,d})$, and $V_{r}(\bm{x},\bm{\theta})$ is the all unitary blocks in the light-cone of $W_{k,d}(\bm{x},\bm{\theta}_{k,d})$, as in Figure 3 of the main test.
Then we assume not only the unitary blocks in $V_{r}(\bm{x},\bm{\theta})$ but also $\tilde{W}_{k,d}(\bm{x},\theta_{i})$ for arbitrary $i,k$ and $d$ are $t$-designs.

\subsection{Case (1): The random quantum circuit acting on all $n$ qubits for the ALDQFK}

Here, considering the random quantum circuits acting on all $n$ qubits, we calculate the expectation $\braket{k_{QF}^{A}}$ and the variance $Var\left[k_{QF}^{A}\right]$ of the ALDQFK in Eq.\eqref{eq:def_aldqfk}.
In particular, we focus on $\mathrm{Tr}[ \rho_{0} \{\tilde{B}_{\bm{x},\theta_{i}} ,\tilde{B}_{\bm{x'},\theta_{j}}\} ]/2$ in the ALDQFK, as we stated.

At first, we work on the expectation of the ALDQFK.
In this case, we assume that either $U_{1:i}(\bm{x},\bm{\theta})$ or $U_{1:i}(\bm{x’},\bm{\theta})$ is a $t$-design with $t\ge1$.
However, due to the symmetry of $\tilde{B}_{\bm{x},\theta_{i}}$ and $\tilde{B}_{\bm{x'},\theta_{i}}$, we consider only $U_{1:i}(\bm{x},\bm{\theta})$ is a $t$-design here.
Then the expectation of the ALDQFK is calculated as follows.

\begin{equation}
\begin{split}
    \braket{k_{QF}^{A}}_{U_{1:i}(\bm{x},\bm{\theta})}&=\frac{1}{2}\left\langle\mathrm{Tr}\left[ \rho_{0} U_{1:i}^{\dagger}(\bm{x},\bm{\theta})B_{\theta_{i}}U_{1:i}(\bm{x},\bm{\theta}) \tilde{B}_{\bm{x'},\theta_{i}}  \right]\right\rangle_{U_{1:i}(\bm{x},\bm{\theta})} +\frac{1}{2} \left\langle\mathrm{Tr}\left[ \rho_{0}  \tilde{B}_{\bm{x'},\theta_{i}} U_{1:i}^{\dagger}(\bm{x},\bm{\theta})B_{\theta_{i}}U_{1:i}(\bm{x},\bm{\theta}) \right]\right\rangle_{U_{1:i}(\bm{x},\bm{\theta})} \\
    &=\frac{1}{2}\left\langle\mathrm{Tr}\left[ U_{1:i}^{\dagger}(\bm{x},\bm{\theta})B_{\theta_{i}}U_{1:i}(\bm{x},\bm{\theta}) \tilde{B}_{\bm{x'},\theta_{i}} \rho_{0} \right]\right\rangle_{U_{1:i}(\bm{x},\bm{\theta})} +\frac{1}{2} \left\langle\mathrm{Tr}\left[ U_{1:i}^{\dagger}(\bm{x},\bm{\theta})B_{\theta_{i}}U_{1:i}(\bm{x},\bm{\theta})\rho_{0} \tilde{B}_{\bm{x'},\theta_{i}}  \right]\right\rangle_{U_{1:i}(\bm{x},\bm{\theta})} \\
    &= \frac{1}{2\cdot2^{n}} \mathrm{Tr} \left[B_{\theta_{i}}\right] \mathrm{Tr} \left[\tilde{B}_{\bm{x'},\theta_{i}} \rho_{0}\right] + \frac{1}{2\cdot2^{n}} \mathrm{Tr} \left[B_{\theta_{i}}\right] \mathrm{Tr} \left[\rho_{0} \tilde{B}_{\bm{x'},\theta_{i}} \right]\\
    &= 0,\\
\end{split}
\end{equation}
where Lemma \ref{lem1} and the traceless property of the Pauli operators are utilized.

Then we calculate the variance.
The variance $Var\left[k_{QF}^{A}\right]$ can be obtained by $Var\left[k_{QF}^{A}\right]=\braket{{k_{QF}^{A}}^{2}}-\braket{k_{QF}^{A}}^2$.
Since $\braket{k_{QF}^{A}}=0$, all we need to do is calculate $Var\left[k_{QF}^{A}\right]=\braket{{k_{QF}^{A}}^{2}}$.
Here, we assume that $U_{1:i}(\bm{x},\bm{\theta})$ and $U_{1:i}(\bm{x’},\bm{\theta})$ are $t$-designs with $t\ge2$, and work on the integration over $U_{1:i}(\bm{x},\bm{\theta})$ first.
The expectation $\braket{{k_{QF}^{A}}^{2}}_{ U_{1:i}(\bm{x},\bm{\theta})}$ can be expressed as 

\begin{equation}
\begin{split}
\label{eq:var_aldqfk_rqc_origin}
    \braket{{k_{QF}^{A}}^{2}}_{U_{1:i}(\bm{x},\bm{\theta})}&=\frac{1}{4}\left\langle\left(\mathrm{Tr}\left[ \rho_{0} U_{1:i}^{\dagger}(\bm{x},\bm{\theta})B_{\theta_{i}}U_{1:i}(\bm{x},\bm{\theta}) \tilde{B}_{\bm{x'},\theta_{i}}  \right] +  \mathrm{Tr}\left[ U_{1:i}^{\dagger}(\bm{x},\bm{\theta})B_{\theta_{i}}U_{1:i}(\bm{x},\bm{\theta})\rho_{0} \tilde{B}_{\bm{x'},\theta_{i}}  \right] \right)^2\right\rangle_{U_{1:i}(\bm{x},\bm{\theta})}\\
    &= \frac{1}{4}\left\langle\mathrm{Tr}\left[ \rho_{0} U_{1:i}^{\dagger}(\bm{x},\bm{\theta})B_{\theta_{i}}U_{1:i}(\bm{x},\bm{\theta}) \tilde{B}_{\bm{x'},\theta_{i}}  \right]\mathrm{Tr}\left[ \rho_{0} U_{1:i}^{\dagger}(\bm{x},\bm{\theta})B_{\theta_{i}}U_{1:i}(\bm{x},\bm{\theta}) \tilde{B}_{\bm{x'},\theta_{i}}  \right]\right\rangle_{U_{1:i}(\bm{x},\bm{\theta})}\\
    &\quad +\frac{1}{2} \left\langle\mathrm{Tr}\left[ \rho_{0} U_{1:i}^{\dagger}(\bm{x},\bm{\theta})B_{\theta_{i}}U_{1:i}(\bm{x},\bm{\theta}) \tilde{B}_{\bm{x'},\theta_{i}}  \right] \mathrm{Tr}\left[ U_{1:i}^{\dagger}(\bm{x},\bm{\theta})B_{\theta_{i}}U_{1:i}(\bm{x},\bm{\theta})\rho_{0} \tilde{B}_{\bm{x'},\theta_{i}}  \right]\right\rangle_{U_{1:i}(\bm{x},\bm{\theta})}\\
    &\quad +\frac{1}{4} \left\langle\mathrm{Tr}\left[ U_{1:i}^{\dagger}(\bm{x},\bm{\theta})B_{\theta_{i}}U_{1:i}(\bm{x},\bm{\theta})\rho_{0} \tilde{B}_{\bm{x'},\theta_{i}}  \right]\mathrm{Tr}\left[ U_{1:i}^{\dagger}(\bm{x},\bm{\theta})B_{\theta_{i}}U_{1:i}(\bm{x},\bm{\theta})\rho_{0} \tilde{B}_{\bm{x'},\theta_{i}}  \right]\right\rangle_{U_{1:i}(\bm{x},\bm{\theta})}\\
    &= Var_{r,1} + Var_{r,2} + Var_{r,3},\\ 
\end{split}
\end{equation}
where $Var_{r,i}$ represents the $i$-th term of the right-hand side of the second equality.
Thus, we calculate these terms to get the variance of the ALDQFK.

The first term can be obtained as
\begin{equation}
\begin{split}
    Var_{r,1} &= \frac{1}{4}\left\langle\mathrm{Tr}\left[ \rho_{0} U_{1:i}^{\dagger}(\bm{x},\bm{\theta})B_{\theta_{i}}U_{1:i}(\bm{x},\bm{\theta}) \tilde{B}_{\bm{x'},\theta_{i}}  \right]\mathrm{Tr}\left[ \rho_{0} U_{1:i}^{\dagger}(\bm{x},\bm{\theta})B_{\theta_{i}}U_{1:i}(\bm{x},\bm{\theta}) \tilde{B}_{\bm{x'},\theta_{i}}  \right]\right\rangle_{U_{1:i}(\bm{x},\bm{\theta})}\\
    &= \frac{1}{4\cdot\left(2^{2n}-1\right)}\left(\mathrm{Tr}\left[B_{\theta_{i}}\right]\mathrm{Tr}\left[\tilde{B}_{\bm{x'},\theta_{i}} \rho_{0}\right]\mathrm{Tr}\left[B_{\theta_{i}}\right]\mathrm{Tr}\left[\tilde{B}_{\bm{x'},\theta_{i}} \rho_{0}\right]+\mathrm{Tr}\left[B_{\theta_{i}}^{2}\right]\mathrm{Tr}\left[\left(\tilde{B}_{\bm{x'},\theta_{i}} \rho_{0}\right)^2\right]\right)\\
    & \quad -\frac{1}{4\cdot2^{n}\left(2^{2n}-1\right)}\left(\mathrm{Tr}\left[B_{\theta_{i}}\right]\mathrm{Tr}\left[B_{\theta_{i}}\right]\mathrm{Tr}\left[\left(\tilde{B}_{\bm{x'},\theta_{i}} \rho_{0}\right)^2\right]+\mathrm{Tr}\left[B_{\theta_{i}}^{2}\right]\mathrm{Tr}\left[\tilde{B}_{\bm{x'},\theta_{i}} \rho_{0}\right]\mathrm{Tr}\left[\tilde{B}_{\bm{x'},\theta_{i}} \rho_{0}\right]\right)\\
    &= \frac{2^{n}}{4\cdot\left(2^{2n}-1\right)}\left(\mathrm{Tr}\left[\left(\tilde{B}_{\bm{x'},\theta_{i}} \rho_{0}\right)^2\right]-\frac{1}{2^n}\left(\mathrm{Tr}\left[\tilde{B}_{\bm{x'},\theta_{i}} \rho_{0}\right]\right)^2\right) \\
    &= \frac{2^{n}}{4\cdot\left(2^{2n}-1\right)}\left(1-\frac{1}{2^n}\right)\left(\mathrm{Tr}\left[\tilde{B}_{\bm{x'},\theta_{i}} \rho_{0}\right]\right)^2. \\
\end{split}
\end{equation}
where we exploit Lemma \ref{lem3} and the properties of the Pauli operators, $\mathrm{Tr}[B]=0$ and $\mathrm{Tr}[B^2]=2^{n}$. 
Also, due to the fact that the initial state is a pure state, we use the equality $\mathrm{Tr}[(\tilde{B}_{\bm{x'},\theta_{i}} \rho_{0})^2]=(\mathrm{Tr}[\tilde{B}_{\bm{x'},\theta_{i}} \rho_{0}])^2$.

Similarly, the second and the third terms are calculated in the following way.
\begin{equation}
\begin{split}
    Var_{r,2} &= \frac{1}{2\cdot\left(2^{2n}-1\right)}\left(\mathrm{Tr}\left[B_{\theta_{i}}\right]\mathrm{Tr}\left[\tilde{B}_{\bm{x'},\theta_{i}} \rho_{0}\right]\mathrm{Tr}\left[B_{\theta_{i}}\right]\mathrm{Tr}\left[\rho_{0}\tilde{B}_{\bm{x'},\theta_{i}} \right]+\mathrm{Tr}\left[B_{\theta_{i}}^{2}\right]\mathrm{Tr}\left[\tilde{B}_{\bm{x'},\theta_{i}}\rho_{0}\rho_{0}\tilde{B}_{\bm{x'},\theta_{i}}\right]\right)\\
    & \quad -\frac{1}{2\cdot2^{n}\left(2^{2n}-1\right)}\left(\mathrm{Tr}\left[B_{\theta_{i}}\right]\mathrm{Tr}\left[B_{\theta_{i}}\right]\mathrm{Tr}\left[\tilde{B}_{\bm{x'},\theta_{i}}^2\rho_{0}\right]+\mathrm{Tr}\left[B_{\theta_{i}}^{2}\right]\mathrm{Tr}\left[\rho_{0}\tilde{B}_{\bm{x'},\theta_{i}} \right]\mathrm{Tr}\left[\tilde{B}_{\bm{x'},\theta_{i}} \rho_{0}\right]\right)\\
    &= \frac{2^{n}}{2\cdot\left(2^{2n}-1\right)}\left(\mathrm{Tr}\left[\tilde{B}_{\bm{x'},\theta_{i}}^2\rho_{0}\right]-\frac{1}{2^n}\left(\mathrm{Tr}\left[\tilde{B}_{\bm{x'},\theta_{i}} \rho_{0}\right]\right)^2\right), \\
\end{split}
\end{equation}
\begin{equation}
\begin{split}
    Var_{r,3} &= \frac{1}{4\cdot\left(2^{2n}-1\right)}\left(\mathrm{Tr}\left[B_{\theta_{i}}\right]\mathrm{Tr}\left[\tilde{B}_{\bm{x'},\theta_{i}} \rho_{0}\right]\mathrm{Tr}\left[B_{\theta_{i}}\right]\mathrm{Tr}\left[\tilde{B}_{\bm{x'},\theta_{i}} \rho_{0}\right]+\mathrm{Tr}\left[B_{\theta_{i}}^{2}\right]\mathrm{Tr}\left[\left(\tilde{B}_{\bm{x'},\theta_{i}} \rho_{0}\right)^2\right]\right)\\
    & \quad -\frac{1}{4\cdot2^{n}\left(2^{2n}-1\right)}\left(\mathrm{Tr}\left[B_{\theta_{i}}\right]\mathrm{Tr}\left[B_{\theta_{i}}\right]\mathrm{Tr}\left[\left(\tilde{B}_{\bm{x'},\theta_{i}} \rho_{0}\right)^2\right]+\mathrm{Tr}\left[B_{\theta_{i}}^{2}\right]\mathrm{Tr}\left[\tilde{B}_{\bm{x'},\theta_{i}}\rho_{0} \right]\mathrm{Tr}\left[\tilde{B}_{\bm{x'},\theta_{i}} \rho_{0}\right]\right)\\
    &= \frac{2^{n}}{4\cdot\left(2^{2n}-1\right)}\left(\mathrm{Tr}\left[\left(\tilde{B}_{\bm{x'},\theta_{i}} \rho_{0}\right)^2\right]-\frac{1}{2^n}\left(\mathrm{Tr}\left[\tilde{B}_{\bm{x'},\theta_{i}} \rho_{0}\right]\right)^2\right) \\
    &= \frac{2^{n}}{4\cdot\left(2^{2n}-1\right)}\left(1-\frac{1}{2^n}\right)\left(\mathrm{Tr}\left[\tilde{B}_{\bm{x'},\theta_{i}} \rho_{0}\right]\right)^2. \\
\end{split}
\end{equation}

Thus we have
\begin{equation}
\label{eq:pf_2_1_mid_v}
\begin{split}
    \braket{{k_{QF}^{A}}^{2}}_{U_{1:i}(\bm{x},\bm{\theta})} &= Var_{r,1} + Var_{r,2} + Var_{r,3}  \\
    &= \frac{2^{n}}{2^{2n}-1}\cdot \frac{1}{2}\left(\left(1-\frac{1}{2^n}\right)\left(\mathrm{Tr}\left[\tilde{B}_{\bm{x'},\theta_{i}} \rho_{0}\right]\right)^2 + \left(\mathrm{Tr}\left[\tilde{B}_{\bm{x'},\theta_{i}}^2\rho_{0}\right]-\frac{1}{2^n}\left(\mathrm{Tr}\left[\tilde{B}_{\bm{x'},\theta_{i}} \rho_{0}\right]\right)^2\right) \right).\\
\end{split}
\end{equation}

Next we integrate the quantity over $U_{1:i}(\bm{x'},\bm{\theta})$.
Since $U_{1:i}(\bm{x'},\bm{\theta})$ is involved in $\mathrm{Tr}[\tilde{B}_{\bm{x'},\theta_{i}}^2\rho_{0}]$ and $(\mathrm{Tr}[\tilde{B}_{\bm{x'},\theta_{i}} \rho_{0}])^2$ in Eq.~\eqref{eq:pf_2_1_mid_v}, we consider these terms.
The expectation of these terms are calculated as
\begin{equation}
\begin{split}
   \left\langle \mathrm{Tr}\left[\tilde{B}_{\bm{x'},\theta_{i}}^2\rho_{0}\right]\right\rangle_{U_{1:i}(\bm{x'},\bm{\theta})}&=\left\langle\mathrm{Tr}\left[U_{1:i}^{\dagger}(\bm{x'},\bm{\theta})B_{\theta_{i}}^{2}U_{1:i}(\bm{x'},\bm{\theta})\rho_{0}\right]\right\rangle_{U_{1:i}(\bm{x'},\bm{\theta})}\\
    &=\frac{1}{2^{n}} \mathrm{Tr}\left[B_{\theta_{i}}^{2}\right]\mathrm{Tr}\left[\rho_{0}\right] \\
    &= 1,\\
\end{split}
\end{equation}
\begin{equation}
\begin{split}
    \left\langle\left(\mathrm{Tr}\left[\tilde{B}_{\bm{x'},\theta_{i}} \rho_{0}\right]\right)^2 \right\rangle_{U_{1:i}(\bm{x'},\bm{\theta})}&=\left\langle \mathrm{Tr}\left[U_{1:i}^{\dagger}(\bm{x'},\bm{\theta})B_{\theta_{i}}U_{1:i}(\bm{x'},\bm{\theta})\rho_{0}\right]\mathrm{Tr}\left[U_{1:i}^{\dagger}(\bm{x'},\bm{\theta})B_{\theta_{i}}U_{1:i}(\bm{x'},\bm{\theta})\rho_{0}\right] \right\rangle_{U_{1:i}(\bm{x'},\bm{\theta})}\\
    &= \frac{1}{2^{2n}-1}\left(\mathrm{Tr}\left[B_{\theta_{i}}\right]\mathrm{Tr}\left[\rho_{0}\right]\mathrm{Tr}\left[B_{\theta_{i}}\right]\mathrm{Tr}\left[\rho_{0}\right]+\mathrm{Tr}\left[B_{\theta_{i}}^2\right]\mathrm{Tr}\left[\rho_{0}^2\right]\right)\\
    & \quad -\frac{1}{2^{n}\left(2^{2n}-1\right)}\left(\mathrm{Tr}\left[B_{\theta_{i}}\right]\mathrm{Tr}\left[B_{\theta_{i}}\right]\mathrm{Tr}\left[\rho_{0}^2\right]+\mathrm{Tr}\left[B_{\theta_{i}}^2\right]\mathrm{Tr}\left[\rho_{0}\right]\mathrm{Tr}\left[\rho_{0}\right]\right)\\
    &= \frac{1}{2^{2n}-1}\left(2^{n}-1\right)\\
    &= \frac{1}{2^{n}+1}.
\end{split}
\end{equation}
Here we utilize Lemmas \ref{lem1} and \ref{lem3} and the property of the Pauli operators and the pure state. 
Therefore, substituting the terms into Eq.~\eqref{eq:pf_2_1_mid_v}, we have
\begin{equation}
\begin{split}
    \braket{{k_{QF}^{A}}^{2}} &= \frac{2^{n}}{2\left(2^{2n}-1\right)}\left(1+\frac{2^n-2}{2^n\left(2^n+1\right)} \right) \approx \frac{1}{2^{n+1}}.\\ 
\end{split}
\end{equation}

\subsection{Case (2): The ALA for the ALDQFK}
We here calculate the expectation and the lower bound of the variance for the ALDQFK using the ALA.
Specifically, we assume both $\tilde{W}_{k,d}(\bm{x},\theta_{i})$ and $\tilde{W}_{k,d}(\bm{x'},\theta_{i})$ as well as all unitary blocks are $t$-designs.
We remind that $\tilde{W}_{k,d}(\bm{x},\theta_{i})$ represents the all gates up to the one containing $i$-th parameter within $W_{k,d}(\bm{x},\bm{\theta}_{k,d})$.

Firstly, we calculate the expectation of the ALDQFK.
The expectation for the unitary $U_{1:i}(\bm{x},\bm{\theta})$ can be obtained by integrating the quantity over each unitary block, i.e., $\braket{k_{QF}^{A}}_{U_{1:i}(\bm{x},\bm{\theta})}=\braket{k_{QF}^{A}}_{\tilde{W}_{k,d}(\bm{x},\theta_{i})V_{r}(\bm{x},\bm{\theta})}$.
Hence, we start with the $\tilde{W}_{k,d}(\bm{x},\theta_{i})$.
Then, the expectation can be obtained as follows; 
\begin{equation}
\begin{split}
    \braket{k_{QF}^{A}}_{\tilde{W}_{k,d}(\bm{x},\theta_{i})}&=\frac{1}{2}\left\langle\mathrm{Tr}\left[ \rho_{0} V_{r}^{\dagger}(\bm{x},\bm{\theta})\tilde{W}_{k,d}^{\dagger}(\bm{x},\theta_{i})B_{\theta_{i}}\tilde{W}_{k,d}(\bm{x},\theta_{i})V_{r}(\bm{x},\bm{\theta}) \tilde{B}_{\bm{x'},\theta_{i}}  \right]\right\rangle_{\tilde{W}_{k,d}(\bm{x},\theta_{i})} \\ 
    & \qquad +\frac{1}{2} \left\langle\mathrm{Tr}\left[ \rho_{0}  \tilde{B}_{\bm{x'},\theta_{i}} V_{r}^{\dagger}(\bm{x},\bm{\theta})\tilde{W}_{k,d}^{\dagger}(\bm{x},\theta_{i})B_{\theta_{i}}\tilde{W}_{k,d}(\bm{x},\theta_{i})V_{r}(\bm{x},\bm{\theta}) \right]\right\rangle_{\tilde{W}_{k,d}(\bm{x},\theta_{i})} \\
    &=\frac{1}{2}\left\langle\mathrm{Tr}\left[ \tilde{W}_{k,d}(\bm{x},\theta_{i})V_{r}(\bm{x},\bm{\theta}) \tilde{B}_{\bm{x'},\theta_{i}}  \rho_{0} V_{r}^{\dagger}(\bm{x},\bm{\theta})\tilde{W}_{k,d}^{\dagger}(\bm{x},\theta_{i})B_{\theta_{i}}\right]\right\rangle_{\tilde{W}_{k,d}(\bm{x},\theta_{i})} \\ 
    & \qquad +\frac{1}{2} \left\langle\mathrm{Tr}\left[ \tilde{W}_{k,d}(\bm{x},\theta_{i})V_{r}(\bm{x},\bm{\theta}) \rho_{0}  \tilde{B}_{\bm{x'},\theta_{i}} V_{r}^{\dagger}(\bm{x},\bm{\theta})\tilde{W}_{k,d}^{\dagger}(\bm{x},\theta_{i})B_{\theta_{i}}\right]\right\rangle_{\tilde{W}_{k,d}(\bm{x},\theta_{i})} \\
    &= \frac{1}{2\cdot2^{m}} \mathrm{Tr}\left[ \mathrm{Tr}_{S_{(k,d)}} \left[V_{r}(\bm{x},\bm{\theta}) \rho_{0}  \tilde{B}_{\bm{x'},\theta_{i}} V_{r}^{\dagger}(\bm{x},\bm{\theta})\right]  \mathrm{Tr}_{S_{(k,d)}} \left[B_{\theta_{i}}\right] \right] \\
    & \qquad + \frac{1}{2\cdot2^{m}} \mathrm{Tr}\left[ \mathrm{Tr}_{S_{(k,d)}} \left[V_{r}(\bm{x},\bm{\theta}) \rho_{0}  \tilde{B}_{\bm{x'},\theta_{i}} V_{r}^{\dagger}(\bm{x},\bm{\theta}) \right]  \mathrm{Tr}_{S_{(k,d)}} \left[B_{\theta_{i}}\right] \right]\\
    &= 0, \\
\end{split}
\end{equation}
where $\mathrm{Tr}_{S_{(k,d)}}$ represents a partial trace over the space $S_{(k,d)}$ on which $\tilde{W}_{k,d}(\bm{x},\theta_{i})$ acts.
Also we utilize Lemma \ref{lem4} and the traceless property of the Pauli operators, $\mathrm{Tr}[B]=0$, due to the fact that $B_{\theta_{i}}$ acts on $S_{(k,d)}$.
This means that the expectation $\braket{k_{QF}^{A}}$ is zero irrespective of the remaining unitary blocks in $U_{1:i}(\bm{x},\bm{\theta})$, $\tilde{W}_{k,d}(\bm{x},\theta_{i})$ and $U_{1:i}(\bm{x'},\bm{\theta})$.

Next, we calculate the variance.
The variance $Var\left[k_{QF}^{A}\right]$ can be obtained by calculating $\braket{{k_{QF}^{A}}^{2}}$, because $\braket{k_{QF}^{A}}=0$. 
Here, we first focus on the integration over $\tilde{W}_{k,d}(\bm{x},\theta_{i})$.
Then, we have
\begin{equation}
\begin{split}
\label{eq:var_aldqfk_ala_origin}
    \braket{{k_{QF}^{A}}^{2}}_{\tilde{W}_{k,d}(\bm{x},\theta_{i})}&=\frac{1}{4}\biggl\langle \biggl(\mathrm{Tr}\left[ \rho_{0} V_{r}^{\dagger}(\bm{x},\bm{\theta})\tilde{W}_{k,d}^{\dagger}(\bm{x},\theta_{i})B_{\theta_{i}}\tilde{W}_{k,d}(\bm{x},\theta_{i})V_{r}(\bm{x},\bm{\theta}) \tilde{B}_{\bm{x'},\theta_{i}}  \right] \\
    & \qquad \qquad +  \mathrm{Tr}\left[\rho_{0}  \tilde{B}_{\bm{x'},\theta_{i}} V_{r}^{\dagger}(\bm{x},\bm{\theta})\tilde{W}_{k,d}^{\dagger}(\bm{x},\theta_{i})B_{\theta_{i}}\tilde{W}_{k,d}(\bm{x},\theta_{i})V_{r}(\bm{x},\bm{\theta}) \right] \biggr)^2 \biggr\rangle_{\tilde{W}_{k,d}(\bm{x},\theta_{i})} \\
    &= \frac{1}{4} \biggl\langle \mathrm{Tr}\left[ \rho_{0} V_{r}^{\dagger}(\bm{x},\bm{\theta})\tilde{W}_{k,d}^{\dagger}(\bm{x},\theta_{i})B_{\theta_{i}}\tilde{W}_{k,d}(\bm{x},\theta_{i})V_{r}(\bm{x},\bm{\theta}) \tilde{B}_{\bm{x'},\theta_{i}}  \right] \\
    & \qquad \qquad \times  \mathrm{Tr}\left[ \rho_{0} V_{r}^{\dagger}(\bm{x},\bm{\theta})\tilde{W}_{k,d}^{\dagger}(\bm{x},\theta_{i})B_{\theta_{i}}\tilde{W}_{k,d}(\bm{x},\theta_{i})V_{r}(\bm{x},\bm{\theta}) \tilde{B}_{\bm{x'},\theta_{i}}  \right]  \biggr\rangle_{\tilde{W}_{k,d}(\bm{x},\theta_{i})} \\
    & \qquad + \frac{1}{2} \biggl\langle \mathrm{Tr}\left[ \rho_{0} V_{r}^{\dagger}(\bm{x},\bm{\theta})\tilde{W}_{k,d}^{\dagger}(\bm{x},\theta_{i})B_{\theta_{i}}\tilde{W}_{k,d}(\bm{x},\theta_{i})V_{r}(\bm{x},\bm{\theta})\tilde{B}_{\bm{x'},\theta_{i}}  \right] \\
    & \qquad \qquad \times  \mathrm{Tr}\left[\rho_{0}  \tilde{B}_{\bm{x'},\theta_{i}} V_{r}^{\dagger}(\bm{x},\bm{\theta})\tilde{W}_{k,d}^{\dagger}(\bm{x},\theta_{i})B_{\theta_{i}}\tilde{W}_{k,d}(\bm{x},\theta_{i})V_{r}(\bm{x},\bm{\theta}) \right] \biggr\rangle_{\tilde{W}_{k,d}(\bm{x},\theta_{i})} \\
    & \qquad + \frac{1}{4} \biggl\langle \mathrm{Tr}\left[\rho_{0}  \tilde{B}_{\bm{x'},\theta_{i}} V_{r}^{\dagger}(\bm{x},\bm{\theta})\tilde{W}_{k,d}^{\dagger}(\bm{x},\theta_{i})B_{\theta_{i}}\tilde{W}_{k,d}(\bm{x},\theta_{i})V_{r}(\bm{x},\bm{\theta}) \right] \\
    & \qquad \qquad \times  \mathrm{Tr}\left[\rho_{0}  \tilde{B}_{\bm{x'},\theta_{i}} V_{r}^{\dagger}(\bm{x},\bm{\theta})\tilde{W}_{k,d}^{\dagger}(\bm{x},\theta_{i})B_{\theta_{i}}\tilde{W}_{k,d}(\bm{x},\theta_{i})V_{r}(\bm{x},\bm{\theta}) \right] \biggr\rangle_{\tilde{W}_{k,d}(\bm{x},\theta_{i})}. \\
    &= Var_{a,1} + Var_{a,2} + Var_{a,3},\\
\end{split}
\end{equation}
where $Var_{a,i}$ is the $i$-th term of the right-hand side of the second equality.

We start with the integration of the first term and then we have
\begin{equation}
\begin{split}
    & Var_{a,1} \\
    & = \frac{1}{4} \biggl\langle \mathrm{Tr}\left[ \tilde{W}_{k,d}(\bm{x},\theta_{i})\tilde{\rho}_{0,B_{l}}^{(1)}\tilde{W}_{k,d}^{\dagger}(\bm{x},\theta_{i})B_{\theta_{i}}  \right]  \mathrm{Tr}\left[ \tilde{W}_{k,d}(\bm{x},\theta_{i})\tilde{\rho}_{0,B_{l}}^{(1)}\tilde{W}_{k,d}^{\dagger}(\bm{x},\theta_{i})B_{\theta_{i}} \right]  \biggr\rangle_{\tilde{W}_{k,d}(\bm{x},\theta_{i})} \\
    &= \frac{1}{4}\sum_{\bm{p},\bm{q},\bm{p'},\bm{q'}} \biggl\langle \mathrm{Tr}\left[ \tilde{W}_{k,d}(\bm{x},\theta_{i})\tilde{\rho}_{0,B_{l},\bm{q}\bm{p}}^{(1)}\tilde{W}_{k,d}^{\dagger}(\bm{x},\theta_{i})B_{\theta_{i},\bm{p}\bm{q}}  \right] \mathrm{Tr}\left[ \tilde{W}_{k,d}(\bm{x},\theta_{i})\tilde{\rho}_{0,B_{l},\bm{q'}\bm{p'}}^{(1)}\tilde{W}_{k,d}^{\dagger}(\bm{x},\theta_{i})B_{\theta_{i},\bm{p'}\bm{q'}}  \right]\biggr\rangle_{\tilde{W}_{k,d}(\bm{x},\theta_{i})} \\
    &= \frac{1}{4} \sum_{\bm{p},\bm{q},\bm{p'},\bm{q'}} \Biggl( \frac{1}{2^{2m}-1} \left(\mathrm{Tr}\left[\tilde{\rho}_{0,B_{l},\bm{q}\bm{p}}^{(1)}\right]\mathrm{Tr}\left[B_{\theta_{i},\bm{p}\bm{q}}\right]\mathrm{Tr}\left[\tilde{\rho}_{0,B_{l},\bm{q'}\bm{p'}}^{(1)}\right]\mathrm{Tr}\left[B_{\theta_{i},\bm{p'}\bm{q'}}\right]+\mathrm{Tr}\left[\tilde{\rho}_{0,B_{l},\bm{q}\bm{p}}^{(1)}\tilde{\rho}_{0,B_{l},\bm{q'}\bm{p'}}^{(1)}\right]\mathrm{Tr}\left[B_{\theta_{i},\bm{p}\bm{q}}B_{\theta_{i},\bm{p'}\bm{q'}}\right]\right)\\
    & \qquad -\frac{1}{2^{m}\left(2^{2m}-1\right)}\left(\mathrm{Tr}\left[\tilde{\rho}_{0,B_{l},\bm{q}\bm{p}}^{(1)}\right]\mathrm{Tr}\left[\tilde{\rho}_{0,B_{l},\bm{q'}\bm{p'}}^{(1)}\right]\mathrm{Tr}\left[B_{\theta_{i},\bm{p}\bm{q}}B_{\theta_{i},\bm{p'}\bm{q'}}\right]+\mathrm{Tr}\left[\tilde{\rho}_{0,B_{l},\bm{q}\bm{p}}^{(1)}\tilde{\rho}_{0,B_{l},\bm{q'}\bm{p'}}^{(1)}\right]\mathrm{Tr}\left[B_{\theta_{i},\bm{p}\bm{q}}\right]\mathrm{Tr}\left[B_{\theta_{i},\bm{p'}\bm{q'}}\right]\right) \Biggr) \\
    &= \frac{1}{4}\cdot \frac{1}{2^{2m}-1} \sum_{\bm{p},\bm{q},\bm{p'},\bm{q'}} \mathrm{Tr}\left[B_{\theta_{i},\bm{p}\bm{q}}B_{\theta_{i},\bm{p'}\bm{q'}}\right] \left(\mathrm{Tr}\left[\tilde{\rho}_{0,B_{l},\bm{q}\bm{p}}^{(1)}\tilde{\rho}_{0,B_{l},\bm{q'}\bm{p'}}^{(1)}\right]-\frac{1}{2^m} \mathrm{Tr}\left[\tilde{\rho}_{0,B_{l},\bm{q}\bm{p}}^{(1)}\right]\mathrm{Tr}\left[\tilde{\rho}_{0,B_{l},\bm{q'}\bm{p'}}^{(1)}\right] \right)\\
    &= \frac{1}{4}\cdot\frac{2^m}{2^{2m}-1} \sum_{\bm{p},\bm{q},\bm{p'},\bm{q'}} \delta_{(\bm{p},\bm{q})}\delta_{(\bm{p'},\bm{q'})}\left(\mathrm{Tr}\left[\tilde{\rho}_{0,B_{l},\bm{q}\bm{p}}^{(1)}\tilde{\rho}_{0,B_{l},\bm{q'}\bm{p'}}^{(1)}\right]-\frac{1}{2^m} \mathrm{Tr}\left[\tilde{\rho}_{0,B_{l},\bm{q}\bm{p}}^{(1)}\right]\mathrm{Tr}\left[\tilde{\rho}_{0,B_{l},\bm{q'}\bm{p'}}^{(1)}\right] \right)\\
    &= \frac{1}{4}\cdot\frac{2^m}{2^{2m}-1} \sum_{\bm{p},\bm{p'}} \left(\mathrm{Tr}\left[\tilde{\rho}_{0,B_{l},\bm{p}\bm{p}}^{(1)}\tilde{\rho}_{0,B_{l},\bm{p'}\bm{p'}}^{(1)}\right]-\frac{1}{2^m} \mathrm{Tr}\left[\tilde{\rho}_{0,B_{l},\bm{p}\bm{p}}^{(1)}\right]\mathrm{Tr}\left[\tilde{\rho}_{0,B_{l},\bm{p'}\bm{p'}}^{(1)}\right] \right),\\
\end{split}
\end{equation}
where we define $\tilde{\rho}_{0,B_{l},\bm{q}\bm{p}}^{(1)}=\mathrm{Tr}_{\bar{S}_{(k,d)}}[(\ket{\bm{p}}\bra{\bm{q}}\otimes \mathbb{I}_{S_{(k,d)}})\tilde{\rho}_{0,B_{l}}^{(1)}]$ with $\tilde{\rho}_{0,B_{l}}^{(1)}= V_{r}(\bm{x},\bm{\theta}) \tilde{B}_{\bm{x'},\theta_{i}}\rho_{0} V_{r}^{\dagger}(\bm{x},\bm{\theta})$ and $B_{\theta_{i},\bm{p}\bm{q}}= \mathrm{Tr}_{\bar{S}_{(k,d)}}[(\ket{\bm{q}}\bra{\bm{p}}\otimes \mathbb{I}_{S_{(k,d)}}) B_{\theta_{i}}]$.
Here the following two equalities are utilized;
$$\mathrm{Tr}\left[B_{\theta_{i},\bm{p}\bm{q}}\right]=\mathrm{Tr}\left[\mathrm{Tr}_{\bar{S}_{(k,d)}}\left[\left(\ket{\bm{q}}\bra{\bm{p}}\otimes \mathbb{I}_{S_{(k,d)}}\right)B_{\theta_{i}}\right]\right]=0,$$
\begin{equation}
\begin{split}
    \mathrm{Tr}\left[B_{\theta_{i},\bm{p}\bm{q}}B_{\theta_{i},\bm{p'}\bm{q'}}\right] &= \mathrm{Tr}\left[\mathrm{Tr}_{\bar{S}_{(k,d)}}\left[\left(\ket{\bm{q}}\bra{\bm{p}}\otimes \mathbb{I}_{S_{(k,d)}}\right)B_{\theta_{i}}\right]\mathrm{Tr}_{\bar{S}_{(k,d)}}\left[\left(\ket{\bm{q'}}\bra{\bm{p'}}\otimes \mathbb{I}_{S_{(k,d)}}\right)B_{\theta_{i}}\right]\right] \\ 
    &= \delta_{(\bm{p},\bm{q})}\delta_{(\bm{p'},\bm{q'})} \mathrm{Tr}\left[B_{\theta_{i}}^2\right]\\
    &= \delta_{(\bm{p},
    \bm{q})}\delta_{(\bm{p'},\bm{q'})} 2^m.
\end{split}
\end{equation}
Thus we focus on the following quantity $\sum_{\bm{p},\bm{p'}} \mathrm{Tr}[\tilde{\rho}_{0,B_{l},\bm{p}\bm{p}}^{(1)}]\mathrm{Tr}[\tilde{\rho}_{0,B_{l},\bm{p'}\bm{p'}}^{(1)}]$ and $\sum_{\bm{p},\bm{p'}} \mathrm{Tr}[\tilde{\rho}_{0,B_{l},\bm{p}\bm{p}}^{(1)}\tilde{\rho}_{0,B_{l},\bm{p'}\bm{p'}}^{(1)}]$.
Note that these terms can be written as
\begin{equation}
\label{eq:qfk_var_rest1}
\begin{split}
    \sum_{\bm{p},\bm{p'}} \mathrm{Tr}\left[\tilde{\rho}_{0,B_{l},\bm{p}\bm{p}}^{(1)}\right]\mathrm{Tr}\left[\tilde{\rho}_{0,B_{l},\bm{p'}\bm{p'}}^{(1)}\right] &= \sum_{\bm{p},\bm{p'}} \mathrm{Tr}\left[\mathrm{Tr}_{\bar{S}_{(k,d)}}\left[\left(\ket{\bm{p}}\bra{\bm{p}}\otimes \mathbb{I}_{S_{(k,d)}}\right)\tilde{\rho}_{0,B_{l}}^{(1)}\right]\right]\mathrm{Tr}\left[\mathrm{Tr}_{\bar{S}_{(k,d)}}\left[\left(\ket{\bm{p'}}\bra{\bm{p'}}\otimes \mathbb{I}_{S_{(k,d)}}\right)\tilde{\rho}_{0,B_{l}}^{(1)}\right]\right] \\
    & = \mathrm{Tr}\left[\tilde{\rho}_{0,B_{l}}^{(1)}\right]\mathrm{Tr}\left[\tilde{\rho}_{0,B_{l}}^{(1)}\right]\\
    & = \mathrm{Tr}\left[V_{r}(\bm{x},\bm{\theta}) \tilde{B}_{\bm{x'},\theta_{i}}\rho_{0} V_{r}^{\dagger}(\bm{x},\bm{\theta})\right]\mathrm{Tr}\left[V_{r}(\bm{x},\bm{\theta}) \tilde{B}_{\bm{x'},\theta_{i}}\rho_{0} V_{r}^{\dagger}(\bm{x},\bm{\theta})\right]\\
    & = \mathrm{Tr}\left[\tilde{B}_{\bm{x'},\theta_{i}}\rho_{0} \right]\mathrm{Tr}\left[\tilde{B}_{\bm{x'},\theta_{i}}\rho_{0} \right],\\
\end{split}
\end{equation}
\begin{equation}
\label{eq:qfk_var_rest2}
\begin{split}
    \sum_{\bm{p},\bm{p'}} \mathrm{Tr}[\tilde{\rho}_{0,B_{l},\bm{p}\bm{p}}^{(1)}\tilde{\rho}_{0,B_{l},\bm{p'}\bm{p'}}^{(1)}] &= \sum_{\bm{p},\bm{p'}} \mathrm{Tr}[\mathrm{Tr}_{\bar{S}_{(k,d)}}[(\ket{\bm{p}}\bra{\bm{p}}\otimes \mathbb{I}_{S_{(k,d)}})\tilde{\rho}_{0,B_{l}}^{(1)}]\mathrm{Tr}_{\bar{S}_{(k,d)}}[(\ket{\bm{p'}}\bra{\bm{p'}}\otimes \mathbb{I}_{S_{(k,d)}})\tilde{\rho}_{0,B_{l}}^{(1)}]] \\
    & = \mathrm{Tr}\left[\mathrm{Tr}_{\bar{S}_{(k,d)}}[\tilde{\rho}_{0,B_{l}}^{(1)}]\mathrm{Tr}_{\bar{S}_{(k,d)}}[\tilde{\rho}_{0,B_{l}}^{(1)}]\right]\\
    & = \mathrm{Tr}\left[\mathrm{Tr}_{\bar{S}_{(k,d)}}[ V_{r}(\bm{x},\bm{\theta}) \tilde{B}_{\bm{x'},\theta_{i}}\rho_{0} V_{r}^{\dagger}(\bm{x},\bm{\theta})]\mathrm{Tr}_{\bar{S}_{(k,d)}}[ V_{r}(\bm{x},\bm{\theta}) \tilde{B}_{\bm{x'},\theta_{i}}\rho_{0} V_{r}^{\dagger}(\bm{x},\bm{\theta})]\right].\\
\end{split}
\end{equation}
This indicates that $V_{r}(\bm{x},\bm{\theta})$ can be excluded from the expectation calculation for the quantity in Eq.~\eqref{eq:qfk_var_rest1}, but not from the calculation for the other in Eq.~\eqref{eq:qfk_var_rest2}. 

Then we integrate the second quantity in Eq.~\eqref{eq:qfk_var_rest2} over the unitary $V_{r}(\bm{x},\bm{\theta})$.
We remind that $V_{r}(\bm{x},\bm{\theta})$ contains all unitary blocks in the light-cone of $W_{k,d}(\bm{x},\bm{\theta}_{k,d})$.
Hence, the quantity is iteratively integrated over every unitary block.
To do so, we consider the following situations: a unitary block $w_s$ acting on (1) a subspace of $S'$, (2) a subspace of $\bar{S'}$, (3) a subspace of both $S'$ and $\bar{S'}$ and (4) $S'$ and a subspace of $\bar{S'}$.
Then, for arbitrary operator $A: S' \otimes \bar{S'} \to S' \otimes \bar{S'}$, the expectation of $ \mathrm{Tr}[\mathrm{Tr}_{\bar{S'}}[w_{s}Aw_{s}^{\dagger}]\mathrm{Tr}_{\bar{S'}}[w_{s}Aw_{s}^{\dagger}]]$ over $w_{s}: S_{s} \to S_{s}$ can be obtained as follows;
\begin{enumerate}
    \item $S_{s} \subseteq S'$ 
    \begin{equation}
    \begin{split} \label{eq:ala_int_vav1}
        \left\langle \mathrm{Tr}\left[\mathrm{Tr}_{\bar{S'}}\left[w_{s}Aw_{s}^{\dagger}\right]\mathrm{Tr}_{\bar{S'}}\left[w_{s}Aw_{s}^{\dagger}\right]\right]\right\rangle_{w_{s}}
        &= \left\langle \mathrm{Tr}\left[w_{s}\mathrm{Tr}_{\bar{S'}}\left[A\right]w_{s}^{\dagger}w_{s}\mathrm{Tr}_{\bar{S'}}\left[A\right]w_{s}^{\dagger}\right]\right\rangle_{w_{s}}\\
        &= \mathrm{Tr}\left[\mathrm{Tr}_{\bar{S'}}\left[A\right]\mathrm{Tr}_{\bar{S'}}\left[A\right]\right]
    \end{split}
    \end{equation}
    \item $S_{s} \subset \bar{S'}$ 
    \begin{equation}
    \begin{split} \label{eq:ala_int_vav2}
        \left\langle \mathrm{Tr}\left[\mathrm{Tr}_{\bar{S'}}\left[w_{s}Aw_{s}^{\dagger}\right]\mathrm{Tr}_{\bar{S'}}\left[w_{s}Aw_{s}^{\dagger}\right]\right]\right\rangle_{w_{s}}
        &= \left\langle \mathrm{Tr}\left[\mathrm{Tr}_{\bar{S'}}\left[Aw_{s}^{\dagger}w_{s}\right]\mathrm{Tr}_{\bar{S'}}\left[Aw_{s}^{\dagger}w_{s}\right]\right]\right\rangle_{w_{s}}\\
        &= \mathrm{Tr}\left[\mathrm{Tr}_{\bar{S'}}\left[A\right]\mathrm{Tr}_{\bar{S'}}\left[A\right]\right]
    \end{split}
    \end{equation}
    \item $S_{s} = S_{h}\otimes S_{\bar{h}}$ with $d^{1/2}$-dimensional spaces $S_{h} \subset S'$ and $S_{\bar{h}} \subset \bar{S'}$ 
    \begin{equation}
    \begin{split} \label{eq:ala_int_vav3}
        &\left\langle \mathrm{Tr}\left[\mathrm{Tr}_{\bar{S'}}\left[w_{s}Aw_{s}^{\dagger}\right]\mathrm{Tr}_{\bar{S'}}\left[w_{s}Aw_{s}^{\dagger}\right]\right]\right\rangle_{w_{s}} \\
        &= \left\langle  \mathrm{Tr}\left[\left(w_{s}Aw_{s}^{\dagger} \otimes w_{s}Aw_{s}^{\dagger} \right)\left( Swap_{S'_{1} \otimes S'_{2}} \otimes \mathbb{I}_{\bar{S'}_{1}\otimes \bar{S'}_{2}}\right)\right] \right\rangle_{w_{s}} \\
        & = \frac{1}{d^2-1}\Bigl( \mathrm{Tr}\left[\left(\mathbb{I}_{S_{s,1} \otimes S_{s,2}} \otimes \mathrm{Tr}_{S_{s,1}}\left[A\right] \otimes \mathrm{Tr}_{S_{s,2}}\left[A\right] \right)\left( Swap_{S'_{1} \otimes S'_{2}} \otimes \mathbb{I}_{\bar{S'}_{1}\otimes \bar{S'}_{2}}\right)\right] \\
        & \qquad  + \mathrm{Tr}\left[\left(Swap_{S_{s,1}\otimes S_{s,2}} \otimes \mathrm{Tr}_{S_{s}} \otimes \mathrm{Tr}_{S_{s,1}\cup S_{s,2} }\left[A \otimes A \left( Swap_{S_{s,1} \otimes S_{s,2}} \otimes \mathbb{I}_{\bar{S_{s,1}}\otimes \bar{S_{s,2}}}\right)\right] \right)\left( Swap_{S'_{1} \otimes S'_{2}} \otimes \mathbb{I}_{\bar{S'_{1}}\otimes \bar{S'_{2}}}\right)\right] \Bigr) \\
        & \quad - \frac{1}{d(d^2-1)}\Bigl( \mathrm{Tr}\left[\left(\mathbb{I}_{S_{s,1} \otimes S_{s,1}} \otimes \mathrm{Tr}_{S_{s,1}\cup S_{s,2} }\left[A \otimes A \left( Swap_{S_{s,1} \otimes S_{s,2}} \otimes \mathbb{I}_{\bar{S_{s,1}}\otimes \bar{S_{s,2}}}\right)\right] \right)\left( Swap_{S'_{1} \otimes S'_{2}} \otimes \mathbb{I}_{\bar{S'}_{1}\otimes \bar{S'}_{2}}\right)\right] \\
        & \quad \qquad + \mathrm{Tr}\left[\left(Swap_{S_{s,1}\otimes S_{s,2}} \otimes \mathrm{Tr}_{S_{s,1}}\left[A\right] \otimes \mathrm{Tr}_{S_{s,2}}\left[A\right] \right)\left( Swap_{S'_{1} \otimes S'_{2}} \otimes \mathbb{I}_{\bar{S'}_{1}\otimes \bar{S'}_{2}}\right)\right] \Bigr) \\
        & = \frac{d^{1/2}}{d+1}\left(\mathrm{Tr}\left[\mathrm{Tr}_{\bar{S'}\cup S_{h}}\left[A\right]\mathrm{Tr}_{\bar{S'}\cup S_{h}}\left[A\right]\right] + \mathrm{Tr}\left[\mathrm{Tr}_{\bar{S'}/ S_{\bar{h}}}\left[A\right]\mathrm{Tr}_{\bar{S'}/ S_{\bar{h}}}\left[A\right]\right] \right)
    \end{split}
    \end{equation}
    \item $S_{s} = S'\otimes S_{\bar{h}}$ with $d^{1/2}$-dimensional spaces $S'$ and $S_{\bar{h}} \subset \bar{S'}$ 
    \begin{equation}
    \begin{split} \label{eq:ala_int_vav4}
        \left\langle \mathrm{Tr}\left[\mathrm{Tr}_{\bar{S'}}\left[w_{s}Aw_{s}^{\dagger}\right]\mathrm{Tr}_{\bar{S'}}\left[w_{s}Aw_{s}^{\dagger}\right]\right]\right\rangle_{w_{s}} &= \left\langle  \mathrm{Tr}\left[\left(w_{s}Aw_{s}^{\dagger} \otimes w_{s}Aw_{s}^{\dagger} \right)\left( Swap_{S'_{1} \otimes S'_{2}} \otimes \mathbb{I}_{\bar{S'}_{1}\otimes \bar{S'}_{2}}\right)\right] \right\rangle_{w_{s}} \\
        & = \frac{d^{1/2}}{d+1}\left(\mathrm{Tr}\left[A\right]\mathrm{Tr}\left[A\right] + \mathrm{Tr}\left[\mathrm{Tr}_{\bar{S'}/ S_{\bar{h}}}\left[A\right]\mathrm{Tr}_{\bar{S'}/ S_{\bar{h}}}\left[A\right]\right] \right)
    \end{split}
    \end{equation}
\end{enumerate}
Here, $\mathbb{I}_{S_{1}\otimes S_{2}}$ and $Swap_{S_{1}\otimes S_{2}}$ denote the identity operator and the swap operator acting on the systems $S_{1}, S_{2}$, respectively.
Also the subspace labeled with the number in the subscript (for example, $S_{s,i}$ with $i\in\{1,2\}$) represents one of the duplicated subsystems.
Note that the swap operation can be expressed as
\begin{equation}
    Swap_{S_{1}\otimes S_{2}} = \sum_{i,j} \ket{i}_{S_1}\bra{j} \otimes \ket{j}_{S_2}\bra{i}. 
\end{equation}

Now we obtain the expectation of the quantity in Eq.~\eqref{eq:qfk_var_rest2} using the above techniques.
We here give an example of the integration for the unitary blocks in the $(d-1)$-th layer, $W_{k-1,d-1}(\bm{x},\bm{\theta}_{k-1,d-1})$ and $W_{k,d-1}(\bm{x},\bm{\theta}_{k,d-1})$, assuming $2\le k \le \kappa-1$.
Then we have
\begin{equation}
\begin{split}
    &\left\langle \mathrm{Tr}\left[\mathrm{Tr}_{\bar{S}_{(k,d)}}[ V_{r}(\bm{x},\bm{\theta}) \tilde{B}_{\bm{x'},\theta_{i}}\rho_{0} V_{r}^{\dagger}(\bm{x},\bm{\theta})]\mathrm{Tr}_{\bar{S}_{(k,d)}}[ V_{r}(\bm{x},\bm{\theta}) \tilde{B}_{\bm{x'},\theta_{i}}\rho_{0} V_{r}^{\dagger}(\bm{x},\bm{\theta})]\right] \right\rangle_{W_{k-1,d-1}(\bm{x},\bm{\theta}_{k-1,d-1}),W_{k,d-1}(\bm{x},\bm{\theta}_{k-1,d-1})} \\
    &= \left(\frac{2^{\frac{m}{2}}}{2^{m}+1}\right)^{2} \Bigl(\mathrm{Tr}\left[ V_{r,d-1} \tilde{B}_{\bm{x'},\theta_{i}}\rho_{0} V_{r,d-1}^{\dagger}\right]\mathrm{Tr}\left[ V_{r,d-1} \tilde{B}_{\bm{x'},\theta_{i}}\rho_{0} V_{r,d-1}^{\dagger}\right] \\
    & \qquad \qquad \qquad \qquad +  \mathrm{Tr}\left[\mathrm{Tr}_{\bar{S}_{(k-1,d-1)}}[ V_{r,d-1} \tilde{B}_{\bm{x'},\theta_{i}}\rho_{0} V_{r,d-1}^{\dagger}]\mathrm{Tr}_{\bar{S}_{(k-1,d-1)}}[ V_{r,d-1} \tilde{B}_{\bm{x'},\theta_{i}}\rho_{0} V_{r,d-1}^{\dagger}]\right]  \\
    & \qquad \qquad \qquad \qquad +  \mathrm{Tr}\left[\mathrm{Tr}_{\bar{S}_{(k,d-1)}}[ V_{r,d-1} \tilde{B}_{\bm{x'},\theta_{i}}\rho_{0} V_{r,d-1}^{\dagger}]\mathrm{Tr}_{\bar{S}_{(k,d-1)}}[ V_{r,d-1} \tilde{B}_{\bm{x'},\theta_{i}}\rho_{0} V_{r,d-1}^{\dagger}]\right] \\
    & \qquad \qquad \qquad \qquad +  \mathrm{Tr}\left[\mathrm{Tr}_{\bar{S}_{(k-1,d-1)} \cap \bar{S}_{(k,d-1)}}[ V_{r,d-1} \tilde{B}_{\bm{x'},\theta_{i}}\rho_{0} V_{r,d-1}^{\dagger}]\mathrm{Tr}_{\bar{S}_{(k-1,d-1)} \cap \bar{S}_{(k,d-1)}}[ V_{r,d-1} \tilde{B}_{\bm{x'},\theta_{i}}\rho_{0} V_{r,d-1}^{\dagger}]\right] \Bigr),
\end{split}
\end{equation}
where $V_{r,l}$ is the set of unitary blocks in $V_{r}(\bm{x},\bm{\theta})$ except for the ones in $l$-th layer. 
That is, for the above case,  $V_{r}(\bm{x},\bm{\theta}) = W_{k-1,d-1}(\bm{x},\bm{\theta}_{k-1,d-1}) W_{k,d-1}(\bm{x},\bm{\theta}_{k-1,d-1})V_{r,d-1}$ is satisfied. 
By iterating the calculation up to the first layer $d=1$, the following result can be obtained;
\begin{equation}
\begin{split}
    &\left\langle \mathrm{Tr}\left[\mathrm{Tr}_{\bar{S}_{(k,d)}}[ V_{r}(\bm{x},\bm{\theta}) \tilde{B}_{\bm{x'},\theta_{i}}\rho_{0} V_{r}^{\dagger}(\bm{x},\bm{\theta})]\mathrm{Tr}_{\bar{S}_{(k,d)}}[ V_{r}(\bm{x},\bm{\theta}) \tilde{B}_{\bm{x'},\theta_{i}}\rho_{0} V_{r}^{\dagger}(\bm{x},\bm{\theta})]\right] \right\rangle_{V_{r}(\bm{x},\bm{\theta})} \\
    &=\sum_{h\in P_{U}(S^{(k_{s}:k_{l},1)})} t_{h} \mathrm{Tr}\left[\mathrm{Tr}_{\bar{h}}[ \tilde{B}_{\bm{x'},\theta_{i}}\rho_{0} ]\mathrm{Tr}_{\bar{h}}[ \tilde{B}_{\bm{x'},\theta_{i}}\rho_{0} ]\right],
\end{split}
\end{equation}
where $t_{h}\in\mathbb{R}^{+}$ and $P_{U}(S^{(k_{s}:k_{l},1)})=\{S_{(k_{s},1)},S_{(k_{s}+1,1)},\ldots S_{(k_{l},1)},S_{(k_{s},1)}\cup S_{(k_{s}+1,1)}, S_{(k_{s},1)}\cup S_{(k_{s}+2,1)},\ldots\}$ is the set of subspace, each element of which is the union of the spaces in a subset of $P(S^{(k_{s}:k_{l},1)})$.
Here, $k_{s}(k_{l})$ is the smallest (largest) label of the unitary blocks in the first layer of $V_{r}(\bm{x},\bm{\theta})$.
Importantly, a set of the coefficients $\{t_{h}\}$ differs depending on the position of the unitary $W_{k,d}(\bm{x},\bm{\theta}_{k,d})$.
In this paper, we consider the following cases: (1) $W_{k,d}(\bm{x},\bm{\theta}_{k,d})$ with $k=1$ or $k=\kappa$ (the first or the last unitary block in a layer) and (2) $W_{k,d}(\bm{x},\bm{\theta}_{k,d})$ with $k$ satisfying $k_s\ge1$ and $k_l\le d$ (a middle block).
Examples of the coefficient $t_{S_{(k_{s}:k_{l},1)}}$ for these cases are as follows;
\begin{enumerate}
    \item $k=1$ or $k=\kappa$
    \begin{equation}
        t_{S_{(k_{s}:k_{l},1)}} = \left(\frac{2^{\frac{m}{2}}}{2^{m}+1}\right)^{\frac{3}{2}(d-1)}  
    \end{equation}
    \item $k$ satisfies $k_s\ge1$ and $k_l\le d$
    \begin{equation}
        t_{S_{(k_{s}:k_{l},1)}} = \left(\frac{2^{\frac{m}{2}}}{2^{m}+1}\right)^{2(d-1)}.  
    \end{equation}
\end{enumerate}
%
Note that every $t_h$ is equal to or greater than $(2^{\frac{m}{2}}/2^{m}+1)^{2(d-1)}$.
Then, the expectation of the first term over $ U_{1:i}(\bm{x},\bm{\theta})$ can be written as
\begin{equation}
\label{eq:fterm_middle}
    \frac{1}{4}\cdot\frac{2^m}{2^{2m}-1}  \left(\sum_{h\in P_{U}(S^{(k_{s}:k_{l},1)})} t_{h} \mathrm{Tr}\left[\mathrm{Tr}_{\bar{h}}[ \tilde{B}_{\bm{x'},\theta_{i}}\rho_{0} ]\mathrm{Tr}_{\bar{h}}[ \tilde{B}_{\bm{x'},\theta_{i}}\rho_{0} ]\right]-\frac{1}{2^m}\mathrm{Tr}\left[\tilde{B}_{\bm{x'},\theta_{i}}\rho_{0} \right]\mathrm{Tr}\left[\tilde{B}_{\bm{x'},\theta_{i}}\rho_{0} \right] \right)
\end{equation}

Next, we compute the expectation of Eq.~\eqref{eq:fterm_middle} over $U_{1:i}(\bm{x'},\bm{\theta})$.
Here we begin with the integration for $\tilde{W}_{k,d}(\bm{x'},\theta_{i})$.
The expectation of $\mathrm{Tr}[\mathrm{Tr}_{\bar{h}}[ \tilde{B}_{\bm{x'},\theta_{i}}\rho_{0} ]\mathrm{Tr}_{\bar{h}}[ \tilde{B}_{\bm{x'},\theta_{i}}\rho_{0} ]]$ in the first term of Eq.~\eqref{eq:fterm_middle} can be calculated as 

\begin{equation}
\begin{split}
\label{eq:fmiddle_rest1}
    & \left\langle\mathrm{Tr}\left[\mathrm{Tr}_{\bar{h}}\left[\tilde{B}_{x',\theta_{i}}\rho_{0}\right]\mathrm{Tr}_{\bar{h}}\left[\tilde{B}_{x',\theta_{i}}\rho_{0} \right]\right]\right\rangle_{\tilde{W}_{k,d}(\bm{x'},\theta_{i})}\\
    &= \biggl\langle \mathrm{Tr}\left[\mathrm{Tr}_{\bar{h}}\Bigl[V_{r}^{\dagger}(\bm{x'},\bm{\theta})\tilde{W}_{k,d}^{\dagger}(\bm{x'},\theta_{i})B_{\theta_{i}}\tilde{W}_{k,d}(\bm{x'},\theta_{i})V_{r}(\bm{x'},\bm{\theta})\rho_{0}\right]\\
    & \qquad \qquad \times \mathrm{Tr}_{\bar{h}}\left[V_{r}^{\dagger}(\bm{x'},\bm{\theta})\tilde{W}_{k,d}^{\dagger}(\bm{x'},\theta_{i})B_{\theta_{i}}\tilde{W}_{k,d}(\bm{x'},\theta_{i})V_{r}(\bm{x'},\bm{\theta})\rho_{0} \right]\Bigr]\biggr\rangle_{\tilde{W}_{k,d}(\bm{x'},\theta_{i})} \\
    &= \biggl\langle \mathrm{Tr}\Bigl[\left(V_{r}^{\dagger}(\bm{x'},\bm{\theta})\tilde{W}_{k,d}^{\dagger}(\bm{x'},\theta_{i})B_{\theta_{i}}\tilde{W}_{k,d}(\bm{x'},\theta_{i})V_{r}(\bm{x'},\bm{\theta})\rho_{0} \otimes V_{r}^{\dagger}(\bm{x'},\bm{\theta})\tilde{W}_{k,d}^{\dagger}(\bm{x'},\theta_{i})B_{\theta_{i}}\tilde{W}_{k,d}(\bm{x'},\theta_{i})V_{r}(\bm{x'},\bm{\theta})\rho_{0} \right) \\
    & \qquad \qquad \qquad \qquad \qquad \qquad \qquad \qquad \qquad \qquad \qquad \qquad \qquad \qquad \qquad \qquad \times \left(Swap_{h_{1}\otimes h_{2}} \otimes \mathbb{I}_{\bar{h}_{1}\otimes \bar{h}_{2}} \right)\biggr\rangle_{\tilde{W}_{k,d}(\bm{x'},\theta_{i})} \\
    &= \frac{1}{2^{2m}-1} \Bigl(\mathrm{Tr}\left[B_{\theta_{i}}\right]\mathrm{Tr}\left[B_{\theta_{i}}\right]\mathrm{Tr}\Bigl[\left(V_{r}^{\dagger}(\bm{x'},\bm{\theta})\otimes V_{r}^{\dagger}(\bm{x'},\bm{\theta})\right) \left(\mathbb{I}_{S_{(k,d),1} \otimes S_{(k,d),2}} \otimes \mathbb{I}_{\bar{S}_{(k,d),1} \otimes \bar{S}_{(k,d),2}}\right) \\
    & \qquad \qquad \qquad \qquad  \qquad \qquad  \qquad \qquad \qquad \qquad \qquad \qquad  \times \left(V_{r}(\bm{x'},\bm{\theta})\otimes V_{r}(\bm{x'},\bm{\theta})\right) \left(\rho_{0} \otimes \rho_{0}\right) \left(Swap_{h_{1}\otimes h_{2}} \otimes \mathbb{I}_{\bar{h}_{1}\otimes \bar{h}_{2}} \right)\Bigr] \\
    & \qquad + \mathrm{Tr}\left[B_{\theta_{i}}^2\right]\mathrm{Tr}\Bigl[\left(V_{r}^{\dagger}(\bm{x'},\bm{\theta})\otimes V_{r}^{\dagger}(\bm{x'},\bm{\theta})\right) \left(Swap_{S_{(k,d),1} \otimes S_{(k,d),2}} \otimes \mathbb{I}_{\bar{S}_{(k,d),1} \otimes \bar{S}_{(k,d),2}}\right) \\
    & \qquad \qquad \qquad \qquad  \qquad \qquad  \qquad \qquad \qquad \qquad \qquad \qquad  \times \left(V_{r}(\bm{x'},\bm{\theta})\otimes V_{r}(\bm{x'},\bm{\theta})\right) \left(\rho_{0} \otimes \rho_{0}\right) \left(Swap_{h_{1}\otimes h_{2}} \otimes \mathbb{I}_{\bar{h}_{1}\otimes \bar{h}_{2}} \right)\Bigr] \Bigr) \\
    & \quad  - \frac{1}{2^{m}(2^{2m}-1)} \Bigl(\mathrm{Tr}\left[B_{\theta_{i}}^2\right]\mathrm{Tr}\Bigl[\left(V_{r}^{\dagger}(\bm{x'},\bm{\theta})\otimes V_{r}^{\dagger}(\bm{x'},\bm{\theta})\right) \left(\mathbb{I}_{S_{(k,d),1} \otimes S_{(k,d),2}} \otimes \mathbb{I}_{\bar{S}_{(k,d),1} \otimes \bar{S}_{(k,d),2}}\right) \\
    & \qquad \qquad \qquad \qquad  \qquad \qquad  \qquad \qquad \qquad \qquad \qquad \qquad  \times \left(V_{r}(\bm{x'},\bm{\theta})\otimes V_{r}(\bm{x'},\bm{\theta})\right) \left(\rho_{0} \otimes \rho_{0}\right) \left(Swap_{h_{1}\otimes h_{2}} \otimes \mathbb{I}_{\bar{h}_{1}\otimes \bar{h}_{2}} \right)\Bigr] \\
    & \qquad + \mathrm{Tr}\left[B_{\theta_{i}}\right]\mathrm{Tr}\left[B_{\theta_{i}}\right]\mathrm{Tr}\Bigl[\left(V_{r}^{\dagger}(\bm{x'},\bm{\theta})\otimes V_{r}^{\dagger}(\bm{x'},\bm{\theta})\right) \left(Swap_{S_{(k,d),1} \otimes S_{(k,d),2}} \otimes \mathbb{I}_{\bar{S}_{(k,d),1} \otimes \bar{S}_{(k,d),2}}\right) \\
    & \qquad \qquad \qquad \qquad  \qquad \qquad  \qquad \qquad \qquad \qquad \qquad \qquad  \times \left(V_{r}(\bm{x'},\bm{\theta})\otimes V_{r}(\bm{x'},\bm{\theta})\right) \left(\rho_{0} \otimes \rho_{0}\right) \left(Swap_{h_{1}\otimes h_{2}} \otimes \mathbb{I}_{\bar{h}_{1}\otimes \bar{h}_{2}} \right)\Bigr] \Bigr) \\
    &= \frac{2^{m}}{2^{2m}-1} \Bigl(\mathrm{Tr}\Bigl[\left(V_{r}^{\dagger}(\bm{x'},\bm{\theta})\otimes V_{r}^{\dagger}(\bm{x'},\bm{\theta})\right) \left(Swap_{S_{(k,d),1} \otimes S_{(k,d),2}} \otimes \mathbb{I}_{\bar{S}_{(k,d),1} \otimes \bar{S}_{(k,d),2}}\right) \\
    & \qquad \qquad \qquad \qquad  \qquad \qquad  \qquad \qquad \qquad \qquad \qquad \qquad  \times \left(V_{r}(\bm{x'},\bm{\theta})\otimes V_{r}(\bm{x'},\bm{\theta})\right) \left(\rho_{0} \otimes \rho_{0}\right) \left(Swap_{h_{1}\otimes h_{2}} \otimes \mathbb{I}_{\bar{h}_{1}\otimes \bar{h}_{2}} \right)\Bigr] \\
    & \quad  -\frac{1}{2^{m}} \mathrm{Tr}\Bigl[\left(V_{r}^{\dagger}(\bm{x'},\bm{\theta})\otimes V_{r}^{\dagger}(\bm{x'},\bm{\theta})\right) \left(\mathbb{I}_{S_{(k,d),1} \otimes S_{(k,d),2}} \otimes \mathbb{I}_{\bar{S}_{(k,d),1} \otimes \bar{S}_{(k,d),2}}\right) \\
    & \qquad \qquad \qquad \qquad  \qquad \qquad  \qquad \qquad \qquad \qquad \qquad \qquad  \times \left(V_{r}(\bm{x'},\bm{\theta})\otimes V_{r}(\bm{x'},\bm{\theta})\right) \left(\rho_{0} \otimes \rho_{0}\right) \left(Swap_{h_{1}\otimes h_{2}} \otimes \mathbb{I}_{\bar{h}_{1}\otimes \bar{h}_{2}} \right)\Bigr]\\
    &= \frac{2^{m}}{2^{2m}-1} \Bigl(\mathrm{Tr}\Bigl[\left(V_{r}^{\dagger}(\bm{x'},\bm{\theta})\otimes V_{r}^{\dagger}(\bm{x'},\bm{\theta})\right) \left(Swap_{S_{(k,d),1} \otimes S_{(k,d),2}} \otimes \mathbb{I}_{\bar{S}_{(k,d),1} \otimes \bar{S}_{(k,d),2}}\right) \\
    & \qquad \qquad \qquad \qquad  \qquad \qquad  \qquad \qquad \qquad \qquad \qquad \qquad  \times \left(V_{r}(\bm{x'},\bm{\theta})\otimes V_{r}(\bm{x'},\bm{\theta})\right) \left(\rho_{0} \otimes \rho_{0}\right) \left(Swap_{h_{1}\otimes h_{2}} \otimes \mathbb{I}_{\bar{h}_{1}\otimes \bar{h}_{2}} \right)\Bigr] \Bigr)\\
    & \qquad \qquad \qquad   -\frac{1}{2^{m}} \mathrm{Tr}\left[\mathrm{Tr}_{\bar{h}}\left[\rho_{0}\right]\mathrm{Tr}_{\bar{h}}\left[\rho_{0}\right]\right] \Bigr),\\
\end{split}
\end{equation}
where we utilize the equality, 
\begin{equation}
\begin{split}
\label{eq:int_tensor_haar}
    &\left\langle V^{\dagger}w_{s}^{\dagger}Aw_{s}VA' \otimes  V^{\dagger}w_{s}^{\dagger}Aw_{s}VA' \right\rangle_{w_{s}} \\
    &= \frac{1}{2^{2m}-1}\Bigl( \left(V^{\dagger}\otimes V^{\dagger}\right) \left(\mathbb{I}_{S_{s,1} \otimes S_{s,2}}\otimes \mathrm{Tr}_{S_{s,1}}\left[A\right] \otimes \mathrm{Tr}_{S_{s,2}}\left[A\right]\right) \left(V\otimes V\right) \left(A'\otimes A'\right) \\
    & \qquad + \left(V^{\dagger}\otimes V^{\dagger}\right) \left(Swap_{S_{s,1} \otimes S_{s,2}}\otimes \mathrm{Tr}_{S_{s,1} \cup S_{s,2}}\left[A\otimes A \left(Swap_{S_{s,1},S_{s,2}} \otimes \mathbb{I}_{\bar{S}_{s,1},\bar{S}_{s,2}} \right)\right] \right) \left(V\otimes V\right) \left(A'\otimes A'\right) \Bigr) \\
    & \quad - \frac{1}{2^{m}(2^{2m}-1)}\Bigl( \left(V^{\dagger}\otimes V^{\dagger}\right) \left(Swap_{S_{s,1} \otimes S_{s,2}}\otimes \mathrm{Tr}_{S_{s,1}}\left[A\right] \otimes \mathrm{Tr}_{S_{s,2}}\left[A\right]\right) \left(V\otimes V\right) \left(A'\otimes A'\right) \\
    & \qquad + \left(V^{\dagger}\otimes V^{\dagger}\right) \left(\mathbb{I}_{S_{s,1} \otimes S_{s,2}}\otimes \mathrm{Tr}_{S_{s,1} \cup S_{s,2}}\left[A\otimes A \left(Swap_{S_{s,1},S_{s,2}} \otimes \mathbb{I}_{\bar{S}_{s,1},\bar{S}_{s,2}} \right)\right] \right) \left(V\otimes V\right) \left(A'\otimes A'\right) \Bigr)
\end{split}
\end{equation}
for arbitrary operator $A,A'$ and  the properties of the Pauli operators, $\mathrm{Tr}[B]=0$ and $\mathrm{Tr}[B^2]=2^{m}$.
Since the first term in Eq\eqref{eq:fmiddle_rest1} still includes $V_{r}(\bm{x'},\bm{\theta})$, we integrate the quantity over all unitary blocks in $V_{r}(\bm{x'},\bm{\theta})$.
Especially, we consider the following quantity,
\begin{equation*}
    \left(V_{r}^{\dagger}(\bm{x'},\bm{\theta})\otimes V_{r}^{\dagger}(\bm{x'},\bm{\theta})\right) \left(Swap_{S_{(k,d),1} \otimes S_{(k,d),2}} \otimes \mathbb{I}_{\bar{S}_{(k,d),1} \otimes \bar{S}_{(k,d),2}}\right) \left(V_{r}(\bm{x'},\bm{\theta})\otimes V_{r}(\bm{x'},\bm{\theta})\right).
\end{equation*}
Then, using the equality in Eq.~\eqref{eq:int_tensor_haar}, we have
\begin{equation}
\begin{split}
\label{eq:int_tensor_haar_res}
    &\left\langle\left(V_{r}^{\dagger}(\bm{x'},\bm{\theta})\otimes V_{r}^{\dagger}(\bm{x'},\bm{\theta})\right) \left(Swap_{S_{(k,d),1} \otimes S_{(k,d),2}} \otimes \mathbb{I}_{\bar{S}_{(k,d),1} \otimes \bar{S}_{(k,d),2}}\right) \left(V_{r}(\bm{x'},\bm{\theta})\otimes V_{r}(\bm{x'},\bm{\theta})\right) \right\rangle_{V_{r}(\bm{x'},\bm{\theta})} \\
    &=\sum_{h'\in P_{U}(S^{(k_{s}:k_{l},1)})} t_{h'}\left(Swap_{h'_{1} \otimes h'_{2}} \otimes \mathbb{I}_{\bar{h'}_{1} \otimes \bar{h'}_{2}}\right),
\end{split}  
\end{equation}
where $t_{h'} \in \mathbb{R}^{+}$.
Note that a set of the coefficients $\{t_{h'}\}$ is the same as $\{t_{h}\}$.
Thus, substituting the above equation into the first term in Eq.~\eqref{eq:fmiddle_rest1}, the following result can be obtained.
\begin{equation}
\begin{split}
    &\Bigl\langle\mathrm{Tr}\Bigl[\left(V_{r}^{\dagger}(\bm{x'},\bm{\theta})\otimes V_{r}^{\dagger}(\bm{x'},\bm{\theta})\right) \left(Swap_{S_{(k,d),1} \otimes S_{(k,d),2}} \otimes \mathbb{I}_{\bar{S}_{(k,d),1} \otimes \bar{S}_{(k,d),2}}\right) \\
    & \qquad \qquad \qquad \qquad \qquad \qquad \qquad \qquad \qquad  \times \left(V_{r}(\bm{x'},\bm{\theta})\otimes V_{r}(\bm{x'},\bm{\theta})\right) \left(\rho_{0} \otimes \rho_{0}\right) \left(Swap_{h_{1}\otimes h_{2}} \otimes \mathbb{I}_{\bar{h}_{1}\otimes \bar{h}_{2}} \right)\Bigr]\Bigr \rangle_{V_{r}(\bm{x'},\bm{\theta})}\\
    &= \mathrm{Tr}\left[\left(\sum_{h'\in P_{U}(S^{(k_{s}:k_{l},1)})} t_{h'}\left(Swap_{h'_{1} \otimes h'_{2}} \otimes \mathbb{I}_{\bar{h'}_{1} \otimes \bar{h'}_{2}}\right)\right) \left(\rho_{0} \otimes \rho_{0} \right)\left(Swap_{h_{1}\otimes h_{2}} \otimes \mathbb{I}_{\bar{h}_{1}\otimes \bar{h}_{2}} \right)\right]\\
    &= \sum_{h'\in P_{U}(S^{(k_{s}:k_{l},1)})} t_{h'} \mathrm{Tr}\left[\left(\rho_{0} \otimes \rho_{0} \right) \left(Swap_{(h_{1} \cup h'_{1}) / (h_{1} \cap h'_{1}) \otimes (h_{1} \cup h'_{1}) / (h_{1} \cap h'_{1})} \otimes \mathbb{I}_{\overline{(h \cup h') / (h \cap h')} \otimes \overline{(h \cup h') / (h \cap h')}} \right)\right] \\
    &= \sum_{h'\in P_{U}(S^{(k_{s}:k_{l},1)})} t_{h'} \mathrm{Tr}\left[\mathrm{Tr}_{\overline{(h \cup h') / (h \cap h')}}\left[\rho_{0}\right]\mathrm{Tr}_{\overline{(h \cup h') / (h \cap h')}}\left[\rho_{0}\right]\right] \\
\end{split}
\end{equation}
Therefore, we can see that the quantity in Eq.~\eqref{eq:fmiddle_rest1} can be represented by the purity of initial state which is patially traced out over a subspace in $P(S_{(k_{s}:k_{l},1)})$.
\begin{equation}
\begin{split}
    & \left\langle\mathrm{Tr}\left[\mathrm{Tr}_{\bar{h}}\left[\tilde{B}_{x',\theta_{i}}\rho_{0}\right]\mathrm{Tr}_{\bar{h}}\left[\tilde{B}_{x',\theta_{i}}\rho_{0} \right]\right]\right\rangle_{U_{1:i}(\bm{x'},\bm{\theta})}\\
    &= \frac{2^{m}}{2^{2m}-1} \Biggl(\Biggl(\sum_{h'\in P_{U}(S^{(k_{s}:k_{l},1)})} t_{h'} \mathrm{Tr}\left[\mathrm{Tr}_{\overline{(h \cup h') / (h \cap h')}}\left[\rho_{0}\right]\mathrm{Tr}_{\overline{(h \cup h') / (h \cap h')}}\left[\rho_{0}\right]\right] \Biggr) -\frac{1}{2^{m}} \mathrm{Tr}\left[\mathrm{Tr}_{\bar{h}}\left[\rho_{0}\right]\mathrm{Tr}_{\bar{h}}\left[\rho_{0}\right]\right] \Biggr).\\
\end{split}
\end{equation}

As for the second term in Eq.~\eqref{eq:fterm_middle}, the integration for $\tilde{W}_{k,d}(\bm{x'},\theta_{i})$ can be calculated in the following way;
\begin{equation}
\begin{split}
\label{eq:aldqfk_var_f}
    & \left\langle\mathrm{Tr}\left[\tilde{B}_{\bm{x'},\theta_{i}}\rho_{0}\right]\mathrm{Tr}\left[\tilde{B}_{\bm{x'},\theta_{i}}\rho_{0}\right]\right\rangle_{\tilde{W}_{k,d}(\bm{x'},\theta_{i})}\\
    & =  \biggl\langle \mathrm{Tr}\left[V_{r}^{\dagger}(\bm{x'},\bm{\theta})\tilde{W}_{k,d}^{\dagger}(\bm{x'},\theta_{i})B_{\theta_{i}}\tilde{W}_{k,d}(\bm{x'},\theta_{i})V_{r}(\bm{x'},\bm{\theta})\rho_{0}\right]\\
    & \qquad \qquad \qquad \qquad \times \mathrm{Tr}\left[V_{r}^{\dagger}(\bm{x'},\bm{\theta})\tilde{W}_{k,d}^{\dagger}(\bm{x'},\theta_{i})B_{\theta_{i}}\tilde{W}_{k,d}(\bm{x'},\theta_{i})V_{r}(\bm{x'},\bm{\theta})\rho_{0}\right]\biggr\rangle_{\tilde{W}_{k,d}(\bm{x'},\theta_{i})} \\
    & =\biggl\langle \mathrm{Tr}\left[\tilde{W}_{k,d}(\bm{x'},\theta_{i})\rho_{V_{x'}}\tilde{W}_{k,d}^{\dagger}(\bm{x'},\theta_{i})B_{\theta_{i}}\right] \mathrm{Tr}\left[\tilde{W}_{k,d}(\bm{x'},\theta_{i})\rho_{V_{x'}}\tilde{W}_{k,d}^{\dagger}(\bm{x'},\theta_{i})B_{\theta_{i}}\right]\biggr\rangle_{\tilde{W}_{k,d}(\bm{x'},\theta_{i})}\\ 
    & = \sum_{\bm{p},\bm{q},\bm{p'},\bm{q'}} \biggl\langle \mathrm{Tr}\left[\tilde{W}_{k,d}(\bm{x'},\theta_{i})\rho_{V_{x'},\bm{q}\bm{p}}\tilde{W}_{k,d}^{\dagger}(\bm{x'},\theta_{i})B_{\theta_{i},\bm{p}\bm{q}}\right] \\
    & \qquad \qquad \qquad \qquad \times \mathrm{Tr}\left[\tilde{W}_{k,d}(\bm{x'},\theta_{i})\rho_{V_{x'},\bm{q'}\bm{p'}}\tilde{W}_{k,d}^{\dagger}(\bm{x'},\theta_{i})B_{\theta_{i},\bm{p'}\bm{q'}}\right]\biggr\rangle_{\tilde{W}_{k,d}(\bm{x'},\theta_{i})}\\ 
    &=  \sum_{\bm{p},\bm{q},\bm{p'},\bm{q'}} \Biggl( \frac{1}{2^{2m}-1} \left(\mathrm{Tr}\left[\rho_{V_{x'},\bm{q}\bm{p}}\right]\mathrm{Tr}\left[B_{\theta_{i},\bm{p}\bm{q}}\right]\mathrm{Tr}\left[\rho_{V_{x'},\bm{q'}\bm{p'}}\right]\mathrm{Tr}\left[B_{\theta_{i},\bm{p'}\bm{q'}}\right]+\mathrm{Tr}\left[\rho_{V_{x'},\bm{q}\bm{p}}\rho_{V_{x'},\bm{q'}\bm{p'}}\right]\mathrm{Tr}\left[B_{\theta_{i},\bm{p}\bm{q}}B_{\theta_{i},\bm{p'}\bm{q'}}\right]\right)\\
    & \qquad -\frac{1}{2^{m}\left(2^{2m}-1\right)}\left(\mathrm{Tr}\left[\rho_{V_{x'},\bm{q}\bm{p}}\right]\mathrm{Tr}\left[\rho_{V_{x'},\bm{q'}\bm{p'}}\right]\mathrm{Tr}\left[B_{\theta_{i},\bm{p}\bm{q}}B_{\theta_{i},\bm{p'}\bm{q'}}\right]+\mathrm{Tr}\left[\rho_{V_{x'},\bm{q}\bm{p}}\rho_{V_{x'},\bm{q'}\bm{p'}}\right]\mathrm{Tr}\left[B_{\theta_{i},\bm{p}\bm{q}}\right]\mathrm{Tr}\left[B_{\theta_{i},\bm{p'}\bm{q'}}\right]\right) \Biggr) \\
    &= \frac{1}{2^{2m}-1} \sum_{\bm{p},\bm{q},\bm{p'},\bm{q'}} \mathrm{Tr}\left[B_{\theta_{i},\bm{p}\bm{q}}B_{\theta_{i},\bm{p'}\bm{q'}}\right] \left(\mathrm{Tr}\left[\rho_{V_{x'},\bm{q}\bm{p}}\rho_{V_{x'},\bm{q'}\bm{p'}}\right]-\frac{1}{2^m}\mathrm{Tr}\left[\rho_{V_{x'},\bm{q}\bm{p}}\right]\mathrm{Tr}\left[\rho_{V_{x'},\bm{q'}\bm{p'}}\right]\right)\\
    &= \frac{2^m}{2^{2m}-1} \sum_{\bm{p},\bm{p'}}  \left(\mathrm{Tr}\left[\rho_{V_{x'},\bm{p}\bm{p}}\rho_{V_{x'},\bm{p'}\bm{p'}}\right]-\frac{1}{2^m}\mathrm{Tr}\left[\rho_{V_{x'},\bm{p}\bm{p}}\right]\mathrm{Tr}\left[\rho_{V_{x'},\bm{p'}\bm{p'}}\right]\right)\\
    &= \frac{2^m}{2^{2m}-1} \left(\mathrm{Tr}\left[\mathrm{Tr}_{\bar{S}_{(k,d)}}\left[\rho_{V_{x'}}\right]\mathrm{Tr}_{\bar{S}_{(k,d)}}\left[\rho_{V_{x'}}\right]\right]-\frac{1}{2^m}\right),\\
\end{split}
\end{equation}
where $\rho_{V_{x'},\bm{pq}}=\mathrm{Tr}_{\bar{S}_{(k,d)}}[(\ket{\bm{p}}\bra{\bm{q}}\otimes \mathbb{I}_{S_{(k,d)}})\rho_{V_{x'}}]$ with $\rho_{V_{x'}}=V_{r}(\bm{x'},\bm{\theta})\rho_{0}V_{r}^{\dagger}(\bm{x'},\bm{\theta})$.
Here we use Lemmas \ref{lem4} and \ref{lem5}, $\mathrm{Tr}\left[B_{\theta_{i},\bm{p}\bm{q}}\right]=0$ and $\mathrm{Tr}\left[B_{\theta_{i},\bm{p}\bm{q}}B_{\theta_{i},\bm{p'}\bm{q'}}\right]=\delta_{(\bm{p},\bm{q})}\delta_{(\bm{p'},\bm{q'})} 2^m$.
Then, integrating the quantity over $V_{r}(\bm{x'},\bm{\theta})$ using Eqs.~\eqref{eq:ala_int_vav1}-~\eqref{eq:ala_int_vav4}, we have
\begin{equation}
\begin{split}
\label{eq:s_in_fterm_middle_integrated}
    &\left\langle\mathrm{Tr}\left[\tilde{B}_{\bm{x'},\theta_{i}}\rho_{0}\right]\mathrm{Tr}\left[\tilde{B}_{\bm{x'},\theta_{i}}\rho_{0}\right]\right\rangle_{U_{1:i}(\bm{x'},\bm{\theta})} = \frac{2^m}{2^{2m}-1} \left(\left(\sum_{h\in P_{U}(S^{(k_{s}:k_{l},1)})} t_{h} \mathrm{Tr}\left[\mathrm{Tr}_{\bar{h}}[ \rho_{0} ]\mathrm{Tr}_{\bar{h}}[\rho_{0} ]\right]\right)-\frac{1}{2^m}\right).
\end{split}
\end{equation}
Therefore we obtain
\begin{equation}
\label{eq:qfk_v_t1}
\begin{split}
    & Var_{a,1} \\
    &=\frac{1}{4}\left(\frac{2^m}{2^{2m}-1}\right)^{2}  \Biggl( \sum_{h\in P_{U}(S^{(k_{s}:k_{l},1)})} \sum_{h'\in P_{U}(S^{(k_{s}:k_{l},1)})} t_{h}t_{h'} \mathrm{Tr}\left[\mathrm{Tr}_{\overline{(h \cup h') / (h \cap h')}}\left[\rho_{0}\right]\mathrm{Tr}_{\overline{(h \cup h') / (h \cap h')}}\left[\rho_{0}\right]\right] \\
    & \qquad  \qquad  \qquad  \qquad  \qquad  \qquad \qquad  \qquad  \qquad \qquad  \qquad   \qquad  \qquad -\frac{2}{2^m} \Biggl(\sum_{h\in P_{U}(S^{(k_{s}:k_{l},1)})} t_{h} \mathrm{Tr}\left[\mathrm{Tr}_{\bar{h}}[ \rho_{0} ]\mathrm{Tr}_{\bar{h}}[\rho_{0} ]\right]\Biggr) + \frac{1}{2^{2m}} \Biggr).\\
\end{split}
\end{equation}

Next we work on the second term  of Eq.\eqref{eq:var_aldqfk_ala_origin}.
We here again integrate the term over $\tilde{W}_{k,d}(\bm{x},\theta_{i})$ first. 
\begin{equation}
\begin{split}
    & Var_{a,2} \\
    & = \frac{1}{2} \biggl\langle \mathrm{Tr}\left[ \tilde{W}_{k,d}(\bm{x},\theta_{i})\tilde{\rho}_{0,B_{l}}^{(1)}\tilde{W}_{k,d}^{\dagger}(\bm{x},\theta_{i})B_{\theta_{i}}  \right]  \mathrm{Tr}\left[ \tilde{W}_{k,d}(\bm{x},\theta_{i})\tilde{\rho}_{0,B_{r}}^{(1)}\tilde{W}_{k,d}^{\dagger}(\bm{x},\theta_{i})B_{\theta_{i}} \right]  \biggr\rangle_{\tilde{W}_{k,d}(\bm{x},\theta_{i})} \\
    &= \frac{1}{2}\sum_{\bm{p},\bm{q},\bm{p'},\bm{q'}} \biggl\langle \mathrm{Tr}\left[ \tilde{W}_{k,d}(\bm{x},\theta_{i})\tilde{\rho}_{0,B_{l},\bm{q}\bm{p}}^{(1)}\tilde{W}_{k,d}^{\dagger}(\bm{x},\theta_{i})B_{\theta_{i},\bm{p}\bm{q}}  \right] \mathrm{Tr}\left[ \tilde{W}_{k,d}(\bm{x},\theta_{i})\tilde{\rho}_{0,B_{r},\bm{q'}\bm{p'}}^{(1)}\tilde{W}_{k,d}^{\dagger}(\bm{x},\theta_{i})B_{\theta_{i},\bm{p'}\bm{q'}}  \right]\biggr\rangle_{\tilde{W}_{k,d}(\bm{x},\theta_{i})} \\
    &= \frac{1}{2} \sum_{\bm{p},\bm{q},\bm{p'},\bm{q'}} \Biggl( \frac{1}{2^{2m}-1} \left(\mathrm{Tr}\left[\tilde{\rho}_{0,B_{l},\bm{q}\bm{p}}^{(1)}\right]\mathrm{Tr}\left[B_{\theta_{i},\bm{p}\bm{q}}\right]\mathrm{Tr}\left[\tilde{\rho}_{0,B_{r},\bm{q'}\bm{p'}}^{(1)}\right]\mathrm{Tr}\left[B_{\theta_{i},\bm{p'}\bm{q'}}\right]+\mathrm{Tr}\left[\tilde{\rho}_{0,B_{l},\bm{q}\bm{p}}^{(1)}\tilde{\rho}_{0,B_{r},\bm{q'}\bm{p'}}^{(1)}\right]\mathrm{Tr}\left[B_{\theta_{i},\bm{p}\bm{q}}B_{\theta_{i},\bm{p'}\bm{q'}}\right]\right)\\
    & \qquad -\frac{1}{2^{m}\left(2^{2m}-1\right)}\left(\mathrm{Tr}\left[\tilde{\rho}_{0,B_{l},\bm{q}\bm{p}}^{(1)}\right]\mathrm{Tr}\left[\tilde{\rho}_{0,B_{r},\bm{q'}\bm{p'}}^{(1)}\right]\mathrm{Tr}\left[B_{\theta_{i},\bm{p}\bm{q}}B_{\theta_{i},\bm{p'}\bm{q'}}\right]+\mathrm{Tr}\left[\tilde{\rho}_{0,B_{l},\bm{q}\bm{p}}^{(1)}\tilde{\rho}_{0,B_{r},\bm{q'}\bm{p'}}^{(1)}\right]\mathrm{Tr}\left[B_{\theta_{i},\bm{p}\bm{q}}\right]\mathrm{Tr}\left[B_{\theta_{i},\bm{p'}\bm{q'}}\right]\right) \Biggr) \\
    &= \frac{1}{2}\cdot \frac{1}{2^{2m}-1} \sum_{\bm{p},\bm{q},\bm{p'},\bm{q'}} \mathrm{Tr}\left[B_{\theta_{i},\bm{p}\bm{q}}B_{\theta_{i},\bm{p'}\bm{q'}}\right] \left(\mathrm{Tr}\left[\tilde{\rho}_{0,B_{l},\bm{q}\bm{p}}^{(1)}\tilde{\rho}_{0,B_{r},\bm{q'}\bm{p'}}^{(1)}\right]-\frac{1}{2^m} \mathrm{Tr}\left[\tilde{\rho}_{0,B_{l},\bm{q}\bm{p}}^{(1)}\right]\mathrm{Tr}\left[\tilde{\rho}_{0,B_{r},\bm{q'}\bm{p'}}^{(1)}\right] \right)\\
    &= \frac{1}{2}\cdot\frac{2^m}{2^{2m}-1} \sum_{\bm{p},\bm{q},\bm{p'},\bm{q'}} \delta_{(\bm{p},\bm{q})}\delta_{(\bm{p'},\bm{q'})}\left(\mathrm{Tr}\left[\tilde{\rho}_{0,B_{l},\bm{q}\bm{p}}^{(1)}\tilde{\rho}_{0,B_{r},\bm{q'}\bm{p'}}^{(1)}\right]-\frac{1}{2^m} \mathrm{Tr}\left[\tilde{\rho}_{0,B_{l},\bm{q}\bm{p}}^{(r)}\right]\mathrm{Tr}\left[\tilde{\rho}_{0,B_{l},\bm{q'}\bm{p'}}^{(1)}\right] \right)\\
    &= \frac{1}{2}\cdot\frac{2^m}{2^{2m}-1} \sum_{\bm{p},\bm{p'}} \left(\mathrm{Tr}\left[\tilde{\rho}_{0,B_{l},\bm{p}\bm{p}}^{(1)}\tilde{\rho}_{0,B_{r},\bm{p'}\bm{p'}}^{(1)}\right]-\frac{1}{2^m} \mathrm{Tr}\left[\tilde{\rho}_{0,B_{l},\bm{p}\bm{p}}^{(1)}\right]\mathrm{Tr}\left[\tilde{\rho}_{0,B_{r},\bm{p'}\bm{p'}}^{(1)}\right] \right),\\
\end{split}
\end{equation}
where $\tilde{\rho}_{0,B_{r}}^{(1)}= V_{r}(\bm{x},\bm{\theta}) \rho_{0}\tilde{B}_{\bm{x'},\theta_{i}} V_{r}^{\dagger}(\bm{x},\bm{\theta})$.
Also we define $\tilde{\rho}_{0,B_{i},\bm{q}\bm{p}}^{(1)}=\mathrm{Tr}_{\bar{S}_{(k,d)}}\left[\left(\ket{\bm{p}}\bra{\bm{q}}\otimes \mathbb{I}_{S_{(k,d)}}\right)\tilde{\rho}_{0,B_{i}}^{(1)}\right]$ for $i\in\{l,r\}$ and $B_{\theta_{i},\bm{p}\bm{q}}=\mathrm{Tr}_{\bar{S}_{(k,d)}}\left[\left(\ket{\bm{q}}\bra{\bm{p}}\otimes \mathbb{I}_{S_{(k,d)}}\right)B_{\theta_{i}}\right]$.
Note that the terms containing the remaining unitary blocks are rewritten as 
\begin{equation}
\label{eq:qfk_var_rest3}
\begin{split}
    \sum_{\bm{p},\bm{p'}} \mathrm{Tr}\left[\tilde{\rho}_{0,B_{l},\bm{p}\bm{p}}^{(1)}\right]\mathrm{Tr}\left[\tilde{\rho}_{0,B_{r},\bm{p'}\bm{p'}}^{(1)}\right] &= \sum_{\bm{p},\bm{p'}} \mathrm{Tr}\left[\mathrm{Tr}_{\bar{S}_{(k,d)}}\left[\left(\ket{\bm{p}}\bra{\bm{p}}\otimes \mathbb{I}_{S_{(k,d)}}\right)\tilde{\rho}_{0,B_{l}}^{(1)}\right]\right]\mathrm{Tr}\left[\mathrm{Tr}_{\bar{S}_{(k,d)}}\left[\left(\ket{\bm{p'}}\bra{\bm{p'}}\otimes \mathbb{I}_{S_{(k,d)}}\right)\tilde{\rho}_{0,B_{r}}^{(1)}\right]\right] \\
    & = \mathrm{Tr}\left[\tilde{\rho}_{0,B_{l}}^{(1)}\right]\mathrm{Tr}\left[\tilde{\rho}_{0,B_{r}}^{(1)}\right]\\
    & = \mathrm{Tr}\left[V_{r}(\bm{x},\bm{\theta}) \tilde{B}_{\bm{x'},\theta_{i}}\rho_{0} V_{r}^{\dagger}(\bm{x},\bm{\theta})\right]\mathrm{Tr}\left[V_{r}(\bm{x},\bm{\theta}) \rho_{0}\tilde{B}_{\bm{x'},\theta_{i}} V_{r}^{\dagger}(\bm{x},\bm{\theta})\right]\\
    & = \mathrm{Tr}\left[\tilde{B}_{\bm{x'},\theta_{i}}\rho_{0} \right]\mathrm{Tr}\left[\tilde{B}_{\bm{x'},\theta_{i}}\rho_{0} \right]\\
\end{split}
\end{equation}
and 
\begin{equation}
\label{eq:qfk_var_rest4}
\begin{split}
    \sum_{\bm{p},\bm{p'}} \mathrm{Tr}\left[\tilde{\rho}_{0,B_{l},\bm{p}\bm{p}}^{(1)}\tilde{\rho}_{0,B_{r},\bm{p'}\bm{p'}}^{(1)}\right] &= \sum_{\bm{p},\bm{p'}} \mathrm{Tr}\left[\mathrm{Tr}_{\bar{S}_{(k,d)}}\left[\left(\ket{\bm{p}}\bra{\bm{p}}\otimes \mathbb{I}_{S_{(k,d)}}\right)\tilde{\rho}_{0,B_{l}}^{(1)}\right]\mathrm{Tr}_{\bar{S}_{(k,d)}}\left[\left(\ket{\bm{p'}}\bra{\bm{p'}}\otimes \mathbb{I}_{S_{(k,d)}}\right)\tilde{\rho}_{0,B_{r}}^{(1)}\right]\right] \\
    & = \mathrm{Tr}\left[\mathrm{Tr}_{\bar{S}_{(k,d)}}\left[\tilde{\rho}_{0,B_{l}}^{(1)}\right]\mathrm{Tr}_{\bar{S}_{(k,d)}}\left[\tilde{\rho}_{0,B_{r}}^{(1)}\right]\right]\\
    & = \mathrm{Tr}\left[\mathrm{Tr}_{\bar{S}_{(k,d)}}\left[V_{r}(\bm{x},\bm{\theta}) \tilde{B}_{\bm{x'},\theta_{i}}\rho_{0} V_{r}^{\dagger}(\bm{x},\bm{\theta})\right]\mathrm{Tr}_{\bar{S}_{(k,d)}}\left[V_{r}(\bm{x},\bm{\theta}) \rho_{0}\tilde{B}_{\bm{x'},\theta_{i}} V_{r}^{\dagger}(\bm{x},\bm{\theta})\right]\right].\\
\end{split}
\end{equation}
From the results, we can see that Eq.~\eqref{eq:qfk_var_rest3} is the same as Eq.~\eqref{eq:qfk_var_rest1}.
As for Eq.~\eqref{eq:qfk_var_rest4}, by the integration over $V_{r}(\bm{x},\bm{\theta})$, we have
\begin{equation}
\begin{split}
    &\left\langle \mathrm{Tr}\left[\mathrm{Tr}_{\bar{S}_{(k,d)}}[ V_{r}(\bm{x},\bm{\theta}) \tilde{B}_{\bm{x'},\theta_{i}}\rho_{0} V_{r}^{\dagger}(\bm{x},\bm{\theta})]\mathrm{Tr}_{\bar{S}_{(k,d)}}[ V_{r}(\bm{x},\bm{\theta}) \rho_{0}\tilde{B}_{\bm{x'},\theta_{i}} V_{r}^{\dagger}(\bm{x},\bm{\theta})]\right] \right\rangle_{V_{r}(\bm{x},\bm{\theta})} \\
    &=\sum_{h\in P_{U}(S^{(k_{s}:k_{l},1)})} t_{h} \mathrm{Tr}\left[\mathrm{Tr}_{\bar{h}}[ \tilde{B}_{\bm{x'},\theta_{i}}\rho_{0} ]\mathrm{Tr}_{\bar{h}}[ \rho_{0}\tilde{B}_{\bm{x'},\theta_{i}} ]\right].
\end{split}
\end{equation}

Subsequently, we integrate the quantity over $U_{1:i}(\bm{x'},\bm{\theta})=\tilde{W}_{k,d}(\bm{x'},\theta_{i})V_{r}(\bm{x'},\bm{\theta})$.
Then we obtain

\begin{equation}
\begin{split}
\label{eq:smiddle_rest1}
    & \left\langle\mathrm{Tr}\left[\mathrm{Tr}_{\bar{h}}\left[\tilde{B}_{x',\theta_{i}}\rho_{0}\right]\mathrm{Tr}_{\bar{h}}\left[\rho_{0}\tilde{B}_{x',\theta_{i}} \right]\right]\right\rangle_{\tilde{W}_{k,d}(\bm{x'},\theta_{i}), V_{r}(\bm{x'},\bm{\theta})}\\
    &= \biggl\langle \mathrm{Tr}\left[\mathrm{Tr}_{\bar{h}}\Bigl[V_{r}^{\dagger}(\bm{x'},\bm{\theta})\tilde{W}_{k,d}^{\dagger}(\bm{x'},\theta_{i})B_{\theta_{i}}\tilde{W}_{k,d}(\bm{x'},\theta_{i})V_{r}(\bm{x'},\bm{\theta})\rho_{0}\right]\\
    & \qquad \qquad \times \mathrm{Tr}_{\bar{h}}\left[\rho_{0}V_{r}^{\dagger}(\bm{x'},\bm{\theta})\tilde{W}_{k,d}^{\dagger}(\bm{x'},\theta_{i})B_{\theta_{i}}\tilde{W}_{k,d}(\bm{x'},\theta_{i})V_{r}(\bm{x'},\bm{\theta}) \right]\Bigr]\biggr\rangle_{\tilde{W}_{k,d}(\bm{x'},\theta_{i}),V_{r}(\bm{x'},\bm{\theta})} \\
    &= \biggl\langle \mathrm{Tr}\Bigl[\left(V_{r}^{\dagger}(\bm{x'},\bm{\theta})\tilde{W}_{k,d}^{\dagger}(\bm{x'},\theta_{i})B_{\theta_{i}}\tilde{W}_{k,d}(\bm{x'},\theta_{i})V_{r}(\bm{x'},\bm{\theta})\rho_{0} \otimes \rho_{0}V_{r}^{\dagger}(\bm{x'},\bm{\theta})\tilde{W}_{k,d}^{\dagger}(\bm{x'},\theta_{i})B_{\theta_{i}}\tilde{W}_{k,d}(\bm{x'},\theta_{i})V_{r}(\bm{x'},\bm{\theta}) \right) \\
    & \qquad \qquad \qquad \qquad \qquad \qquad \qquad \qquad \qquad \qquad \qquad \qquad \qquad \qquad \times \left(Swap_{h_{1}\otimes h_{2}} \otimes \mathbb{I}_{\bar{h}_{1}\otimes \bar{h}_{2}} \right)\biggr\rangle_{\tilde{W}_{k,d}(\bm{x'},\theta_{i}),V_{r}(\bm{x'},\bm{\theta})} \\
    &= \frac{2^{m}}{2^{2m}-1} \biggl(\Bigl\langle\mathrm{Tr}\Bigl[\left(\mathbb{I} \otimes \rho_{0}\right)\left(V_{r}^{\dagger}(\bm{x'},\bm{\theta})\otimes V_{r}^{\dagger}(\bm{x'},\bm{\theta})\right) \left(Swap_{S_{(k,d),1} \otimes S_{(k,d),2}} \otimes \mathbb{I}_{\bar{S}_{(k,d),1} \otimes \bar{S}_{(k,d),2}}\right) \\
    & \qquad \qquad  \qquad \qquad \qquad \qquad \qquad \qquad  \times \left(V_{r}(\bm{x'},\bm{\theta})\otimes V_{r}(\bm{x'},\bm{\theta})\right)\left(\rho_{0} \otimes \mathbb{I}\right) \left(Swap_{h_{1}\otimes h_{2}} \otimes \mathbb{I}_{\bar{h}_{1}\otimes \bar{h}_{2}} \right)\Bigr] \Bigr\rangle_{V_{r}(\bm{x},\bm{\theta})} \biggr)\\
    & \qquad \qquad \qquad   -\frac{1}{2^{m}} \mathrm{Tr}\left[\mathrm{Tr}_{\bar{h}}\left[\rho_{0}\right]\mathrm{Tr}_{\bar{h}}\left[\rho_{0}\right]\right] \Bigr)\\
    &= \frac{2^{m}}{2^{2m}-1} \Bigl(\mathrm{Tr}\Bigl[\left(\mathbb{I} \otimes \rho_{0}\right)\left(\sum_{h'\in P_{U}(S^{(k_{s}:k_{l},1)})} t_{h'}\left(Swap_{h'_{1} \otimes h'_{2}} \otimes \mathbb{I}_{\bar{h'}_{1} \otimes \bar{h'}_{2}}\right)\right)\left(\rho_{0} \otimes \mathbb{I}\right) \left(Swap_{h_{1}\otimes h_{2}} \otimes \mathbb{I}_{\bar{h}_{1}\otimes \bar{h}_{2}} \right)\Bigr] \Bigr)\\
    & \qquad \qquad \qquad   -\frac{1}{2^{m}} \mathrm{Tr}\left[\mathrm{Tr}_{\bar{h}}\left[\rho_{0}\right]\mathrm{Tr}_{\bar{h}}\left[\rho_{0}\right]\right] \Bigr)\\
    &= \frac{2^{m}}{2^{2m}-1} \Biggl(\Biggl(\sum_{h'\in P_{U}(S^{(k_{s}:k_{l},1)})} t_{h'}\mathrm{Tr}\Bigl[\left(\mathbb{I} \otimes \rho_{0}\right)\left(Swap_{h'_{1} \otimes h'_{2}} \otimes \mathbb{I}_{\bar{h'}_{1} \otimes \bar{h'}_{2}}\right)\left(\rho_{0} \otimes \mathbb{I}\right) \left(Swap_{h_{1}\otimes h_{2}} \otimes \mathbb{I}_{\bar{h}_{1}\otimes \bar{h}_{2}} \right)\Bigr] \Biggr)\Biggr)\\
    & \qquad \qquad \qquad   -\frac{1}{2^{m}} \mathrm{Tr}\left[\mathrm{Tr}_{\bar{h}}\left[\rho_{0}\right]\mathrm{Tr}_{\bar{h}}\left[\rho_{0}\right]\right] \Bigr),\\
\end{split}
\end{equation}
where we utilize Eq.~\eqref{eq:int_tensor_haar_res} for the fourth equality.
Here, the quantity $A_{\rho,Swap}\equiv\mathrm{Tr}[(\mathbb{I} \otimes \rho_{0})(Swap_{h'_{1} \otimes h'_{2}} \otimes \mathbb{I}_{\bar{h'}_{1} \otimes \bar{h'}_{2}})(\rho_{0} \otimes \mathbb{I})(Swap_{h_{1}\otimes h_{2}} \otimes \mathbb{I}_{\bar{h}_{1}\otimes \bar{h}_{2}})] $ can be rewritten as
\begin{equation}
    A_{\rho,Swap}=\mathrm{Tr}\left[\mathbb{I}_{h \cap h'}\right] \mathrm{Tr}\left[\mathrm{Tr}_{\overline{h \cup h'}}\left[\rho_{0}\right]\mathrm{Tr}_{\overline{h \cup h'}}\left[\rho_{0}\right]\right].
\end{equation}
Note that we here define $\mathrm{Tr}[\mathbb{I}_{\emptyset}]=1$.
Thus, substituting the equality into Eq.\eqref{eq:smiddle_rest1}, we get
\begin{equation}
\begin{split}
\label{eq:f_sterm_middle_integrated}
    & \left\langle\mathrm{Tr}\left[\mathrm{Tr}_{\bar{h}}\left[\tilde{B}_{x',\theta_{i}}\rho_{0}\right]\mathrm{Tr}_{\bar{h}}\left[\rho_{0}\tilde{B}_{x',\theta_{i}} \right]\right]\right\rangle_{U_{1:i}(\bm{x'},\bm{\theta})}\\
    &=\frac{2^{m}}{2^{2m}-1} \Biggl(\Biggl(\sum_{h'\in P_{U}(S^{(k_{s}:k_{l},1)})} t_{h'}\mathrm{Tr}\left[\mathbb{I}_{h \cap h'}\right] \mathrm{Tr}\left[\mathrm{Tr}_{\overline{h \cup h'}}\left[\rho_{0}\right]\mathrm{Tr}_{\overline{h \cup h'}}\left[\rho_{0}\right]\right]\Biggr) -\frac{1}{2^{m}} \mathrm{Tr}\left[\mathrm{Tr}_{\bar{h}}\left[\rho_{0}\right]\mathrm{Tr}_{\bar{h}}\left[\rho_{0}\right]\right] \Biggr).
\end{split}
\end{equation}
Hence, using Eqs.~\eqref{eq:s_in_fterm_middle_integrated} and \eqref{eq:f_sterm_middle_integrated}, we have
\begin{equation}
\label{eq:qfk_v_t2}
\begin{split}
    & Var_{a,2} = \frac{1}{2}\left(\frac{2^m}{2^{2m}-1}\right)^{2}  \Biggl( \sum_{h\in P_{U}(S^{(k_{s}:k_{l},1)})} \sum_{h'\in P_{U}(S^{(k_{s}:k_{l},1)})} t_{h}t_{h'} \mathrm{Tr}\left[\mathbb{I}_{h \cap h'}\right]\mathrm{Tr}\left[\mathrm{Tr}_{\overline{h \cup h'}}\left[\rho_{0}\right]\mathrm{Tr}_{\overline{h \cup h'}}\left[\rho_{0}\right]\right] \\
    & \qquad  \qquad  \qquad  \qquad  \qquad  \qquad \qquad  \qquad  \qquad \qquad  \qquad   \qquad  \qquad  \qquad  -\frac{2}{2^m} \Biggl(\sum_{h\in P_{U}(S^{(k_{s}:k_{l},1)})} t_{h} \mathrm{Tr}\left[\mathrm{Tr}_{\bar{h}}[ \rho_{0} ]\mathrm{Tr}_{\bar{h}}[\rho_{0} ]\right]\Biggr) + \frac{1}{2^{2m}} \Biggr). \\
\end{split}
\end{equation}

Lastly, the expectation of the third term in Eq.\eqref{eq:var_aldqfk_ala_origin} is the same as that of the first term, due to the symmetry.
Thus, we have
\begin{equation}
\label{eq:qfk_v_t3}
\begin{split}
    & Var_{a,3} \\
    &=\frac{1}{4}\left(\frac{2^m}{2^{2m}-1}\right)^{2}  \Biggl( \sum_{h\in P_{U}(S^{(k_{s}:k_{l},1)})} \sum_{h'\in P_{U}(S^{(k_{s}:k_{l},1)})} t_{h}t_{h'} \mathrm{Tr}\left[\mathrm{Tr}_{\overline{(h \cup h') / (h \cap h')}}\left[\rho_{0}\right]\mathrm{Tr}_{\overline{(h \cup h') / (h \cap h')}}\left[\rho_{0}\right]\right] \\
    & \qquad  \qquad  \qquad  \qquad  \qquad  \qquad \qquad  \qquad  \qquad \qquad  \qquad   \qquad  \qquad -\frac{2}{2^m} \Biggl(\sum_{h\in P_{U}(S^{(k_{s}:k_{l},1)})} t_{h} \mathrm{Tr}\left[\mathrm{Tr}_{\bar{h}}[ \rho_{0} ]\mathrm{Tr}_{\bar{h}}[\rho_{0} ]\right]\Biggr) + \frac{1}{2^{2m}} \Biggr).\\
\end{split}
\end{equation}
Consequently, by summing up Eqs.~\eqref{eq:qfk_v_t1},~\eqref{eq:qfk_v_t2} and ~\eqref{eq:qfk_v_t3}, the variance of the ALDQFK is expressed as
\begin{equation}
\begin{split}
\label{eq:var_ala_aldqfk}
    &Var\left[k_{QF}^{A}\right] \\
    &= Var_{a,1}+Var_{a,2}+Var_{a,3} \\
    &=\frac{1}{2}\left(\frac{2^m}{2^{2m}-1}\right)^{2}   \Biggl( \sum_{h\in P_{U}(S^{(k_{s}:k_{l},1)})} \sum_{h'\in P_{U}(S^{(k_{s}:k_{l},1)})} t_{h}t_{h'} \biggl( \mathrm{Tr}\left[\mathrm{Tr}_{\overline{(h \cup h') / (h \cap h')}}\left[\rho_{0}\right]\mathrm{Tr}_{\overline{(h \cup h') / (h \cap h')}}\left[\rho_{0}\right]\right] \\
    & \qquad \qquad \qquad \qquad  +\mathrm{Tr}\left[\mathbb{I}_{h \cap h'}\right] \mathrm{Tr}\left[\mathrm{Tr}_{\overline{h \cup h'}}\left[\rho_{0}\right]\mathrm{Tr}_{\overline{h \cup h' }}\left[\rho_{0}\right]\right]\biggr)   -\frac{4}{2^m} \Biggl(\sum_{h\in P_{U}(S^{(k_{s}:k_{l},1)})} t_{h} \mathrm{Tr}\left[\mathrm{Tr}_{\bar{h}}[ \rho_{0} ]\mathrm{Tr}_{\bar{h}}[\rho_{0} ]\right]\Biggr) + \frac{2}{2^{2m}} \Biggr).
\end{split}
\end{equation}

Further, we will obtain the lower bound of Eq.~\eqref{eq:var_ala_aldqfk}, assuming the initial state satisfies the following equalities;
\begin{equation}
\begin{split}
    &\mathrm{Tr}\left[\mathrm{Tr}_{\overline{(h \cup h') / (h \cap h')}}\left[\rho_{0}\right]\mathrm{Tr}_{\overline{(h \cup h') / (h \cap h')}}\left[\rho_{0}\right]\right] \ge \mathrm{Tr}\left[\mathrm{Tr}_{\bar{h}}\left[\rho_{0}\right]\mathrm{Tr}_{\bar{h}}\left[\rho_{0}\right]\right]\mathrm{Tr}\left[\mathrm{Tr}_{\bar{h'}}\left[\rho_{0}\right]\mathrm{Tr}_{\bar{h'}}\left[\rho_{0}\right]\right], \\
    & \mathrm{Tr}\left[\mathrm{Tr}_{\overline{h \cup h'}}\left[\rho_{0}\right]\mathrm{Tr}_{\overline{h \cup h' }}\left[\rho_{0}\right]\right] \ge \mathrm{Tr}\left[\mathrm{Tr}_{\bar{h}}\left[\rho_{0}\right]\mathrm{Tr}_{\bar{h}}\left[\rho_{0}\right]\right]\mathrm{Tr}\left[\mathrm{Tr}_{\bar{h'}}\left[\rho_{0}\right]\mathrm{Tr}_{\bar{h'}}\left[\rho_{0}\right]\right].
\end{split}
\end{equation}
Note that the initial states that satisfy above conditions include the tensor product states of arbitrary single-qubit pure states $\{\rho_{0,i}\}_{i=1}^{n}$, i.e., $\rho_{0} = \rho_{0,1} \otimes \rho_{0,2} \otimes \ldots \otimes \rho_{0,i} \otimes \ldots \otimes \rho_{0,n} $, and the completely mixed states, while it is unclear if any quantum states fulfill the properties.
Then Eq.~\eqref{eq:var_ala_aldqfk} can be expressed as
\begin{equation}
\begin{split}
    &Var\left[k_{QF}^{A}\right] \\
    & \ge \frac{1}{2}\left(\frac{2^m}{2^{2m}-1}\right)^{2}   \Biggl( \sum_{h\in P_{U}(S^{(k_{s}:k_{l},1)})} \sum_{h'\in P_{U}(S^{(k_{s}:k_{l},1)})} t_{h}t_{h'}  \mathrm{Tr}\left[\mathrm{Tr}_{\bar{h}}\left[\rho_{0}\right]\mathrm{Tr}_{\bar{h}}\left[\rho_{0}\right]\right]\mathrm{Tr}\left[\mathrm{Tr}_{\bar{h'}}\left[\rho_{0}\right]\mathrm{Tr}_{\bar{h'}}\left[\rho_{0}\right]\right] \left(1+\mathrm{Tr}\left[\mathbb{I}_{h \cap h'}\right]\right) \\
    & \qquad  \qquad \qquad  \qquad \qquad  \qquad \qquad  \qquad \qquad  \qquad \qquad  \qquad   -\frac{4}{2^m} \Biggl(\sum_{h\in P_{U}(S^{(k_{s}:k_{l},1)})} t_{h} \mathrm{Tr}\left[\mathrm{Tr}_{\bar{h}}[ \rho_{0} ]\mathrm{Tr}_{\bar{h}}[\rho_{0} ]\right]\Biggr) + \frac{2}{2^{2m}} \Biggr)\\
    &=\frac{1}{2}\left(\frac{2^m}{2^{2m}-1}\right)^{2}   \Biggl( 2 \Biggl( \Biggl( \sum_{h\in P_{U}(S^{(k_{s}:k_{l},1)})} t_{h}\mathrm{Tr}\left[\mathrm{Tr}_{\bar{h}}[ \rho_{0} ]\mathrm{Tr}_{\bar{h}}[\rho_{0} ]\right] \Biggr) -\frac{1}{2^m} \Biggr)^{2} \\
    & \qquad \qquad \qquad \qquad +  \sum_{h\in P_{U}(S^{(k_{s}:k_{l},1)})} \sum_{h'\in P_{U}(S^{(k_{s}:k_{l},1)})} t_{h}t_{h'} \mathrm{Tr}\left[\mathrm{Tr}_{\bar{h}}\left[\rho_{0}\right]\mathrm{Tr}_{\bar{h}}\left[\rho_{0}\right]\right]\mathrm{Tr}\left[\mathrm{Tr}_{\bar{h'}}\left[\rho_{0}\right]\mathrm{Tr}_{\bar{h'}}\left[\rho_{0}\right]\right] \left(\mathrm{Tr}\left[\mathbb{I}_{h \cap h'}\right]-1\right) \Biggr) \\
    & \ge \frac{1}{2}\left(\frac{2^m}{2^{2m}-1}\right)^{2} \Biggl(\sum_{h\in P_{U}(S^{(k_{s}:k_{l},1)})} \sum_{h'\in P_{U}(S^{(k_{s}:k_{l},1)})} t_{h}t_{h'} \mathrm{Tr}\left[\mathrm{Tr}_{\bar{h}}\left[\rho_{0}\right]\mathrm{Tr}_{\bar{h}}\left[\rho_{0}\right]\right]\mathrm{Tr}\left[\mathrm{Tr}_{\bar{h'}}\left[\rho_{0}\right]\mathrm{Tr}_{\bar{h'}}\left[\rho_{0}\right]\right] \left(\mathrm{Tr}\left[\mathbb{I}_{h \cap h'}\right]-1\right) \Biggr) \\
    & \ge \frac{1}{2}\left(\frac{2^m}{2^{2m}-1}\right)^{2}  t_{S_{(k_{s}:k_{l},1)}}^{2}\mathrm{Tr}\left[\mathrm{Tr}_{\overline{S_{(k_{s}:k_{l},1)}}}\left[\rho_{0}\right]\mathrm{Tr}_{\overline{S_{(k_{s}:k_{l},1)}}}\left[\rho_{0}\right]\right]^{2} \left(\mathrm{Tr}\left[\mathbb{I}_{S_{(k_{s}:k_{l},1)}}\right]-1\right).\\
\end{split}
\end{equation}
We remind that $t_{S_{(k_{s}:k_{l},1)}}$ differs depending on the position of the unitary $W_{k,d}(\bm{x},\bm{\theta}_{k,d})$.
Actually, the lowest value is attained when $k$ satisfies $k_s\ge1$ and $k_l\le d$ (i.e., $W_{k,d}(\bm{x},\bm{\theta}_{k,d})$ is located in the middle of a layer).
Thus, the lower bound of the variance for the ALDQFK using the ALA reads
\begin{equation}
    Var\left[k_{QF}^{A}\right]  \ge \frac{ 2^{md} -1}{2\left(2^{2m}-1\right)^{2}\left(2^{m}+1\right)^{4(d-1)}}.
\end{equation}
Here we also utilize the bound of the purity, $1/d\le \mathrm{Tr}[\rho^{2}]\le 1$ with the $d$-dimensional quantum state.

We also give a lower bound on the variance when the initial state is a tensor product of arbitrary single-qubit pure states, which is commonly and practically used as the initial state.
In this case, the lower bound on the variance of the ALDQFK with the ALA can be written as
\begin{equation}
    Var\left[k_{QF}^{A}\right]  \ge \frac{2^{2md} \left(2^{md} -1\right)}{2\left(2^{2m}-1\right)^{2}\left(2^{m}+1\right)^{4(d-1)}}.
\end{equation}

\section{Details of the Numerical experiments}
This section presents the details of the numerical experiments shown in the main text.
Specifically, we describe the settings and additional results for the numerical study in the following subsections of the main test: “Motivating examples”, “Numerical experiments” and “Expressivity comparison of the fidelity-based QK”. 
Additionally, we show the geometric difference of the Gram matrices given by the fidelity-based QK and the ALDQFK to see the difference when high-dimensional data is used. 
We note that Cirq \cite{cirq_developers_2022_6599601} is used to compute the quantum states for all numerical experiments in our study.

\subsection{Numerical study in “Motivating examples”}
We show the setup of the numerical experiment that illustrates examples of the vanishing similarity issue for the fidelity-based QK. 
To demonstrate the issue, we calculate the expectation and the variance of the fidelity-based QK using two types of quantum circuits with the number of qubits $n$:
\begin{itemize}
    \item Tensor-product quantum circuits
    \begin{equation}
    U_{TP}(\bm{\alpha})=\otimes_{i=1}^{n} \exp(-i\alpha_{i}Y_{i}/2)\exp(-i\alpha_{i}Z_{i}/2)
    \end{equation}
    with the Pauli operators acting on the $i$-th qubit, $Y_i$ and $Z_i$,
    \item IQP-type quantum circuits \cite{havlivcek2019supervised}
    \begin{equation}
    U_{IQP}(\bm{\alpha})=U_{\phi}(\bm{\alpha})H^{\otimes n}    
    \end{equation}
    with $U_{\phi}(\bm{\alpha})=\exp(\sum_{i=1}^{n} \phi_{i}(\bm{\alpha}) Z_{i} + \sum_{j=1}^{n-1} \phi_{j,j+1}(\bm{\alpha}) Z_{j}Z_{j+1})$ and tensor-product of the Hadamard gates $H^{\otimes n}$.
\end{itemize}
The diagrams of these quantum circuits are shown in Figure~\ref{fig:q_circuit} (a) and (b).
To be more specific, we use these quantum circuits with depth $L=2$, where each layer is composed of an input-embedded circuit and a PQC, and the data re-uploading technique is employed \cite{perez2020data} i.e., $$U(\bm{x},\bm{\theta})=\prod_{d=1}^{L} U_{k}(\bm{\theta}_d)U_{k}(\bm{x}), \qquad k\in\{TP, IQP\}.$$ 
Here, $\bm{\theta}_d$ represents parameters in the $d$-th PQC layer.
As for the IQP-type quantum circuits, $\phi_{i}(\bm{x})=x_i,\phi_{j,j+1}(\bm{x})=(x_j x_{j+1})/\pi$ in the input layers, and $\phi_{i}(\bm{\theta})=\theta_i,\phi_{j,j+1}(\bm{\theta})=\theta_{n+j}$ in the PQC layers.
Note that $\phi_{j,j+1}(\bm{x})$ is slightly modified from the original proposal \cite{havlivcek2019supervised} in such a way that $\phi_{j,j+1}(\bm{x}) \in [-\pi,\pi)$.

\begin{figure}[t]
    \centering
    \includegraphics[keepaspectratio, scale=0.8]{./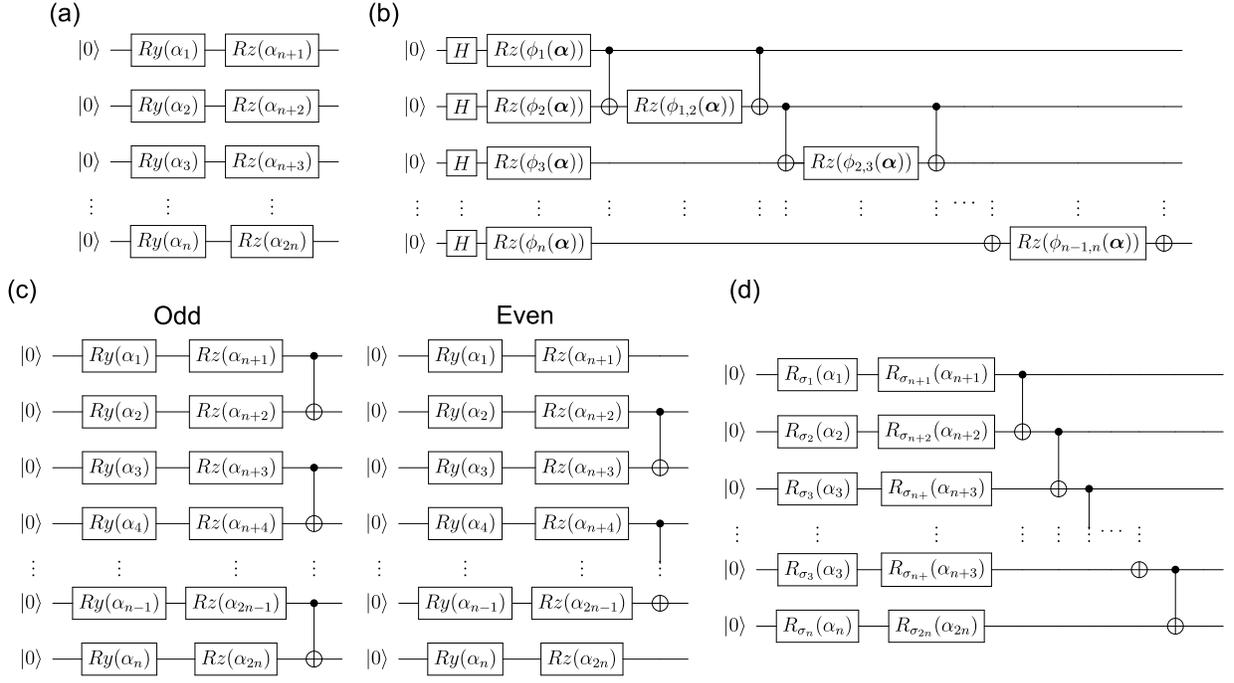}
    \caption{Quantum circuits used in this study. Here, we show one layer of (a) a tensor-product quantum circuit, (b) an IQP-type quantum circuit, (c) an ALA with $2$-qubit local unitary blocks and (d) a HEA serving as a random quantum circuit. As for the ALA, different entanglers are used in the odd and even layers; CNOT gates act on a pair of neighbor qubits alternately. We also note that the circuit (d) contains single-qubit rotation gates whose axes are chosen at random.}
    \label{fig:q_circuit}
\end{figure}

In the experiment, we prepare five sets of 100 data points $\{\bm{x}_i\}_{i=1}^{100}$ randomly generated from the range $[-\pi,\pi)$, where the dimension of the input data is equal to the number of qubits used. 
Likewise, five sets of the parameters in the PQC $\bm{\theta}$ are randomly generated from the same range. 
Throughout this paper, each element of randomly generated input data and parameters in the PQCs ranges form $-\pi$ to  $\pi$, unless otherwise mentioned.
Then we calculate the QK $k_{Q}(\bm{x},\bm{x’})$ with $\bm{x}\neq\bm{x’}$ for all 25 combinations of the input data set and the parameter set, which are used to derive the expectation and the variance.

\subsection{Numerical study in “Numerical experiments”}

In this subsection, we provide the settings of the numerical experiments to validate the Proposition and Theorem in the main text. 
As in "Motivating examples", we calculate the variance of the fidelity-based QK and the ALDQFK using different quantum circuits. 
Here, each layer of the quantum circuits consists of an input-embedded circuit and a PQC, that is, $U(\bm{x},\bm{\theta})=\prod_{d=1}^{L}U (\bm{\theta}_d)U(\bm{x})$, where the tensor-product quantum circuits are used for the input circuit $U(\bm{x})=U_{TP}(\bm{x})$. 
As for the PQCs, we choose the tensor-product quantum circuits, the ALAs composed of $2$-qubit local unitary blocks in Figure~\ref{fig:q_circuit} (c) and the hardware efficient ansatzs (HEAs) in Figure~\ref{fig:q_circuit} (d). 
For the HEAs, we use the fixed entangling gates and randomly chosen single-qubit rotation gates $R_{\sigma_{i}}(\theta)=\exp(-i\theta \sigma_{i}/2), \sigma_{i}\in\{X,Y,Z\}$ to make the circuits serve as the random quantum circuits.
Note that the random quantum circuits and the local unitary blocks in the ALAs might not be $2$-designs. 
Moreover, as is the case for the motivating examples, we prepare five datasets containing randomly generated 100 data points, and five sets of parameters in PQC.
Then the QKs with $\bm{x}\neq\bm{x'}$ calculated for a total of 25 combinations of them are used to obtain the variance.

\subsection{Numerical study in “Expressivity comparison of the fidelity-based QK”}
We give the details of the Fourier analysis and the binary classification tasks using one-dimensional synthesized datasets.

First, we present the details and numerical settings of the Fourier analysis. In the Fourier analysis, the key idea is to utilize the Fourier representation of QKs to numerically obtain the non-zero Fourier coefficients, which can be interpreted as the expressivity of the model \cite{schuld2021effect,schuld2021supervised}. 
The Fourier representation can be expressed as 
\begin{equation}
\label{eq:fourier_repr}
    k(\bm{x},\bm{x'}) = \sum_{\omega,\omega'\in \Omega} e^{i\omega \bm{x}}e^{i\omega' \bm{x}} c_{\omega,\omega'},
\end{equation}
where $c_{\omega,\omega'}\in\mathbb{C}$ satisfying $c_{\omega,\omega'}=c_{-\omega,-\omega'}^{*}$ and $\Omega$ is the set of the integer-valued frequencies. 
Then the non-zero Fourier coefficients $\{c_{\omega,\omega'}| c_{\omega,\omega'}\neq 0\}$ are used to see the expressivity of the QK. 
Namely, the QK with many non-zero coefficients could represent a large class of functions.
However, the Fourier coefficients cannot be obtained analytically. 
Thus, we use the “curve\_fit” function in SciPy \cite{2020SciPy-NMeth} to fit the Gram matrix given by a QK to its Fourier representation, from which the coefficients were obtained. 

In this numerical study, as in the “Numerical experiments”, we use quantum circuits, each layer of which is composed of an input-embedded circuit and a PQC.
Here we use the tensor-product quantum circuits for the input layer, and the ALAs with $2$-qubit local unitary blocks and the HEAs for the PQC layers, with the setting $n=1,2,3$ and $L=2,3,4$.
Also, due to the computational difficulty, 100 data points uniformly distributed in one dimension, and the truncated set of frequencies $\tilde{\Omega}$ ranging from $-12$ to $12$ are used.
Note that the input data $x$ is embedded into the angles of all single-rotation gates of the tensor-product quantum circuits, i.e., $\alpha_i = x$ for all $i$ of $U_{TP}(\bm{\alpha})$. 
In addition, we try different 10 sets of parameters in the PQC for the calculation of the QKs because of the difference in expressivity depending on the parameters. 

Figure~\ref{fig:fourier_analysis} (b) shows the amplitudes of the Fourier coefficients for each QK using the HEAs.
We remark that each Fourier coefficient is aligned on the horizontal axis as in an example for the case $\tilde{\omega}=\{-2,-1,0,1,2\}$ shown in Figure~\ref{fig:fourier_analysis} (a).
As we can see, the fidelity-based QK and the ALDQFK have approximately the same number of non-zero amplitudes up to the order of $10^{-3}$, while there is a slight difference in the amplitude of the coefficients with respect to high frequencies.
This means that these QKs have almost the same expressivity. 
We note that the reason why the amplitudes less than $10^{-3}$ are truncated is discussed later. 
Also, we show the mean absolute error between each QK and its reconstructed one in Table~\ref{tab:mae_fourier}, to exhibit the validity of the numerically obtained Fourier coefficients.

\begin{figure}[htbp]
    \centering
    \includegraphics[keepaspectratio, scale=0.8]{./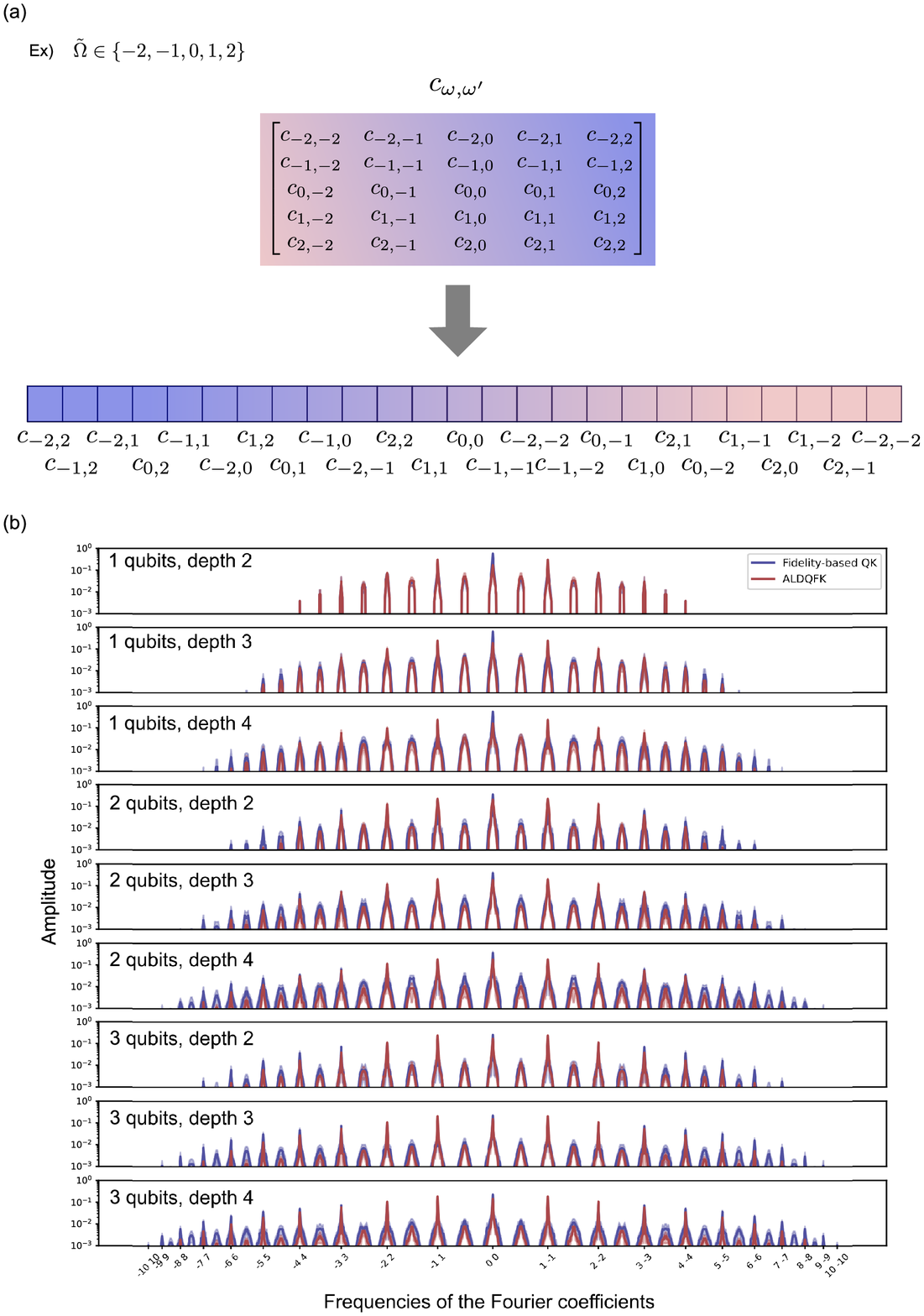}
    \caption{Comparison of the QKs from the perspective of Fourier analysis. (a) How the Fourier coefficients are aligned in one axis for a plot is shown, taking the case for $\tilde{\Omega} \in \{-2,-1,0,1,2\}$ as an example. (b) The amplitudes of the coefficients for the fidelity-based QK and the ALDQFK are shown. Here the HEAs with $n=1,2,3$ and $L=2,3,4$ are used. The shaded regions represent the standard deviation over 10 trials. }
    \label{fig:fourier_analysis}
\end{figure}

\begin{table}[t]
\caption{\label{tab:mae_fourier} List of the mean absolute errors (MAEs) between QKs and their reconstructed ones. Here $(L,n)$ represents a pair of depth and the number of qubits used in the ALA with $2$-qubit local unitary blocks (ALA2) and the HEAs (HEA).}

    \centering
    \begin{tabular}{c|c|c|c|c|c|c|c|c}
    \hline
        & \multicolumn{4}{c|}{Fidelity-based QK} & \multicolumn{4}{|c}{ALDQFK} \\ \hline
        & \multicolumn{2}{|c|}{ALA2} & \multicolumn{2}{|c|}{HEA} & \multicolumn{2}{|c|}{ALA2} & \multicolumn{2}{|c}{HEA} \\ \hline
        $(L,n)$ & Mean & Std & Mean & Std & Mean & Std & Mean & Std \\ \hline
        $(2,1)$ & $2.6\times10^{-8}$ & $2.4\times10^{-9}$ & $2.5\times10^{-8}$ & $2.1\times10^{-9}$ & $1.6\times10^{-8}$ & $2.7\times10^{-9}$ & $1.8\times10^{-8}$ & $2.2\times10^{-9}$ \\ 
        $(2,2)$ & $9.6\times10^{-9}$ & $7.2\times10^{-10}$ & $9.7\times10^{-9}$ & $2.0\times10^{-9}$ & $1.2\times10^{-8}$ & $1.7\times10^{-9}$ & $1.2\times10^{-8}$ & $2.7\times10^{-9}$ \\ 
        $(2,3)$ & $5.3\times10^{-9}$ & $2.1\times10^{-9}$ & $4.6\times10^{-9}$ & $1.4\times10^{-9}$ & $1.1\times10^{-8}$ & $1.7\times10^{-9}$ & $7.2\times10^{-9}$ & $1.7\times10^{-9}$ \\ 
        $(3,1)$ & $2.2\times10^{-8}$ & $1.9\times10^{-9}$ & $2.3\times10^{-8}$ & $9.2\times10^{-10}$ & $1.3\times10^{-8}$ & $1.9\times10^{-9}$ & $1.4\times10^{-8}$ & $1.7\times10^{-9}$ \\ 
        $(3,2)$ & $5.2\times10^{-9}$ & $2.1\times10^{-9}$ & $5.0\times10^{-9}$ & $1.7\times10^{-9}$ & $6.4\times10^{-9}$ & $1.4\times10^{-9}$ & $5.6\times10^{-9}$ & $1.6\times10^{-9}$ \\ 
        $(3,3)$ & $2.4\times10^{-4}$ & $9.0\times10^{-5}$ & $4.2\times10^{-4}$ & $2.3\times10^{-4}$ & $5.6\times10^{-9}$ & $1.6\times10^{-9}$ & $1.6\times10^{-6}$ & $9.2\times10^{-7}$ \\ 
        $(4,1)$ & $1.4\times10^{-8}$ & $2.5\times10^{-9}$ & $1.4\times10^{-8}$ & $1.5\times10^{-9}$ & $8.3\times10^{-9}$ & $1.7\times10^{-9}$ & $1.0\times10^{-8}$ & $1.5\times10^{-9}$ \\ 
        $(4,2)$ & $4.1\times10^{-4}$ & $2.6\times10^{-4}$ & $4.4\times10^{-4}$ & $3.4\times10^{-4}$ & $3.4\times10^{-5}$ & $2.5\times10^{-5}$ & $2.3\times10^{-5}$ & $1.4\times10^{-5}$ \\ 
        $(4,3)$ & $1.3\times10^{-3}$ & $6.0\times10^{-4}$ & $1.4\times10^{-3}$ & $7.4\times10^{-4}$ & $1.2\times10^{-4}$ & $7.5\times10^{-5}$ & $1.7\times10^{-4}$ & $5.1\times10^{-5}$ \\ \hline
    \end{tabular}
\end{table}

Next, we describe the details of the classification tasks using one-dimensional synthesized datasets. 
As described in the main text, the dataset consists of one-dimensional input data $\{x_{i}\}_{i=1}^{100}$ and labels $\{y_{i}\}_{i=1}^{100}$ defined by the sign of the sine function $$y_i = \text{sign} (\sin (w x_i + b))$$ with frequency $w$ and phase $b$. 
Here 80 data points are used for the training and the rest are used for the test. 
We use the datasets because the performance for the dataset is strongly linked to whether the QK has non-zero coefficients corresponding to the frequency used to determine the labels; namely, there is a relationship between the frequency of the dataset $w$ and the frequency of the Fourier coefficients $\omega,\omega'$.

We use support vector machines (SVMs) as classifiers to perform the tasks. 
We implement the SVM using SVC in scikit-learn \cite{buitinck2013api}, where an optimal hyperparameter C is chosen from $\{2^{t}|t=-8,-7,-6,\ldots,7,8,9 \}$ using 5-cross validation.
Note that the inverse of the hyperparameter acts as a regularization in the objective function of the SVMs,
\begin{equation}
\begin{split}
    & L(\bm{a}) = -\sum_{i=1}^{N} a_i +\frac{1}{2} \sum_{i=1}^{N}\sum_{j=1}^{N} a_i a_j y_i y_j k(\bm{x}_i,\bm{x}_j) \\
    & \text{subject to} \quad \sum_{i} y_i a_i = 0\\
    & \qquad \qquad \quad 0 \le a_i \le C \quad \text{for any } i=1,\ldots, N
\end{split}
\end{equation}
where $\bm{a}=\{a_{i}\}_{i=1}^{N}$ denotes the parameters to be optimized with the number of training data points $N$ and $k(\bm{x}_i,\bm{x}_j)$ is a kernel function.
This means that the Fourier coefficients of the QKs whose absolute values are less than $1/2^{9}\approx2\times 10^{-3}$ might not be taken into account in the minimization of the objective function. 
Thus values less than $10^{-3}$ are not shown in Figure~\ref{fig:fourier_analysis} (b).

Again, we prepare 10 sets of the randomly generated parameters in the PQC to evaluate the performance via the average of the misclassification rate.
We notice that the misclassification rate is defined as the number of wrong predictions over the number of total test data. 
As in the Fourier analysis, we used the ALAs and the HEAs with the number of qubits $n=1,2,3$ and depth $L=2,3,4$ to check the performance of the fidelity-based QK and the ALDQFK.
Figure~\ref{fig:1Dsynthesized} shows the averaged misclassification rate of each QK for datasets with the frequency $w\in\{2,\ldots,12\}$ and a fixed phase $b=0.3$, and the amplitudes of the Fourier coefficients $c_{\omega,-\omega}$ with $\omega\in\{2,\ldots,12\}$. 
We can obviously see that, for the fixed frequency $\omega=w$, QKs with small amplitudes of the Fourier coefficients $c_{\omega,-\omega}$ result in high misclassification rate for the synthesized dataset.
Thus, the expressivity comparison in terms of Fourier analysis can be considered effective.

\begin{figure}[tbp]
    \centering
    \includegraphics[keepaspectratio, scale=0.7]{./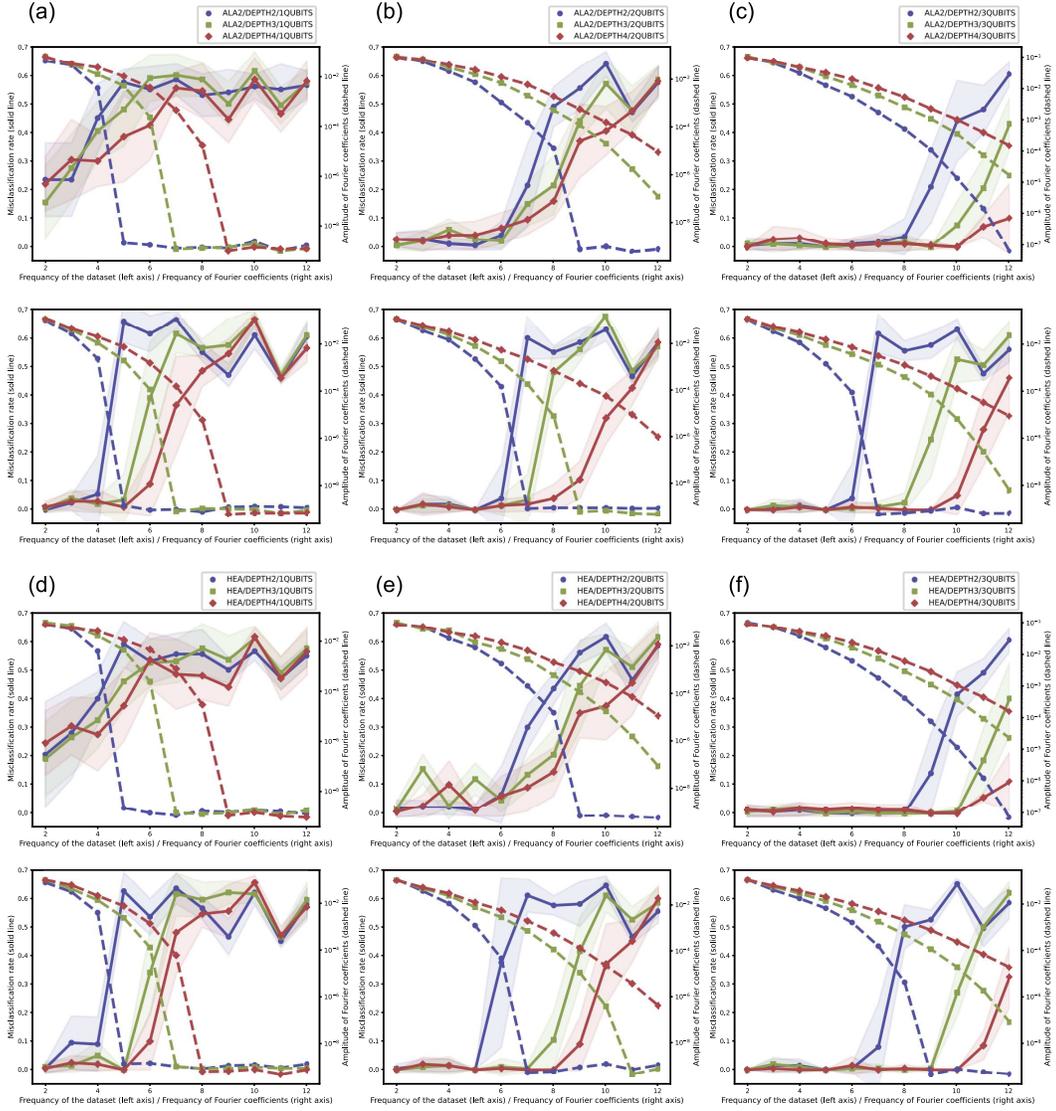}
    \caption{Misclassification rates of QKs for the synthesized datasets and the amplitude of the Fourier coefficients $c_{\omega,-\omega}$ given by the corresponding QKs. The misclassification rate for the datasets with the frequency $w\in\{2,\ldots,12\}$ and the amplitude of the Fourier coefficients $c_{\omega,-\omega}$ with $\omega\in\{2,\ldots,12\}$ are denoted by the solid line and the dashed line, respectively. Here the ALAs with the number of qubit $n=1,2,3$ are used for (a)-(c) and the HEAs with $n=1,2,3$ for (d)-(f). For each panel, the upper figure shows the result for the fidelity-based QK and the lower shows the one for the ALDQFK. Also, the standard deviation of the misclassification rates is indicated by the shaded areas. }
    \label{fig:1Dsynthesized}
\end{figure}

\subsection{Geometric difference between the fidelity-based QK and the ALDQFK}
In addition, we compare a geometric difference between the fidelity-based QK and the ALDQFK to investigate the case where high-dimensional input data is used.
The geometric difference is a measure to see the difference between two Gram matrices, as introduced in the flowchart for screening QKs with possible quantum advantage \cite{huang2021power}. 
This is expressed as  
\begin{equation}
    g_{a,b} = g(K_{a}||K_{b}) = \sqrt{\| \sqrt{K_{b}} \left(K_{a}\right)^{-1} \sqrt{K_{b}}  \|_{S}},
\end{equation}
with two Gram matrices $K_{a},K_{b}$ and the spectral norm $\|\cdot\|_{S}$.
Although it is not possible for the measure to directly assess the performance difference, it allows us to see the similarity between two Gram matrices regardless of the dimension of the data.
For this reason, we compute the geometric difference between these QKs using Fashion-MNIST datasets where the dimension of each data is reduced to the number of qubits by principle component analysis and then the reduced data is standardized.
As for the quantum circuits, the setups are the same as the case in “Numerical experiments”. 

Figure~\ref{fig:geometric_difference} shows the normalized geometric differences, i.e., $g_{a,b}/\sqrt{N}$ with the number of data points $N$, for the following situations; the ALDQFK against the fidelity-based QK ($[K_{a}]_{i,j}=k_{Q}(\bm{x}_{i},\bm{x}_{j})$, $[K_{b}]_{i,j}=\bar{k}_{QF}^{A}(\bm{x}_{i},\bm{x}_{j})$), the fidelity-based QK against the Kronecker delta ($[K_{a}]_{i,j}=\delta_{i,j}$, $[K_{b}]_{i,j}=k_{Q}(\bm{x}_{i},\bm{x}_{j})$) and the ALDQFK against the Kronecker delta ($[K_{a}]_{i,j}=\delta_{i,j}$, $[K_{b}]_{i,j}=\bar{k}_{QF}^{A}(\bm{x}_{i},\bm{x}_{j})$).
Here we use the normalized ALDQFK so that the trace of all Gram matrices in the comparison is the number of data points.
Also, note that the Gram matrix given by the Kronecker delta is exactly the identity matrix.
As a result, we can see that the difference between the fidelity-based QK and the ALDQFK gradually gets larger as the number of qubits increases, while there is a peak for the case $n=4$.
The tendency can be interpreted from Figure~\ref{fig:geometric_difference} (b) and (c); that is, the Gram matrices given by the fidelity-based QK get closer to the identity matrices with respect to the number of qubits, while the difference between those given by the ALDQFK and the identity matrices level off.
Actually, the difference between the fidelity-based QK and the ALDQFK is not reliable as in the case for $n=4$.
This is because the inverse of a Gram matrix $K_{a}$ is numerically instable. 
However, the different trends are shown in the geometric differences of the fidelity-based QK and the ALDQFK against the identity matrix where the inverse matrix calculation is stable.  
Thus, the results still indicate that the fidelity-based QK and the ALDQFK differ when the number of qubits increase, due to the vanishing similarity issue.

\begin{figure}[tbp]
    \centering
    \includegraphics[keepaspectratio, scale=0.8]{./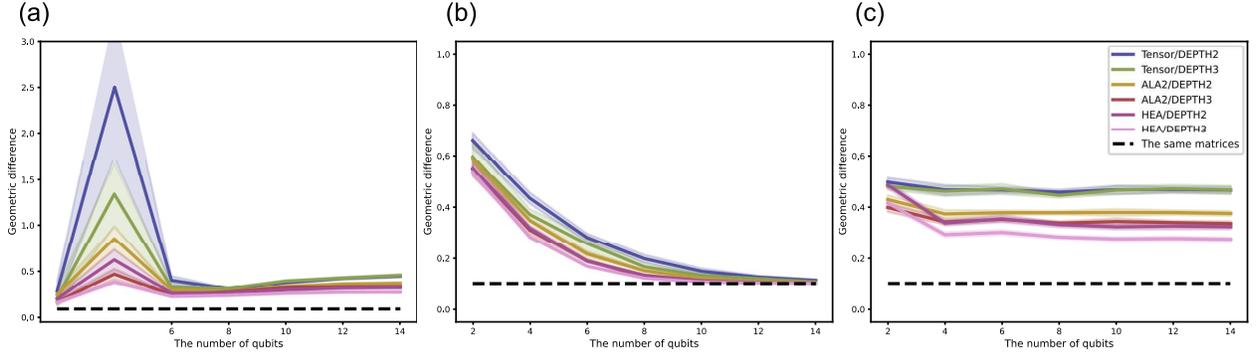}
    \caption{Comparison of the normalized geometric difference for (a) the ALDQFK against the fidelity-based QK, (b) the fidelity-based QK against the Kronecker delta and (c) the ALDQFK against the Kronecker delta. The differences of QKs against the Kronecker delta are also shown to see the difference between the Gram matrices given by the QKs and the identity matrices. For these panels the dashed lines show the case where two Gram matrices are the same. The shaded regions are shown to represent the standard deviation of the difference over 10 trials.}
    \label{fig:geometric_difference}
\end{figure}

\section{Connection between the SLDQFK and quantum neural tangent kernels}

In this section, we show a link between the SLDQFK and the quantum neural tangent kernel (QNTK).

First, let us restate the definition of the SLDQFK. 
The SLDQFK is constructed using the quantity called the SLD obtained from the follwoing equation 
\begin{equation}
\label{eq:sld}
    \partial_{\theta_l} \rho_{\bm{x},\bm{\theta}} = \frac{1}{2}\left(\rho_{\bm{x},\bm{\theta}}L_{\bm{x},\theta_{l}}^{S}+L_{\bm{x},\theta_{l}}^{S}\rho_{\bm{x},\bm{\theta}}\right),
\end{equation}
where $\rho_{\bm{x},\bm{\theta}}=U(\bm{x},\bm{\theta})\rho_0 U^{\dagger}(\bm{x},\bm{\theta})$ and the partial derivative with respect to the parameter $\theta_{l}$ is denoted as $\partial_{\theta_l}\equiv \partial/\partial_{\theta_{l}}$.
Although the analytical solution is not uniquely determined, we can get a closed form for the case where the initial state is pure. 
One solution can be expressed as 
\begin{equation}
\label{eq:sld_pure}
    L_{\bm{x},\theta_{l}}^{S} = 2\partial_{\theta_l} \rho_{\bm{x},\bm{\theta}}.
\end{equation}
Then, the SLDQFK is defined as follows:
\begin{equation}
\label{eq:def_sldqfk}
    k_{QF}^{S}(\bm{x},\bm{x'}) =  \sum_{i,j} \mathcal{F}_{S,i,j}^{-1} \mathrm{Tr}\left[ L_{\bm{x},\theta_i}^{S} L_{\bm{x'},\theta_j}^{S} \right],
\end{equation}
with the SLD-based QFIM $\mathcal{F}_{S}$.

On the other hand, the QNTK is a quantum analogue of the classical neural tangent kernel that helps us to understand the training dynamics of the quantum neural networks (QNNs) analytically. 
So far, several definitions of the QNTKs have been proposed \cite{liu2022representation,nakaji2021quantum,shirai2021quantum}; however, we follow the one in Ref \cite{liu2022representation} here. 
Suppose the cost function of a QNN is defined as
\begin{equation}
    L(\bm{\theta}) = \frac{1}{2}\sum_{i}\left( \bm{\hat{y}}_{i} - \bm{y}_{i}  \right)^2, 
\end{equation}
where $\bm{\hat{y}}_{i}= [ \mathrm{Tr}[\rho_{\bm{x}_{i},\bm{\theta}} O_{1} ],\mathrm{Tr}[\rho_{\bm{x}_{i},\bm{\theta}} O_{2} ],\ldots \mathrm{Tr}[\rho_{\bm{x}_{i},\bm{\theta}} O_{n_{o}} ] ] $ represents the outputs of the QNN with $n_{o}$ observables $\{O_{m}\}_{m=1}^{n_{o}}$ and the target values $\bm{y}_{i}$.
Here, $\rho_{\bm{x}_{i},\bm{\theta}}$ denotes a input- and parameter-dependent quantum states, as is in the SLDQFK.
Then, the QNTK associated with the gradient descent equation on $\hat{y}_{i,m} = \mathrm{Tr}[\rho_{\bm{x}_{i},\bm{\theta}} O_{m} ]$ is defined as follows
\begin{equation}
    k_{i,i'}^{m,m'} = \sum_{l} \frac{d\hat{y}_{i,m} }{d\theta_{l}} \frac{d\hat{y}_{i',m'} }{d\theta_{l}}.
\end{equation}
For the detailed explanation of the QNTK, please refer to Ref. \cite{liu2022representation}.

Indeed, using the SLD for pure states in Eq.~\eqref{eq:sld_pure}, the QNTK can be rewritten as
\begin{equation}
     k_{i,i'}^{m,m'} = \frac{1}{4} \sum_{l}  \mathrm{Tr}[L_{\bm{x}_{i},\theta_{l}}^{S} O_{m}] \mathrm{Tr}[L_{\bm{x}_{i'},\theta_{l}}^{S} O_{m'}],
\end{equation}
where we utilize the following equality
\begin{equation}
    \frac{d\hat{y}_{i,m} }{d\theta_{l}} = \mathrm{Tr}[\partial_{\theta_{l}}\rho_{\bm{x}_{i},\bm{\theta}} O_{m}] = \frac{1}{2} \mathrm{Tr}[L_{\bm{x}_{i},\theta_{l}}^{S} O_{m}].
\end{equation}
Then the SLDQFK with $\mathcal{F}=\mathbb{I}$ can be rewritten as
\begin{equation}
\label{eq:qntk_sldqfk}
    k_{QF}^{S}(\bm{x}_{i},\bm{x}_{i'}) = \frac{1}{2^{n-2}} \sum_{m\in\{k|O_{m}\in\{I,X,Y,Z\}^{\otimes n}\}} k_{i,i'}^{m,m}.  
\end{equation}
Thus, it would be interesting to investigate the SLDQFK as a theoretical tool to analyze the performance of the QNN.
It will also be important to check the capability of the QNTK in terms of the vanishing similarity issue; since the SLDQFK suffers from the issue, the QNTK may be subjected to the same issue, which probably hinders performance analysis. 
Moreover, it is worth exploring the case where $\mathcal{F}\neq \mathbb{I}$, because the SLDQFK has the property that it is invariant under invertible differentiable transformations of parameters, i.e., $k_{QF,\bm{\theta}}^{S}(\bm{x}_{i},\bm{x}_{i'})=k_{QF,\Psi(\bm{\theta})}^{S}(\bm{x}_{i},\bm{x}_{i'})$ under the transformation $\Psi: \bm{\theta} \to \Psi(\bm{\theta})$, which might be related to the training dynamics of the QNN.

\bibliographystyle{naturemag}
\bibliography{supp_info.bib}

\providecommand{\noopsort}[1]{}\providecommand{\singleletter}[1]{#1}%
\begin{thebibliography}{10}
\expandafter\ifx\csname url\endcsname\relax
  \def\url#1{\texttt{#1}}\fi
\expandafter\ifx\csname urlprefix\endcsname\relax\def\urlprefix{URL }\fi
\providecommand{\bibinfo}[2]{#2}
\providecommand{\eprint}[2][]{\url{#2}}

\bibitem{farhi2001quantum}
\bibinfo{author}{Farhi, E.} \emph{et~al.}
\newblock \bibinfo{title}{A quantum adiabatic evolution algorithm applied to
  random instances of an np-complete problem}.
\newblock \emph{\bibinfo{journal}{Science}} \textbf{\bibinfo{volume}{292}},
  \bibinfo{pages}{472--475} (\bibinfo{year}{2001}).

\bibitem{rebentrost2014quantum}
\bibinfo{author}{Rebentrost, P.}, \bibinfo{author}{Mohseni, M.} \&
  \bibinfo{author}{Lloyd, S.}
\newblock \bibinfo{title}{Quantum support vector machine for big data
  classification}.
\newblock \emph{\bibinfo{journal}{Physical Review Letters}}
  \textbf{\bibinfo{volume}{113}}, \bibinfo{pages}{130503}
  (\bibinfo{year}{2014}).

\bibitem{biamonte2017quantum}
\bibinfo{author}{Biamonte, J.} \emph{et~al.}
\newblock \bibinfo{title}{Quantum machine learning}.
\newblock \emph{\bibinfo{journal}{Nature}} \textbf{\bibinfo{volume}{549}},
  \bibinfo{pages}{195--202} (\bibinfo{year}{2017}).

\bibitem{liu2021rigorous}
\bibinfo{author}{Liu, Y.}, \bibinfo{author}{Arunachalam, S.} \&
  \bibinfo{author}{Temme, K.}
\newblock \bibinfo{title}{A rigorous and robust quantum speed-up in supervised
  machine learning}.
\newblock \emph{\bibinfo{journal}{Nature Physics}}
  \textbf{\bibinfo{volume}{17}}, \bibinfo{pages}{1013--1017}
  (\bibinfo{year}{2021}).

\bibitem{havlivcek2019supervised}
\bibinfo{author}{Havl{\'\i}{\v{c}}ek, V.} \emph{et~al.}
\newblock \bibinfo{title}{Supervised learning with quantum-enhanced feature
  spaces}.
\newblock \emph{\bibinfo{journal}{Nature}} \textbf{\bibinfo{volume}{567}},
  \bibinfo{pages}{209--212} (\bibinfo{year}{2019}).

\bibitem{schuld2019quantum}
\bibinfo{author}{Schuld, M.} \& \bibinfo{author}{Killoran, N.}
\newblock \bibinfo{title}{Quantum machine learning in feature hilbert spaces}.
\newblock \emph{\bibinfo{journal}{Physical Review Letters}}
  \textbf{\bibinfo{volume}{122}}, \bibinfo{pages}{040504}
  (\bibinfo{year}{2019}).

\bibitem{huang2021power}
\bibinfo{author}{Huang, H.-Y.} \emph{et~al.}
\newblock \bibinfo{title}{Power of data in quantum machine learning}.
\newblock \emph{\bibinfo{journal}{Nature Communications}}
  \textbf{\bibinfo{volume}{12}}, \bibinfo{pages}{2631} (\bibinfo{year}{2021}).

\bibitem{enos2021synthetic}
\bibinfo{author}{Enos, G.~R.} \emph{et~al.}
\newblock \bibinfo{title}{Synthetic weather radar using hybrid
  quantum-classical machine learning}.
\newblock \emph{\bibinfo{journal}{arXiv preprint arXiv:2111.15605}}
  (\bibinfo{year}{2021}).

\bibitem{krunic2022quantum}
\bibinfo{author}{Krunic, Z.}, \bibinfo{author}{Fl{\"o}ther, F.~F.},
  \bibinfo{author}{Seegan, G.}, \bibinfo{author}{Earnest-Noble, N.~D.} \&
  \bibinfo{author}{Shehab, O.}
\newblock \bibinfo{title}{Quantum kernels for real-world predictions based on
  electronic health records}.
\newblock \emph{\bibinfo{journal}{IEEE Transactions on Quantum Engineering}}
  \textbf{\bibinfo{volume}{3}}, \bibinfo{pages}{1--11} (\bibinfo{year}{2022}).

\bibitem{schuld2021supervised}
\bibinfo{author}{Schuld, M.}
\newblock \bibinfo{title}{Supervised quantum machine learning models are kernel
  methods}.
\newblock \emph{\bibinfo{journal}{arXiv preprint arXiv:2101.11020}}
  (\bibinfo{year}{2021}).

\bibitem{kubler2021inductive}
\bibinfo{author}{K{\"u}bler, J.}, \bibinfo{author}{Buchholz, S.} \&
  \bibinfo{author}{Sch{\"o}lkopf, B.}
\newblock \bibinfo{title}{The inductive bias of quantum kernels}.
\newblock \emph{\bibinfo{journal}{Advances in Neural Information Processing
  Systems}} \textbf{\bibinfo{volume}{34}}, \bibinfo{pages}{12661--12673}
  (\bibinfo{year}{2021}).

\bibitem{canatar2022bandwidth}
\bibinfo{author}{Canatar, A.}, \bibinfo{author}{Peters, E.},
  \bibinfo{author}{Pehlevan, C.}, \bibinfo{author}{Wild, S.~M.} \&
  \bibinfo{author}{Shaydulin, R.}
\newblock \bibinfo{title}{Bandwidth enables generalization in quantum kernel
  models}.
\newblock \emph{\bibinfo{journal}{arXiv preprint arXiv:2206.06686}}
  (\bibinfo{year}{2022}).

\bibitem{thanasilp2022exponential}
\bibinfo{author}{Thanasilp, S.}, \bibinfo{author}{Wang, S.},
  \bibinfo{author}{Cerezo, M.} \& \bibinfo{author}{Holmes, Z.}
\newblock \bibinfo{title}{Exponential concentration and untrainability in
  quantum kernel methods}.
\newblock \emph{\bibinfo{journal}{arXiv preprint arXiv:2208.11060}}
  (\bibinfo{year}{2022}).

\bibitem{mcclean2018barren}
\bibinfo{author}{McClean, J.~R.}, \bibinfo{author}{Boixo, S.},
  \bibinfo{author}{Smelyanskiy, V.~N.}, \bibinfo{author}{Babbush, R.} \&
  \bibinfo{author}{Neven, H.}
\newblock \bibinfo{title}{Barren plateaus in quantum neural network training
  landscapes}.
\newblock \emph{\bibinfo{journal}{Nature Communications}}
  \textbf{\bibinfo{volume}{9}}, \bibinfo{pages}{4812} (\bibinfo{year}{2018}).

\bibitem{cerezo2021cost}
\bibinfo{author}{Cerezo, M.}, \bibinfo{author}{Sone, A.},
  \bibinfo{author}{Volkoff, T.}, \bibinfo{author}{Cincio, L.} \&
  \bibinfo{author}{Coles, P.~J.}
\newblock \bibinfo{title}{Cost function dependent barren plateaus in shallow
  parametrized quantum circuits}.
\newblock \emph{\bibinfo{journal}{Nature Communications}}
  \textbf{\bibinfo{volume}{12}}, \bibinfo{pages}{1791} (\bibinfo{year}{2021}).

\bibitem{khatri2019quantum}
\bibinfo{author}{Khatri, S.} \emph{et~al.}
\newblock \bibinfo{title}{Quantum-assisted quantum compiling}.
\newblock \emph{\bibinfo{journal}{Quantum}} \textbf{\bibinfo{volume}{3}},
  \bibinfo{pages}{140} (\bibinfo{year}{2019}).

\bibitem{larose2019variational}
\bibinfo{author}{LaRose, R.}, \bibinfo{author}{Tikku, A.},
  \bibinfo{author}{O’Neel-Judy, {\'E}.}, \bibinfo{author}{Cincio, L.} \&
  \bibinfo{author}{Coles, P.~J.}
\newblock \bibinfo{title}{Variational quantum state diagonalization}.
\newblock \emph{\bibinfo{journal}{npj Quantum Information}}
  \textbf{\bibinfo{volume}{5}}, \bibinfo{pages}{57} (\bibinfo{year}{2019}).

\bibitem{jaakkola1998exploiting}
\bibinfo{author}{Jaakkola, T.} \& \bibinfo{author}{Haussler, D.}
\newblock \bibinfo{title}{Exploiting generative models in discriminative
  classifiers}.
\newblock \emph{\bibinfo{journal}{Advances in Neural Information Processing
  Systems}} \textbf{\bibinfo{volume}{11}} (\bibinfo{year}{1998}).

\bibitem{tsuda2004asymptotic}
\bibinfo{author}{Tsuda, K.}, \bibinfo{author}{Akaho, S.},
  \bibinfo{author}{Kawanabe, M.} \& \bibinfo{author}{M{\"u}ller, K.-R.}
\newblock \bibinfo{title}{Asymptotic properties of the fisher kernel}.
\newblock \emph{\bibinfo{journal}{Neural Computation}}
  \textbf{\bibinfo{volume}{16}}, \bibinfo{pages}{115--137}
  (\bibinfo{year}{2004}).

\bibitem{hofmann2008kernel}
\bibinfo{author}{Hofmann, T.}, \bibinfo{author}{Sch{\"o}lkopf, B.} \&
  \bibinfo{author}{Smola, A.~J.}
\newblock \bibinfo{title}{Kernel methods in machine learning}.
\newblock \emph{\bibinfo{journal}{The Annals of Statistics}}
  \textbf{\bibinfo{volume}{36}}, \bibinfo{pages}{1171--1220}
  (\bibinfo{year}{2008}).

\bibitem{helstrom1967minimum}
\bibinfo{author}{Helstrom, C.~W.}
\newblock \bibinfo{title}{Minimum mean-squared error of estimates in quantum
  statistics}.
\newblock \emph{\bibinfo{journal}{Physics Letters A}}
  \textbf{\bibinfo{volume}{25}}, \bibinfo{pages}{101--102}
  (\bibinfo{year}{1967}).

\bibitem{fujiwara1995quantum}
\bibinfo{author}{Fujiwara, A.} \& \bibinfo{author}{Nagaoka, H.}
\newblock \bibinfo{title}{Quantum fisher metric and estimation for pure state
  models}.
\newblock \emph{\bibinfo{journal}{Physics Letters A}}
  \textbf{\bibinfo{volume}{201}}, \bibinfo{pages}{119--124}
  (\bibinfo{year}{1995}).

\bibitem{harrow2018approximate}
\bibinfo{author}{Harrow, A.} \& \bibinfo{author}{Mehraban, S.}
\newblock \bibinfo{title}{Approximate unitary $ t $-designs by short random
  quantum circuits using nearest-neighbor and long-range gates}.
\newblock \emph{\bibinfo{journal}{arXiv preprint arXiv:1809.06957}}
  (\bibinfo{year}{2018}).

\bibitem{renes2004symmetric}
\bibinfo{author}{Renes, J.~M.}, \bibinfo{author}{Blume-Kohout, R.},
  \bibinfo{author}{Scott, A.~J.} \& \bibinfo{author}{Caves, C.~M.}
\newblock \bibinfo{title}{Symmetric informationally complete quantum
  measurements}.
\newblock \emph{\bibinfo{journal}{Journal of Mathematical Physics}}
  \textbf{\bibinfo{volume}{45}}, \bibinfo{pages}{2171--2180}
  (\bibinfo{year}{2004}).

\bibitem{klappenecker2005mutually}
\bibinfo{author}{Klappenecker, A.} \& \bibinfo{author}{Rotteler, M.}
\newblock \bibinfo{title}{Mutually unbiased bases are complex projective
  2-designs}.
\newblock In \emph{\bibinfo{booktitle}{Proceedings. International Symposium on
  Information Theory, 2005. ISIT 2005.}}, \bibinfo{pages}{1740--1744}
  (\bibinfo{organization}{IEEE}, \bibinfo{year}{2005}).

\bibitem{perez2020data}
\bibinfo{author}{P{\'e}rez-Salinas, A.}, \bibinfo{author}{Cervera-Lierta, A.},
  \bibinfo{author}{Gil-Fuster, E.} \& \bibinfo{author}{Latorre, J.~I.}
\newblock \bibinfo{title}{Data re-uploading for a universal quantum
  classifier}.
\newblock \emph{\bibinfo{journal}{Quantum}} \textbf{\bibinfo{volume}{4}},
  \bibinfo{pages}{226} (\bibinfo{year}{2020}).

\bibitem{bremner2016average}
\bibinfo{author}{Bremner, M.~J.}, \bibinfo{author}{Montanaro, A.} \&
  \bibinfo{author}{Shepherd, D.~J.}
\newblock \bibinfo{title}{Average-case complexity versus approximate simulation
  of commuting quantum computations}.
\newblock \emph{\bibinfo{journal}{Physical Review Letters}}
  \textbf{\bibinfo{volume}{117}}, \bibinfo{pages}{080501}
  (\bibinfo{year}{2016}).

\bibitem{goldberg2017complexity}
\bibinfo{author}{Goldberg, L.~A.} \& \bibinfo{author}{Guo, H.}
\newblock \bibinfo{title}{The complexity of approximating complex-valued ising
  and tutte partition functions}.
\newblock \emph{\bibinfo{journal}{computational complexity}}
  \textbf{\bibinfo{volume}{26}}, \bibinfo{pages}{765--833}
  (\bibinfo{year}{2017}).

\bibitem{perronnin2010improving}
\bibinfo{author}{Perronnin, F.}, \bibinfo{author}{S{\'a}nchez, J.} \&
  \bibinfo{author}{Mensink, T.}
\newblock \bibinfo{title}{Improving the fisher kernel for large-scale image
  classification}.
\newblock In \emph{\bibinfo{booktitle}{European Conference on Computer
  Vision}}, \bibinfo{pages}{143--156} (\bibinfo{organization}{Springer},
  \bibinfo{year}{2010}).

\bibitem{sanchez2013image}
\bibinfo{author}{S{\'a}nchez, J.}, \bibinfo{author}{Perronnin, F.},
  \bibinfo{author}{Mensink, T.} \& \bibinfo{author}{Verbeek, J.}
\newblock \bibinfo{title}{Image classification with the fisher vector: Theory
  and practice}.
\newblock \emph{\bibinfo{journal}{International Journal of Computer Vision}}
  \textbf{\bibinfo{volume}{105}}, \bibinfo{pages}{222--245}
  (\bibinfo{year}{2013}).

\bibitem{Sydorov_2014_CVPR}
\bibinfo{author}{Sydorov, V.}, \bibinfo{author}{Sakurada, M.} \&
  \bibinfo{author}{Lampert, C.~H.}
\newblock \bibinfo{title}{Deep fisher kernels - end to end learning of the
  fisher kernel gmm parameters}.
\newblock In \emph{\bibinfo{booktitle}{Proceedings of the IEEE Conference on
  Computer Vision and Pattern Recognition (CVPR)}},
  \bibinfo{pages}{1402--1409.} (\bibinfo{year}{2014}).

\bibitem{gudovskiy2020deep}
\bibinfo{author}{Gudovskiy, D.}, \bibinfo{author}{Hodgkinson, A.},
  \bibinfo{author}{Yamaguchi, T.} \& \bibinfo{author}{Tsukizawa, S.}
\newblock \bibinfo{title}{Deep active learning for biased datasets via fisher
  kernel self-supervision}.
\newblock In \emph{\bibinfo{booktitle}{Proceedings of the IEEE/CVF Conference
  on Computer Vision and Pattern Recognition}}, \bibinfo{pages}{9041--9049}
  (\bibinfo{year}{2020}).

\bibitem{petz1996monotone}
\bibinfo{author}{Petz, D.}
\newblock \bibinfo{title}{Monotone metrics on matrix spaces}.
\newblock \emph{\bibinfo{journal}{Linear Algebra and its Applications}}
  \textbf{\bibinfo{volume}{244}}, \bibinfo{pages}{81--96}
  (\bibinfo{year}{1996}).

\bibitem{mitarai2018quantum}
\bibinfo{author}{Mitarai, K.}, \bibinfo{author}{Negoro, M.},
  \bibinfo{author}{Kitagawa, M.} \& \bibinfo{author}{Fujii, K.}
\newblock \bibinfo{title}{Quantum circuit learning}.
\newblock \emph{\bibinfo{journal}{Physical Review A}}
  \textbf{\bibinfo{volume}{98}}, \bibinfo{pages}{032309}
  (\bibinfo{year}{2018}).

\bibitem{schuld2019evaluating}
\bibinfo{author}{Schuld, M.}, \bibinfo{author}{Bergholm, V.},
  \bibinfo{author}{Gogolin, C.}, \bibinfo{author}{Izaac, J.} \&
  \bibinfo{author}{Killoran, N.}
\newblock \bibinfo{title}{Evaluating analytic gradients on quantum hardware}.
\newblock \emph{\bibinfo{journal}{Physical Review A}}
  \textbf{\bibinfo{volume}{99}}, \bibinfo{pages}{032331}
  (\bibinfo{year}{2019}).

\bibitem{liu2022representation}
\bibinfo{author}{Liu, J.}, \bibinfo{author}{Tacchino, F.},
  \bibinfo{author}{Glick, J.~R.}, \bibinfo{author}{Jiang, L.} \&
  \bibinfo{author}{Mezzacapo, A.}
\newblock \bibinfo{title}{Representation learning via quantum neural tangent
  kernels}.
\newblock \emph{\bibinfo{journal}{PRX Quantum}} \textbf{\bibinfo{volume}{3}},
  \bibinfo{pages}{030323} (\bibinfo{year}{2022}).

\bibitem{liu2019quantum}
\bibinfo{author}{Liu, J.}, \bibinfo{author}{Yuan, H.}, \bibinfo{author}{Lu,
  X.-M.} \& \bibinfo{author}{Wang, X.}
\newblock \bibinfo{title}{Quantum fisher information matrix and multiparameter
  estimation}.
\newblock \emph{\bibinfo{journal}{Journal of Physics A: Mathematical and
  Theoretical}} \textbf{\bibinfo{volume}{53}}, \bibinfo{pages}{023001}
  (\bibinfo{year}{2019}).

\bibitem{cirq_developers_2022_6599601}
\bibinfo{author}{Developers, C.}
\newblock \bibinfo{title}{Cirq} (\bibinfo{year}{2022}).
\newblock \urlprefix\url{https://doi.org/10.5281/zenodo.6599601}.
\newblock \bibinfo{note}{{Zenodo. doi:10.5281/zenodo.6599601. See full list of
  authors on Github: https://github.com/quantumlib/Cirq/graphs/contributors}}.

\bibitem{2020SciPy-NMeth}
\bibinfo{author}{Virtanen, P.} \emph{et~al.}
\newblock \bibinfo{title}{{{SciPy} 1.0: Fundamental Algorithms for Scientific
  Computing in Python}}.
\newblock \emph{\bibinfo{journal}{Nature Methods}}
  \textbf{\bibinfo{volume}{17}}, \bibinfo{pages}{261--272}
  (\bibinfo{year}{2020}).

\bibitem{buitinck2013api}
\bibinfo{author}{Buitinck, L.} \emph{et~al.}
\newblock \bibinfo{title}{Api design for machine learning software: experiences
  from the scikit-learn project}.
\newblock \emph{\bibinfo{journal}{arXiv preprint arXiv:1309.0238}}
  (\bibinfo{year}{2013}).

\bibitem{hashimoto2017out}
\bibinfo{author}{Hashimoto, K.}, \bibinfo{author}{Murata, K.} \&
  \bibinfo{author}{Yoshii, R.}
\newblock \bibinfo{title}{Out-of-time-order correlators in quantum mechanics}.
\newblock \emph{\bibinfo{journal}{Journal of High Energy Physics}}
  \textbf{\bibinfo{volume}{2017}}, \bibinfo{pages}{138} (\bibinfo{year}{2017}).

\bibitem{swingle2018unscrambling}
\bibinfo{author}{Swingle, B.}
\newblock \bibinfo{title}{Unscrambling the physics of out-of-time-order
  correlators}.
\newblock \emph{\bibinfo{journal}{Nature Physics}}
  \textbf{\bibinfo{volume}{14}}, \bibinfo{pages}{988--990}
  (\bibinfo{year}{2018}).

\bibitem{gorin2006dynamics}
\bibinfo{author}{Gorin, T.}, \bibinfo{author}{Prosen, T.},
  \bibinfo{author}{Seligman, T.~H.} \& \bibinfo{author}{{\v{Z}}nidari{\v{c}},
  M.}
\newblock \bibinfo{title}{Dynamics of loschmidt echoes and fidelity decay}.
\newblock \emph{\bibinfo{journal}{Physics Reports}}
  \textbf{\bibinfo{volume}{435}}, \bibinfo{pages}{33--156}
  (\bibinfo{year}{2006}).

\bibitem{goussev2012loschmidt}
\bibinfo{author}{Goussev, A.}, \bibinfo{author}{Jalabert, R.~A.},
  \bibinfo{author}{Pastawski, H.~M.} \& \bibinfo{author}{Wisniacki, D.}
\newblock \bibinfo{title}{Loschmidt echo}.
\newblock \emph{\bibinfo{journal}{arXiv preprint arXiv:1206.6348}}
  (\bibinfo{year}{2012}).

\end{thebibliography}


\providecommand{\noopsort}[1]{}\providecommand{\singleletter}[1]{#1}%
\begin{thebibliography}{10}
\expandafter\ifx\csname url\endcsname\relax
  \def\url#1{\texttt{#1}}\fi
\expandafter\ifx\csname urlprefix\endcsname\relax\def\urlprefix{URL }\fi
\providecommand{\bibinfo}[2]{#2}
\providecommand{\eprint}[2][]{\url{#2}}

\bibitem{bremner2016average}
\bibinfo{author}{Bremner, M.~J.}, \bibinfo{author}{Montanaro, A.} \&
  \bibinfo{author}{Shepherd, D.~J.}
\newblock \bibinfo{title}{Average-case complexity versus approximate simulation
  of commuting quantum computations}.
\newblock \emph{\bibinfo{journal}{Physical Review Letters}}
  \textbf{\bibinfo{volume}{117}}, \bibinfo{pages}{080501}
  (\bibinfo{year}{2016}).

\bibitem{goldberg2017complexity}
\bibinfo{author}{Goldberg, L.~A.} \& \bibinfo{author}{Guo, H.}
\newblock \bibinfo{title}{The complexity of approximating complex-valued ising
  and tutte partition functions}.
\newblock \emph{\bibinfo{journal}{computational complexity}}
  \textbf{\bibinfo{volume}{26}}, \bibinfo{pages}{765--833}
  (\bibinfo{year}{2017}).

\bibitem{harrow2018approximate}
\bibinfo{author}{Harrow, A.} \& \bibinfo{author}{Mehraban, S.}
\newblock \bibinfo{title}{Approximate unitary $ t $-designs by short random
  quantum circuits using nearest-neighbor and long-range gates}.
\newblock \emph{\bibinfo{journal}{arXiv preprint arXiv:1809.06957}}
  (\bibinfo{year}{2018}).

\bibitem{cerezo2021cost}
\bibinfo{author}{Cerezo, M.}, \bibinfo{author}{Sone, A.},
  \bibinfo{author}{Volkoff, T.}, \bibinfo{author}{Cincio, L.} \&
  \bibinfo{author}{Coles, P.~J.}
\newblock \bibinfo{title}{Cost function dependent barren plateaus in shallow
  parametrized quantum circuits}.
\newblock \emph{\bibinfo{journal}{Nature Communications}}
  \textbf{\bibinfo{volume}{12}}, \bibinfo{pages}{1791} (\bibinfo{year}{2021}).

\bibitem{jaakkola1998exploiting}
\bibinfo{author}{Jaakkola, T.} \& \bibinfo{author}{Haussler, D.}
\newblock \bibinfo{title}{Exploiting generative models in discriminative
  classifiers}.
\newblock \emph{\bibinfo{journal}{Advances in Neural Information Processing
  Systems}} \textbf{\bibinfo{volume}{11}} (\bibinfo{year}{1998}).

\bibitem{cirq_developers_2022_6599601}
\bibinfo{author}{Developers, C.}
\newblock \bibinfo{title}{Cirq} (\bibinfo{year}{2022}).
\newblock \urlprefix\url{https://doi.org/10.5281/zenodo.6599601}.
\newblock \bibinfo{note}{{Zenodo. doi:10.5281/zenodo.6599601. See full list of
  authors on Github: https://github.com/quantumlib/Cirq/graphs/contributors}}.

\bibitem{havlivcek2019supervised}
\bibinfo{author}{Havl{\'\i}{\v{c}}ek, V.} \emph{et~al.}
\newblock \bibinfo{title}{Supervised learning with quantum-enhanced feature
  spaces}.
\newblock \emph{\bibinfo{journal}{Nature}} \textbf{\bibinfo{volume}{567}},
  \bibinfo{pages}{209--212} (\bibinfo{year}{2019}).

\bibitem{perez2020data}
\bibinfo{author}{P{\'e}rez-Salinas, A.}, \bibinfo{author}{Cervera-Lierta, A.},
  \bibinfo{author}{Gil-Fuster, E.} \& \bibinfo{author}{Latorre, J.~I.}
\newblock \bibinfo{title}{Data re-uploading for a universal quantum
  classifier}.
\newblock \emph{\bibinfo{journal}{Quantum}} \textbf{\bibinfo{volume}{4}},
  \bibinfo{pages}{226} (\bibinfo{year}{2020}).

\bibitem{schuld2021effect}
\bibinfo{author}{Schuld, M.}, \bibinfo{author}{Sweke, R.} \&
  \bibinfo{author}{Meyer, J.~J.}
\newblock \bibinfo{title}{Effect of data encoding on the expressive power of
  variational quantum-machine-learning models}.
\newblock \emph{\bibinfo{journal}{Physical Review A}}
  \textbf{\bibinfo{volume}{103}}, \bibinfo{pages}{032430}
  (\bibinfo{year}{2021}).

\bibitem{schuld2021supervised}
\bibinfo{author}{Schuld, M.}
\newblock \bibinfo{title}{Supervised quantum machine learning models are kernel
  methods}.
\newblock \emph{\bibinfo{journal}{arXiv preprint arXiv:2101.11020}}
  (\bibinfo{year}{2021}).

\bibitem{2020SciPy-NMeth}
\bibinfo{author}{Virtanen, P.} \emph{et~al.}
\newblock \bibinfo{title}{{{SciPy} 1.0: Fundamental Algorithms for Scientific
  Computing in Python}}.
\newblock \emph{\bibinfo{journal}{Nature Methods}}
  \textbf{\bibinfo{volume}{17}}, \bibinfo{pages}{261--272}
  (\bibinfo{year}{2020}).

\bibitem{buitinck2013api}
\bibinfo{author}{Buitinck, L.} \emph{et~al.}
\newblock \bibinfo{title}{Api design for machine learning software: experiences
  from the scikit-learn project}.
\newblock \emph{\bibinfo{journal}{arXiv preprint arXiv:1309.0238}}
  (\bibinfo{year}{2013}).

\bibitem{huang2021power}
\bibinfo{author}{Huang, H.-Y.} \emph{et~al.}
\newblock \bibinfo{title}{Power of data in quantum machine learning}.
\newblock \emph{\bibinfo{journal}{Nature Communications}}
  \textbf{\bibinfo{volume}{12}}, \bibinfo{pages}{2631} (\bibinfo{year}{2021}).

\bibitem{liu2022representation}
\bibinfo{author}{Liu, J.}, \bibinfo{author}{Tacchino, F.},
  \bibinfo{author}{Glick, J.~R.}, \bibinfo{author}{Jiang, L.} \&
  \bibinfo{author}{Mezzacapo, A.}
\newblock \bibinfo{title}{Representation learning via quantum neural tangent
  kernels}.
\newblock \emph{\bibinfo{journal}{PRX Quantum}} \textbf{\bibinfo{volume}{3}},
  \bibinfo{pages}{030323} (\bibinfo{year}{2022}).

\bibitem{nakaji2021quantum}
\bibinfo{author}{Nakaji, K.}, \bibinfo{author}{Tezuka, H.} \&
  \bibinfo{author}{Yamamoto, N.}
\newblock \bibinfo{title}{Quantum-enhanced neural networks in the neural
  tangent kernel framework}.
\newblock \emph{\bibinfo{journal}{arXiv preprint arXiv:2109.03786}}
  (\bibinfo{year}{2021}).

\bibitem{shirai2021quantum}
\bibinfo{author}{Shirai, N.}, \bibinfo{author}{Kubo, K.},
  \bibinfo{author}{Mitarai, K.} \& \bibinfo{author}{Fujii, K.}
\newblock \bibinfo{title}{Quantum tangent kernel}.
\newblock \emph{\bibinfo{journal}{arXiv preprint arXiv:2111.02951}}
  (\bibinfo{year}{2021}).

\end{thebibliography}
\end{document}